\newcommand{\be}{\begin{equation}}
\newcommand{\ee}{\end{equation}}
\newcommand{\bea}{\begin{eqnarray}}
\newcommand{\eea}{\end{eqnarray}}
\numberwithin{equation}{section}
\newcounter{thmcounter}
\numberwithin{thmcounter}{section}
\theoremstyle{definition}
\newtheorem*{acknowledgements}{Acknowledgements}
\newtheorem{remark}[thmcounter]{Remark}
\theoremstyle{plain}
\newtheorem{corollary}[thmcounter]{Corollary}
\newtheorem{lemma}[thmcounter]{Lemma}
\newtheorem{proposition}[thmcounter]{Proposition}
\newtheorem{theorem}[thmcounter]{Theorem}
\def\C{\mathbb{C}}                          %
\def\N{\mathbb{N}}                          %
\def\T{\mathbb{T}}                          %
\def\reg{\mathrm{reg}}                      %
\def\dt {\left.\frac{d}{dt}\right|_{t=0}}   %
\def\fM{\mathfrak{M}}                       %
\def\GL{{\rm GL}(n,\C)}                     %
\def\gl{{\rm gl}(n,\C)}                     %
\def\cG{{\mathcal G}}                       %
\def\cR{{\mathcal R}}                       %
\def\half{\frac{1}{2}}                      %
\def\ad{{\mathrm{ad}}}                      %
\def\cO{{\cal O}}                           %
\def\cR{{\cal R}}                           %
\def\bR{{\mathbb R}}                        %
\def\1{{\mbox{\boldmath $1$}}}              %
\def\tr{\mathrm{tr\,}}                      %
\def\diag{\mathrm{diag}}                    %
\def\red{\mathrm{red}}                      %
\def\cL{{\cal L}}                           %
\def\cD{\mathcal{D}}                        %
\def\cG{\mathcal{G}}                        %
\def\cT{\mathcal{T}}                        %
\def\U{{\mathrm U}(n)}                      %
\def\ri{{\mathrm i}}                        %
\def\r{{\mathrm r}}                         %
\def\fA{\mathfrak{A}}                       %
\def\fH{\mathfrak{H}}                       %
\def\u{{\mathfrak u}}                       %
\def\cE{{\mathcal E}}                       %
\def\cZ{{\mathcal Z}}                       %
\def\cV{{\mathcal V}}                       %
\def\cF{{\mathcal F}}                       %
\def\cH{{\mathcal H}}                       %
\def\cU{{\mathcal U}}                       %
\def\cJ{{\mathcal J}}                       %
\def\cG{{\mathcal G}}                       %
\def\R{{\mathbb R}}                         %
\def\Q{\hat q}                              %
\begin{document}

\begin{center}
{\Large\bf
Bi-Hamiltonian structure of  Sutherland  models coupled to two  $\u(n)^*$-valued spins
from Poisson reduction}
\end{center}

\medskip
\begin{center}
L.~Feh\'er${}^{a,b}$
\\

\bigskip
${}^a$Department of Theoretical Physics, University of Szeged\\
Tisza Lajos krt 84-86, H-6720 Szeged, Hungary\\
e-mail: lfeher@physx.u-szeged.hu

\medskip
${}^b$Department of Theoretical Physics, WIGNER RCP, RMKI\\
H-1525 Budapest, P.O.B.~49, Hungary\\
\end{center}

\medskip
\begin{abstract}
We introduce a bi-Hamiltonian hierarchy on the cotangent bundle of the \emph{real} Lie group $\GL$,
 and study its Poisson reduction with respect
to the action of the product group $\U \times \U$ arising from left- and
right-multiplications.
One of the pertinent Poisson structures is the canonical one, while the other
is suitably transferred from the real Heisenberg double of $\GL$.
When taking the quotient of $T^* \GL$
we focus on the dense open subset of $\GL$ whose elements have  pairwise distinct singular values.
We develop a convenient description of the Poisson algebras of the $\U \times \U$ invariant
functions, and show that one of the Hamiltonians of the reduced bi-Hamiltonian hierarchy yields
a hyperbolic Sutherland
model coupled to two $\u(n)^*$-valued spins.
Thus we obtain a new bi-Hamiltonian interpretation  of this  model, which
represents a special case of Sutherland models coupled to two spins
obtained earlier
from reductions of cotangent bundles of reductive Lie groups equipped
with their canonical Poisson structure.
Upon setting one of the spins to zero, we
recover the bi-Hamiltonian structure of the standard hyperbolic spin Sutherland model
that was derived recently by a different method.
\end{abstract}

{\linespread{0.8}\tableofcontents}

\newpage
\section{Introduction and background}
\setcounter{equation}{0}

The investigation of integrable many-body models of Calogero--Moser--Sutherland type has been
initiated half a century ago \cite{Cal,Mo,S}, and  by now an extensive body of knowledge
has been amassed about such systems and their manifold applications in  mathematics and physics
 \cite{A,vDV,E,RBanff}.
Remarkably, we are still witnessing intense  research in this area, regarding especially
the `relativistic' deformations \cite{RS} of the
originally studied particle systems and their spin extensions \cite{FeNP,GH,KZ,LX}.
Recent enquiries on these models are concerned, for example, with their derivation by Hamiltonian
reduction \cite{AO,CF1,FFM},  relations
to moduli spaces of flat
connections \cite{AR}, harmonic analysis and special functions
 \cite{vDG,LNS,ReSt}, pole dynamics in  the matrix KP hierarchy \cite{PZ}
 and quiver varieties  \cite{CF1,Fair}.

Interestingly, several classical integrable systems admit
a bi-Hamiltonian structure, which means that the evolution equation can be encoded
by two different Hamiltonians and corresponding Poisson brackets that are
compatible in the sense that their linear combinations also satisfy the Jacobi identity.
This can often be used to generate a family of commuting Hamiltonians through recursion.
The first example of a bi-Hamiltonian structure was found by Magri \cite{M}  for the KdV equation.
Since then the bi-Hamiltonian approach has become one of the central tools in the study of soliton equations
 \cite{DeKV, MCFP}.
The bi-Hamiltonian structures of Toda lattices also received considerable
attention \cite{SurB},
but (except for the simplest rational model \cite{AAJ,Bar,FaMe,MCFP}) the construction of bi-Hamiltonian
structures for Calogero--Moser--Sutherland type models has been left largely unexplored
in the literature.  The aim of our current research is to take some steps towards filling this gap.

The present report is a continuation of our series of papers \cite{FeNon, FeLMP, FeAHP} devoted to the study of bi-Hamiltonian
aspects of spin Sutherland models. The papers \cite{FeNon,FeLMP}  dealt with bi-Hamiltonian structures of the
hyperbolic and trigonometric real spin Sutherland models, and then in \cite{FeAHP} we investigated the corresponding
complex holomorphic system.  Here, we construct
a bi-Hamiltonian structure for a more general real Sutherland type model that involves two spin variables
belonging to the dual $\u(n)^*$ of the Lie algebra of the unitary group $\U$.
The bi-Hamiltonian structure of the hyperbolic spin Sutherland model
will be recovered by setting one of the spins to zero.
The model of our interest is a special case of models investigated previously in
\cite{Feproc,FP2,KLOZ,Res3}, but in those papers bi-Hamiltonian structures were not addressed,
while this is the  issue that concerns us here.

Our work fits into the Hamiltonian reduction approach to integrable many-body models.
We now overview the chain of developments that motivated us and sketch the content
of the present report.
The first step was taken by Olshanetsky and Perelomov \cite{OP} who derived the hyperbolic Sutherland model
with Hamiltonian,
\be
\cH_{\mathrm{Suth}} =
\frac{1}{2}\sum_{i=1}^n  p_i^2  + \sum_{1\leq  i<j\leq n}\frac{\kappa^2 }{\sinh^2(q_i- q_j)},   \qquad \kappa\in \R^*,
\label{Suth}\ee
by reduction of geodesic motion on the Riemannian symmetric space of positive matrices.
One may view the symmetric space in question as
the coset space $G/\U$ for $G:= \GL$
regarded as a real Lie group.
The pioneering work  \cite{OP} focused on solving the equations of motion defined by the Hamiltonian \eqref{Suth}
via projection of suitably constrained geodesics on $G/\U$.
It is fruitful to interpret this as Hamiltonian reduction
of the phase space $T^*G$ as follows \cite{FK1}.
There is a natural action of the group $\U \times \U$
on $T^* G$ (see equations  \eqref{cR} and \eqref{cL} below), which is generated by a moment map
\be
(\Phi_L, \Phi_R): T^* G \to \u(n)^* \oplus \u(n)^*.
\label{Phi1}\ee
The Sutherland model lives on the reduced phase space
\be
\Phi_L^{-1}(0) \cap \Phi_R^{-1}(\cO_{\mathrm {KKS}})/ (\U \times \U),
\label{KKS}\ee
where $\cO_{\mathrm{KKS}}$ (named after \cite{KKS}) is a coadjoint orbit of $\U$ of dimension $2(n-1)$ (that is,
the smallest non-trivial coadjoint orbit).
One can also perform the reduction in two consecutive steps, descending first to $\Phi_L^{-1}(0)/\U$,
which is nothing but the cotangent bundle of the symmetric space.
In this construction it is useful to trivialize $T^*G$, say by right translations, and also identify $\cG:=\gl$ with its own
dual space by means of the pairing given by the real part of the trace form, see \eqref{E1}.
Thus $T^*G$ gets identified with the manifold
\be
\fM: = \{ (g,J) \mid g\in G, \, J\in \cG\},
\ee
and $\cH_{\mathrm{Suth}}$ \eqref{Suth} arises from the unreduced Hamiltonian
\be
H(g,J) = \frac{1}{2} \langle J, J \rangle.
\label{freeHam}\ee
Of course, the role of the two $\U$ factors can be exchanged in \eqref{KKS}.

A generalization \cite{FP1} of the construction
just outlined is to consider the reduction to
\be
\Phi_L^{-1}(0)/ (\U \times \U).
\label{NPBcase}\ee
After restriction to a dense open subset, this gives rise to the spin Sutherland model with
Hamiltonian
\be
\cH_{\mathrm{spin-1}} =
\frac{1}{2}\sum_{i=1}^n p_i^2  + \sum_{1\leq  i<j\leq n}\frac{|\xi_{ij}|^2 }{\sinh^2(q_i- q_j)},
\label{spinSuth1}\ee
where the `spin variable' $\xi \in \u(n)^*\simeq \u(n)$ has zero diagonal part.
The spin $\xi$ can be restricted to a coadjoint orbit by imposing a corresponding constraint on $\Phi_R$, as was done in \cite{FP1},
but this will not be advantageous for our present purpose.

The most general reduction of $\fM$ based on the group $\U \times \U$
consists in descending to the quotient
\be
\fM/(\U \times \U),
\label{quotient}\ee
which is naturally a Poisson space.
Taking the quotient of a Poisson manifold by a group action, having the key property that
the Poisson bracket closes on the invariant functions, is usually called Poisson reduction.
The smooth functions on the quotient stem from the smooth invariant functions upstairs.
In the paper \cite{FP2} the closely related method of symplectic reduction was applied, which means that we
considered
\be
\Phi_L^{-1}(\cO_L) \times \Phi_R^{-1}(\cO_R)/ (\U \times \U)
\ee
with two arbitrary coadjoint orbits of $\U$.
By general principles \cite{OR}, these quotients represent
Poisson subspaces of the (singular) Poisson space \eqref{quotient}.
They are stable under the projections of the Hamiltonian
flows of all the $\U \times \U$ invariant functions.

The fact that the quotient space \eqref{quotient}
is not a smooth manifold does not cause any serious difficulty.
Still, it is convenient to focus on the dense open subset $\fM^\reg \subset \fM$, where the singular value decomposition of $g\in G$
has the form
\be
g = A e^q B^{-1}
\quad\hbox{with}\quad
q = \diag(q_1, q_2,\dots, q_n),\quad
q_i\in \R,\, q_1 > q_2 > \dots > q_n,
\label{sing}\ee
and $A,B\in \U$.  Note that
$q$ is uniquely determined by $g$, but $A$ and $B$ are unique only up to the transformations
\be
(A,B) \mapsto (A\tau, B\tau),
\qquad
\forall
\tau \in \T^n,
\label{ABtau}\ee
with $\T^n < \U$ denoting the diagonal subgroup.
Introduce the space
\be
\fM_0^\reg := \{ (e^q, \cJ)\},
\ee
where $q$ is restricted as in \eqref{sing} and $\cJ\in \cG$.  Then consider the mapping
\be
(g,J) \mapsto (e^q,\cJ)  \quad\hbox{with}\quad \cJ:= A^{-1} J A
\ee
from $\fM^\reg$ onto $\fM_0^\reg$.
This induces a bijective mapping of the dense subset $\fM^\reg/ (\U\times \U)$
of the quotient \eqref{quotient}  onto $\fM_0^\reg/\T^n$.
The quotient by $\T^n$  expresses that
$\cJ$ matters up to the gauge transformations $\cJ \mapsto \tau \cJ \tau^{-1}$, as inherited from \eqref{ABtau}.
According to earlier results (see equation (2.25) in \cite{FP2}, and see also \cite{KLOZ,Res3}),
one can parametrize the matrix  $\cJ$ in terms of two $\u(n)^*$-valued spin variables
$\xi^l$ and $\xi^r$ as follows:
\be
\cJ_{ij}= p_i \delta_{ij}   - (1-\delta_{ij})\left( \coth(q_i -q_j) \xi^l_{ij} + \xi^r_{ij}/\sinh(q_i-q_j)\right) - \xi^l_{ij}.
\label{cJpar}\ee
The spin variables Poisson commute with the canonical pairs $q_i, p_i$.
We may decompose any $n\times n$
 complex matrix $X$ into anti-Hermitian part $X^+$ and Hermitian part $X^-$, and then
we have $\xi^l = - \cJ^+$ and $\xi^r = (e^{-q} \cJ e^q)^+$.  The two spins are coupled by
the constraint that the diagonal part of $(\xi^l + \xi^r)$ vanishes, and the $\T^n$ gauge transformations
act on them by simultaneous conjugations.
The reduction of the Hamiltonian \eqref{freeHam} takes the form
\be
\cH_{\mathrm{spin-2}}
=\frac{1}{2}\sum_{i=1}^n (p_i^2  - \vert \xi^l_{ii}\vert^2) + \sum_{ 1\leq i<j\leq n}\Biggl(\frac{|\xi_{ij}^l|^2 + |\xi^r_{ij}|^2
- 2 \Re(\xi^r_{ij} \xi^l_{ji})}{\sinh^2(q_i- q_j)}
+\frac{\Re(\xi^r_{ij} \xi^l_{ji})}{\sinh^2((q_i- q_j)/2)}\Biggr).
\label{spinSuth2}\ee
Upon setting  $\xi^l=0$, $\cJ$ simplifies to the Hermitian Lax matrix of the spin Sutherland model
and
 $\cH_{\mathrm{spin-2}}$ turns into the spin Sutherland Hamiltonian \eqref{spinSuth1}.

The essential new contribution of the present paper is this. We first exhibit an interesting
bi-Hamiltonian structure on the cotangent bundle $T^*G$. This consists of a quadratic Poisson
bracket (in terms of natural coordinate functions) and the canonical one.
The quadratic Poisson bracket comes from the theory of Poisson--Lie groups \cite{STS}, and its
 suitable Lie derivative
is the canonical Poisson bracket. The Hamiltonians given by the real and imaginary
parts of $\tr(J^k)$, for $k\in \N$,  generate a hierarchy of bi-Hamiltonian evolution equations on $\fM$.
The Poisson reduction with respect to $\U \times \U$ gives rise to a bi-Hamiltonian structure on the quotient space \eqref{quotient},
 which
we will realize in the form of compatible Poisson brackets on the space of $\T^n$ invariant functions
$C^\infty(\fM_0^\reg)^{\T^n}$.
These functions represent the ring of smooth functions on the regular part of the quotient \eqref{quotient}.
In practice, this means that we present the reduced Poisson brackets in terms of $\T^n$ invariant functions
of the pair $(q,\cJ)$.
By constraining the anti-Hermitian part of $\cJ$ to zero, we reach a joint Poisson subspace
of the compatible Poisson brackets, and on this subspace
we recover the bi-Hamiltonian structure of the spin Sutherland model \eqref{spinSuth1} that was
derived originally by a rather ad hoc method \cite{FeNon}.
In the general case,  we shall explain
that in terms of the variables $(q,p,\xi^l, \xi^r)$ the first reduced Poisson structure
takes the form that underlies the interpretation
of the Hamiltonian \eqref{spinSuth2} as a Sutherland model coupled to two spins.

The article is organized as follows. In Section 2, we describe the compatible Poisson brackets
and a family of bi-Hamiltonian evolution equations on $\fM$.
In Section 3, we develop the reduction of these structures.
Section 4 is devoted to the spin Sutherland interpretation of the reduced system.
Our conclusions are given in Section 5. There are also three appendices.
Appendix A provides an explanation of the origin of
the second Poisson bracket on $\fM$, Appendix B expounds some technical details,
and  Appendix C contains the proof of the key property of the
change of variables behind the spin Sutherland interpretation.

Finally, let us state explicitly what we consider as our new results and
specify their location in the text.
First, we think that the unreduced bi-Hamiltonian hierarchy that we present in Section 2
has not been described before. We studied the analogous holomorphic system in \cite{FeAHP}.
Our principal results are contained in Section 3: Theorems \ref{Prop:3.4} and  \ref{Prop:3.5} give the reduced Poisson brackets
in terms of $\T^n$ invariant functions on $\fM_0^\reg$, Propositions \ref{Th:3.6} and \ref{Prop:3.8} characterize
the reduced bi-Hamiltonian hierarchy of evolution equations.
In the general case,  the interpretation as a spin Sutherland model with two spins is explained
by Lemma \ref{lem:4.6} and Proposition \ref{Prop:4.8}, which were motivated by previous results \cite{FP2}.
Theorem \ref{Th:4.3}  represents an important new result, since it clarifies the origin
of the bi-Hamiltonian structure of the hyperbolic spin Sutherland model having
the `main Hamiltonian' \eqref{spinSuth1}.

\begin{remark} \label{Rem1}
To be more precise about the related literature \cite{Feproc,FP2,KLOZ,Res3}, we note that
\cite{Feproc} and \cite{Res3} considered generalized spin Sutherland models
on symplectic leaves of quotient spaces of the form
 $K_1\backslash T^*G/K_2$, where
$K_1$ and $K_2$ are the fixed point subgroups  of two involutions of a semisimple or reductive Lie group $G$,
while in \cite{FP2,KLOZ} $K_1=K_2$ was taken to be the maximal compact subgroup.
Our construction of the bi-Hamiltonian master system on $T^*G$ is specific to $G=\GL$, since it uses
that $\GL$ is an open subset of its Lie algebra, which contains the unit matrix.
One may also apply a similar construction to $\mathrm{GL}(n,\R)$, but no other generalizations
are apparent to us.
\end{remark}

\section{Master system on the
cotangent bundle of $\GL$}
\setcounter{equation}{0}

We first fix some notations.
Let us consider  $G:= \GL$, regarded as a \emph{real} Lie group, and
endow its Lie algebra $\cG:= \gl$ with the trace form
\be
\langle X, Y \rangle := \Re\tr(XY),
\quad \forall X, Y \in \cG.
\label{E1}\ee
Any $X\in \cG$ admits the unique decomposition
\be
X = X_> + X_0 + X_<
\label{E2}\ee
into strictly upper triangular part $X_>$, diagonal part $X_0$, and strictly lower triangular part $X_<$.
Correspondingly, one obtains the vector space direct sum
\be
\cG = \cG_{>} + \cG_0 + \cG_<.
\label{cGdec1}\ee
This decomposition gives rise to the standard solution of the modified classical Yang--Baxter equation
on $\cG$,  $r\in \mathrm{End}(\cG)$. Specifically, we define
\be
r(X):=   \frac{1}{2} (X_> - X_<), \qquad \forall X\in \cG,
\label{E3}\ee
and also introduce
\be
r_\pm := r \pm \frac{1}{2} \mathrm{id},
\label{rpm}\ee
where $\mathrm{id}(X):= X$.  Note that
\be
\langle r(X), Y \rangle = - \langle X, r(Y) \rangle, \qquad \forall X,Y\in \cG.
\label{asym}\ee

Let  $\cG^+$ denote the subalgebra of anti-Hermitian matrices and $\cG^-$ the subspace of Hermitian matrices,
\be
\cG^+ :=\u(n) = \{ X \in \cG \mid X^* = - X\},
\qquad
\cG^-:= \ri \u(n).
\label{Gpm}\ee
They yield another direct sum decomposition
\be
\cG = \cG^+ + \cG^-,
\label{ortog}\ee
where the two subspaces are orthogonal to each other with respect to the bilinear form \eqref{E1}.
Thus we may decompose any $X\in \cG$ as $X= X^+ + X^-$ with $X^\pm \in \cG^\pm$.
We shall use that
\be
[\cG^+, \cG^+] \subset \cG^+,\quad
[\cG^-, \cG^-] \subset \cG^+,
\quad
[\cG^+, \cG^-] \subset \cG^-,
\label{comrel}\ee
and, in consequence of \eqref{E3},
\be
r(\cG^+) \subset \cG^-
\quad\hbox{and}\quad r(\cG^-)\subset \cG^+.
\label{rmap}\ee

Our immediate aim is to present two Poisson structures on  the smooth, real manifold
\be
\fM:= G \times \cG = \{ (g,J)\mid g\in G, \, J\in \cG\}.
\label{E4}\ee
We view $\fM$  as a model of the cotangent bundle of $T^*G$.
Here, the cotangent bundle is trivialized by means of right translations,
and $\cG^*$ is identified with $\cG$ using the trace form.

For any $F \in C^\infty(\fM,\bR)$, introduce the $\cG$-valued derivatives $\nabla_1 F$, $\nabla_1' F$
 and $d_2 F$ by the defining relations
\be
\langle \nabla_1 F(g,J), X\rangle = \dt F(e^{tX} g, J),
\quad
\langle \nabla_1' F(g,J), X\rangle = \dt F(ge^{tX}, J),
\ee
and
\be
\langle d_2 F(g,J), X \rangle = \dt F(g, J + t X),
\label{E5}\ee
where $t$ is a real parameter and $X\in \cG$ is arbitrary.
In addition, it will be convenient to define the $\cG$-valued
functions $\nabla_2 F$ and $\nabla_2' F$ by
\be
\nabla_2 F(g,J) := J (d_2 F(g,J)),
\qquad
\nabla_2' F(g,J) := (d_2F(g,J)) J.
\label{E7}\ee
It is worth remarking that
$\nabla_1 F(g,J) = g \left(\nabla'_1 F(g,J) \right)g^{-1}$.

\medskip\noindent
\begin{theorem} \label{Th:2.1}
For $F, H \in C^\infty(\fM,\R)$, the following formulae define two
Poisson brackets:
\be
\{ F,H\}_1(g,J) =   \langle \nabla_1 F, d_2 H\rangle - \langle \nabla_1 H, d_2 F \rangle +
\langle J, [d_2 F, d_2 H]\rangle,
\label{PB1}\ee
and
\bea
\{ F, H\}_2(g,J)  &=&
 \langle r \nabla_1 F, \nabla_1 H \rangle - \langle r \nabla'_1 F, \nabla'_1 H \rangle
\label{PB2} \\
&&
 +\langle \nabla_2 F - \nabla_2' F, r_+\nabla_2' H  - r_- \nabla_2 H
   \rangle \nonumber \\
 &&
 +\langle \nabla_1 F,  r_+ \nabla_2' H  - r_-\nabla_2 H  \rangle
- \langle \nabla_1 H,  r_+ \nabla_2' F  - r_- \nabla_2 F \rangle,
\nonumber\eea
where the derivatives are evaluated at $(g,J)$, and we put $rX$ for $r(X)$.
\end{theorem}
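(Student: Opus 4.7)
My plan is to verify the Poisson axioms for \eqref{PB1} and \eqref{PB2} separately, treating the Jacobi identity as the main nontrivial issue. Bilinearity over $\R$ and the Leibniz rule in each argument are immediate, because $d_2F$, $\nabla_1 F$, $\nabla_1' F$, and hence $\nabla_2 F,\nabla_2' F$, are first-order derivations in $F$. Skew-symmetry of $\{,\}_1$ is manifest. For $\{,\}_2$ it follows from a short rearrangement: the two pure-$r$ terms on the first line are antisymmetric by \eqref{asym}; the last two terms trade sign manifestly; and the middle-line contribution is antisymmetric once one combines \eqref{asym} with the cyclic-trace identity $\langle\nabla_2F,\nabla_2H\rangle=\langle\nabla_2'F,\nabla_2'H\rangle$ arising from $\tr(J\,d_2F\,J\,d_2H)=\tr(d_2F\,J\,d_2H\,J)$, together with $r_+-r_-=\mathrm{id}$.

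For the Jacobi identity, since the Jacobiator is a tensorial expression, it suffices to test on an algebra that separates points, namely that generated by $F_\varphi(g,J):=\varphi(g)$ and the linear fiber functions $F_X(g,J):=\langle J,X\rangle$ for $X\in\cG$. On these the derivatives are explicit: $d_2F_X=X$, $d_2F_\varphi=0$, $\nabla_1 F_\varphi$ is the left-invariant differential of $\varphi$, and $\nabla_2F_X=JX$, $\nabla_2'F_X=XJ$. For $\{,\}_1$ the bracket on these generators reproduces the canonical Poisson bracket on $T^*G$ in the right-trivialization $T^*G\simeq G\times\cG$ induced by the trace form (the first two terms handle the cotangent pairing, while the commutator term handles the Kostant--Kirillov--Souriau piece), so Jacobi is classical, flowing from the Jacobi identity of $[\cdot,\cdot]_\cG$ and associativity of the group action.

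The genuine obstacle is the Jacobi identity for $\{,\}_2$. The conceptual route, signposted in the Introduction and to be developed in Appendix A, is to realize $\{,\}_2$ as the push-forward along a real-analytic diffeomorphism of the Semenov--Tian--Shansky Poisson bracket on the real Heisenberg double of $\GL$ built from its Iwasawa decomposition; since that bracket is a known Poisson structure \cite{STS}, Jacobi transfers for free. A direct verification is also possible, using (i) the modified classical Yang--Baxter equation $[rX,rY]-r([rX,Y]+[X,rY])=-\tfrac14[X,Y]$ for $r$ from \eqref{E3}, (ii) the antisymmetry \eqref{asym}, and (iii) the incidence relations \eqref{rmap} and \eqref{comrel} tying $r$ to the Hermitian/anti-Hermitian splitting. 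On the test generators above, the Jacobiator of \eqref{PB2} expands into a cyclic sum of cubic expressions in $r,r_+,r_-$ applied to $\nabla_1$-type and $\nabla_2^{}$-type derivatives; the mCYBE is precisely the identity that groups these cubic terms into triples which cancel. Managing the interplay of $r_+$ and $r_-$ and matching every cubic term against its cyclic partner is the bookkeeping bottleneck, and it is the step I would expect to be most laborious.
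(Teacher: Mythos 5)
Your proposal is correct and follows essentially the same route as the paper: the first bracket is identified as the canonical cotangent-bundle structure in right trivialization, and the Jacobi identity for the second is transferred from the Semenov--Tian-Shansky bracket on the real Heisenberg double via a change of variables (the paper's Appendix A), with the direct mCYBE verification mentioned only as an alternative. The one small inaccuracy is that the factorization used is the Gauss-type splitting $G\times G = G^\delta\, G^*$ built from the triangular decomposition \eqref{cGdec1}, not the Iwasawa decomposition, but this does not affect the argument.
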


\begin{proof}
The first Poisson bracket is easily recognized to be the canonical one carried by
the cotangent bundle $T^*G$, presented by means of right trivialization.
The second one is the real analogue of the holomorphic Poisson bracket of $\fM$, then
regarded as a complex manifold, which has been constructed in \cite{FeAHP}.
 The formula \eqref{PB2}  has the same form as the second Poisson bracket in \cite{FeAHP},
 except that now we use smooth real functions and the real part of the trace for the pairing
$\langle\ ,\ \rangle$.  We give a terse outline of the origin of this Poisson bracket
in Appendix A. Of course, its Jacobi identity can be verified directly as well.
\end{proof}

Let us recall \cite{M,MCFP} that two Poisson brackets are called \emph{compatible} if their arbitrary linear
combination is also a Poisson bracket. A \emph{bi-Hamiltonian manifold} is a manifold
equipped with a pair of compatible Poisson brackets.
The non-trivial condition is posed by the Jacobi identity for the linear combination
of the Poisson brackets. In several examples it is guaranteed by the following mechanism.
Suppose that $\cD$ is a derivation of the functions, and $\{\ ,\ \}$ is a Poisson bracket.
Then the formula
\be
\{F,H\}^\cD := \cD[\{F,H\}] -  \{ \cD[F], H\} - \{ F, \cD[H]\},
\ee
where $\cD[F]$ is the derivative of the function $F$, provides an anti-symmetric bi-derivation.
This is called the Lie derivative bracket (since its underlying bi-vector is
the Lie derivative of the Poisson tensor of $\{\ ,\ \}$ along the corresponding vector field).
It is well-known \cite{Sm} that if $\{\ ,\ \}^\cD$ satisfies the Jacobi identity, then
this holds also for arbitrary linear combinations of $\{\ ,\ \}$ and $\{\ ,\ \}^\cD$.

Now consider the vector field on $\fM$ whose flow through $(g,J)$ is
\be
(g, J + t \1_n),
\qquad
t \in \bR,
\ee
where $\1_n$ denotes the $n\times n$ unit matrix,
and let $\cD$ be the corresponding derivation of $C^\infty(\fM,\R)$.
As coordinate functions on $\fM$, we may employ  `evaluation functions'
that on $(g,J)$ return the real and imaginary parts of the matrix elements of $g$,
\be
g_{\alpha \beta}^\r := \Re g_{\alpha \beta},
\quad
 g_{\alpha \beta}^\ri := \Im g_{\alpha \beta},
\label{gcomp}\ee
and the components $J_k:= \langle T_k, J\rangle$ with respect to an arbitrary
 basis $T_k$ ($k=1,\dots, 2 n^2$) of the real vector space $\cG$.
Their derivatives are
\be
\cD[g_{\alpha \beta}] =0,
\qquad
\cD[J_k] = \langle T_k, \1_n\rangle,
\ee
where we extended the derivation to complex functions.

\medskip\noindent
\begin{proposition} \label{Prop:2.2}
The first Poisson bracket \eqref{PB1} on $\fM$ is the Lie derivative
of the second Poisson bracket \eqref{PB2}  with
respect to the derivation $\cD$, i.e., we have
\be
\{F,H\}_1 = \{F,H\}_2^\cD,
\qquad
\forall F,H \in C^\infty(\fM,\R).
\label{claim1}\ee
Consequently, the two Poisson brackets of Theorem 2.1 are compatible.
\end{proposition}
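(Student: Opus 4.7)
The plan is a direct verification: compute $\cD[\{F,H\}_2]$ by expanding the formula \eqref{PB2} via the Leibniz rule and chain rule, show that the resulting terms split cleanly into an ``implicit'' piece equal to $\{\cD F,H\}_2+\{F,\cD H\}_2$ and an ``explicit'' piece equal to $\{F,H\}_1$. The Jacobi identity for $\{\ ,\ \}_1+\lambda\{\ ,\ \}_2$ then follows from the general criterion of Smirnov \cite{Sm} quoted just before the proposition.

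First I would verify the ``commutation of derivatives'' identities
\[
\tfrac{d}{dt}\Big|_{t=0}\nabla_1 F(g,J+t\1_n)=\nabla_1(\cD F)(g,J),\quad
\tfrac{d}{dt}\Big|_{t=0}\nabla_1'F(g,J+t\1_n)=\nabla_1'(\cD F)(g,J),
\]
\[
\tfrac{d}{dt}\Big|_{t=0}d_2 F(g,J+t\1_n)=d_2(\cD F)(g,J),
\]
by interchanging $\frac{d}{dt}|_{t=0}$ with the defining limits of $\nabla_1F$, $\nabla_1'F$ and $d_2F$; these are just equalities of mixed partial derivatives. For the combined quantities one then obtains
\[
\tfrac{d}{dt}\Big|_{t=0}\nabla_2 F(g,J+t\1_n)=d_2 F(g,J)+\nabla_2(\cD F)(g,J),\qquad
\tfrac{d}{dt}\Big|_{t=0}\nabla_2'F(g,J+t\1_n)=d_2 F(g,J)+\nabla_2'(\cD F)(g,J),
\]
using $\nabla_2F=J\,d_2F$ and $\nabla_2'F=(d_2F)J$; the ``extra'' $d_2 F$ term is exactly the contribution of the factor $J$ in front of (or behind) $d_2F$. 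Crucially, the combination $\nabla_2F-\nabla_2'F=[J,d_2F]$ has \emph{no} extra term, since $\1_n$ is central in $\cG$.

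Next I would substitute these identities into $\cD[\{F,H\}_2]$ and organize the output. The ``implicit'' contributions — those in which each derivative $\nabla_1F,\nabla_1'F,d_2F,\nabla_2F,\nabla_2'F$ (and similarly for $H$) is replaced by the corresponding derivative of $\cD F$ (or $\cD H$) — reassemble by definition into $\{\cD F,H\}_2+\{F,\cD H\}_2$, so they cancel in the Lie-derivative formula. What remains are the ``extras'' arising from the explicit $J$ in $\nabla_2$ and $\nabla_2'$. Inspecting the five terms of \eqref{PB2}:
\begin{itemize}
\item the $\langle r\nabla_1F,\nabla_1H\rangle-\langle r\nabla_1'F,\nabla_1'H\rangle$ terms contribute nothing, as no explicit $J$ appears;
\item the term $\langle\nabla_2F-\nabla_2'F,\,r_+\nabla_2'H-r_-\nabla_2H\rangle$ produces only the extra $\langle[J,d_2F],(r_+-r_-)d_2H\rangle=\langle J,[d_2F,d_2H]\rangle$, using $r_+-r_-=\mathrm{id}$ and the ad-invariance of $\langle\ ,\ \rangle$;
\item the term $\langle\nabla_1F,r_+\nabla_2'H-r_-\nabla_2H\rangle$ yields $\langle\nabla_1F,d_2H\rangle$;
\item the term $-\langle\nabla_1H,r_+\nabla_2'F-r_-\nabla_2F\rangle$ yields $-\langle\nabla_1H,d_2F\rangle$.
\end{itemize}
Adding these three extras reproduces the right-hand side of \eqref{PB1}, establishing \eqref{claim1}. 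Finally, since $\{\ ,\ \}_1$ is the canonical Poisson bracket on $T^*G$, it satisfies the Jacobi identity; combined with \eqref{claim1} and the Smirnov criterion, this yields compatibility of $\{\ ,\ \}_1$ and $\{\ ,\ \}_2$.

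The only delicate point I anticipate is keeping the bookkeeping of the ten ``implicit'' contributions straight and checking that they really do assemble into $\{\cD F,H\}_2+\{F,\cD H\}_2$ without cross-terms; the use of $r_+-r_-=\mathrm{id}$ in the extras, and the fact that $\1_n$ is central (so that the commutator $[J,d_2F]$ picks up no extra), are the algebraic identities driving the whole calculation.
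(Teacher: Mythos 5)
Your proposal is correct, and it is organized differently from the paper's proof. The paper exploits the fact that both $\{\ ,\ \}_1$ and $\{\ ,\ \}_2^\cD$ are anti-symmetric bi-derivations and therefore checks \eqref{claim1} only on the coordinate functions $g_{\alpha\beta}^\r$, $g_{\alpha\beta}^\ri$, $J_k$, computing $\{J_k,J_l\}_2$ and the matrix $\{g,J_k\}_2$ explicitly and then applying $\cD$ case by case. You instead work with arbitrary smooth $F,H$, commute $\cD$ with the defining mixed partials to get $\cD$-covariance of $\nabla_1,\nabla_1',d_2$, and isolate the ``extra'' terms coming from the explicit $J$ in $\nabla_2 F=J\,d_2F$ and $\nabla_2'F=(d_2F)J$; the identities $r_+-r_-=\mathrm{id}$, the centrality of $\1_n$ (so that $\nabla_2F-\nabla_2'F=[J,d_2F]$ produces no extra), and the ad-invariance of the trace form then yield exactly the three terms of \eqref{PB1}. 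I checked the bookkeeping you flag as delicate: each term of \eqref{PB2} is bilinear in one $F$-derivative and one $H$-derivative, so the ``implicit'' contributions do reassemble into $\{\cD F,H\}_2+\{F,\cD H\}_2$ with no cross-terms, and the signs of the extras match \eqref{PB1}. Your route buys uniformity (no case analysis, and it makes transparent \emph{why} the Lie derivative of the quadratic bracket is the linear one — the explicit $J$'s are the only source), at the cost of having to justify the interchange of derivatives; the paper's route is shorter to write down because the coordinate brackets are already needed elsewhere. The concluding compatibility argument via Smirnov's criterion is the same in both.
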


\begin{proof}
Since both $\{\ ,\ \}_1$ and $\{\ ,\ \}_2^\cD$ are bi-derivations, it is enough to verify the equality
\eqref{claim1} for coordinate functions. Hence, we inspect the cases for which $F$ and $H$ are taken
from the functions $J_k$ and $g_{\alpha\beta}^\ri$, $g_{\alpha \beta}^\r$.
Since $\cD[g_{\alpha \beta}]=0$ and the second Poisson bracket closes on the functions that depend only on $g$,
we see that the equality \eqref{claim1} holds if both $F$ and $H$ depend only on $g$, for in these cases
both sides of \eqref{claim1} vanish.
Next, we note that
\be
\{J_k, J_l\}_2 = \langle [J,T_k], r[T_l,J] + \frac{1}{2} (T_l J + J T_l) \rangle,
\label{Jkl}\ee
and this leads to
\be
\{J_k, J_l\}_2^\cD = \langle J, [T_k, T_l]\rangle,
\ee
which equals $\{J_k, J_l\}_1$. To continue, we introduce the $n\times n$ matrix
$\{ g, J_k\}_a$, for  $a=1,2$, by defining  its $\alpha \beta$ entry to be
\be
\{ g_{\alpha \beta}^r, J_k\}_a + \ri \{ g_{\alpha \beta}^\ri, J_k\}_a,
\qquad 1\leq \alpha, \beta\leq n.
\ee
The formulae of Theorem \ref{Th:2.1} give
\be
\{ g, J_k\}_1 = T_k g
\quad\hbox{and}\quad
\{ g, J_k\}_2 =\left( r [T_k, J] + \frac{1}{2} (J T_k + T_k J) \right) g
\ee
From here it is easily checked that
\be
\{ g_{\alpha \beta}^r, J_k\}_1 + \ri \{ g_{\alpha \beta}^\ri, J_k\}_1 =
\{ g_{\alpha \beta}^r, J_k\}_2^\cD + \ri \{ g_{\alpha \beta}^\ri, J_k\}_2^\cD,
\ee
whereby the proof is complete.
\end{proof}

The evolution equation of a \emph{bi-Hamiltonian system} can be written in Hamiltonian form with
respect to two compatible, linearly independent, Poisson brackets, and respective Hamiltonians.
We now present interesting bi-Hamiltonian systems on $\fM$.

\medskip\noindent
\begin{proposition} \label{Prop:2.3}
For any $k\in \N$, define the  Hamiltonians $H_k$ and $\tilde H_k$ on $\fM$ by
\be
H_k(g,J) := \frac{1}{k} \Re \tr(J^k),
\qquad
\tilde H_k(g,J) := \frac{1}{k} \Im \tr( J^k).
\label{HamsonfM}\ee
All these Hamiltonians are in involution, and they define bi-Hamiltonian
systems according to the relations, and corresponding flows, listed as follows:
\be
\{\,\cdot\,, H_k\}_2 = \{\,\cdot\,, H_{k+1}\}_1,
\qquad
(g(t), J(t)) = \left(\exp\left(J(0)^k t\right) g(0), J(0)\right)
\label{bihrel1}\ee
and
\be
\{\,\cdot\,, \tilde H_k\}_2 = \{\,\cdot\,, \tilde H_{k+1}\}_1,
\qquad
(g(t), J(t)) = \left(\exp\left(-\ri J(0)^k t\right) g(0), J(0)\right).
\label{bihrel2}\ee
  \end{proposition}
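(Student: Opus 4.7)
The plan is to reduce everything to explicit formulas for the derivatives of $H_k$ and $\tilde H_k$, plug them into \eqref{PB1} and \eqref{PB2}, and exploit the commutativity $[J,J^k]=0$ together with the identity $r_+-r_-=\mathrm{id}$ from \eqref{rpm}.

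First I would compute the derivatives. Since $H_k$ and $\tilde H_k$ depend only on $J$, one has $\nabla_1 H_k=\nabla_1' H_k=0$ and likewise for $\tilde H_k$. Differentiating $t\mapsto\tr((J+tX)^k)/k$ at $t=0$ gives $\Re\tr(J^{k-1}X)=\langle J^{k-1},X\rangle$, so $d_2 H_k=J^{k-1}$; using $\Im w=\Re(-\ri w)$ we similarly obtain $d_2\tilde H_k=-\ri J^{k-1}$. By \eqref{E7} this yields
\[
\nabla_2 H_k=\nabla_2' H_k=J^k,\qquad \nabla_2 \tilde H_k=\nabla_2'\tilde H_k=-\ri J^k.
\]

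Next I would plug these into \eqref{PB2}. Because $\nabla_1 H_k=\nabla_1' H_k=0$, the first, second and last rows on the right of \eqref{PB2} drop out entirely. In the middle row the combination $r_+\nabla_2'H_k-r_-\nabla_2 H_k$ collapses to $(r_+-r_-)J^k=J^k$ thanks to \eqref{rpm}, and the same happens in the final surviving term. Using $\nabla_2 F-\nabla_2' F=[J,d_2 F]$, what remains is
\[
\{F,H_k\}_2=\langle [J,d_2 F],J^k\rangle+\langle \nabla_1 F,J^k\rangle.
\]
The first summand vanishes by cyclicity of the trace, since $[J,J^k]=0$ implies $\tr([J,d_2 F]J^k)=\tr(d_2 F[J^k,J])=0$. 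Hence $\{F,H_k\}_2=\langle\nabla_1 F,J^k\rangle$. The very same bookkeeping on \eqref{PB1} with $d_2 H_{k+1}=J^k$ gives $\langle J,[d_2 F,J^k]\rangle=0$ and so $\{F,H_{k+1}\}_1=\langle\nabla_1 F,J^k\rangle$, proving the recursion \eqref{bihrel1}. Repeating the calculation with an extra overall factor $-\ri$ proves \eqref{bihrel2}.

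For the involution, note that $\{H_j,H_k\}_1=\langle J,[J^{j-1},J^{k-1}]\rangle=0$ because powers of $J$ commute, and analogously for all mixed pairs involving $\tilde H$; then $\{H_j,H_k\}_2=\{H_j,H_{k+1}\}_1=0$ by the recursion just established, and the other combinations follow identically. Finally, to read off the flows I would use the expression $\{F,H_{k+1}\}_1=\langle\nabla_1 F,J^k\rangle$: for $F=J_l=\langle T_l,J\rangle$ one has $\nabla_1 F=0$, so $\dot J=0$; applying it to $g$ via the matrix rule $\{g,H_{k+1}\}_1=J^k g$ (the $J$-dependent generalisation of the formula $\{g,J_k\}_1=T_k g$ employed in the proof of Proposition \ref{Prop:2.2}) gives $\dot g=J^k g$, whence $g(t)=\exp(J(0)^k t)g(0)$. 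The parallel computation with $-\ri J^k$ yields the flow stated in \eqref{bihrel2}. There is no real obstacle; the only point requiring attention is the bookkeeping of the five terms in \eqref{PB2}, together with the repeated use of $[J,J^k]=0$ and $r_+-r_-=\mathrm{id}$ to ensure that every contribution involving the $r$-matrix drops out.
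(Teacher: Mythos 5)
Your proposal is correct and follows essentially the same route as the paper: compute $d_2 H_k = J^{k-1}$, use $\nabla_2 H_k = \nabla_2' H_k = J^k$ together with $r_+ - r_- = \mathrm{id}$ to collapse both brackets to $\langle \nabla_1 F, J^k\rangle$, and then read off the flows and involutivity. The only cosmetic difference is that you verify involutivity by the direct computation $\langle J,[J^{j-1},J^{k-1}]\rangle=0$, whereas the paper deduces it from the constancy of $J$ along the flows; both are immediate.
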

\begin{proof}
We can calculate that
\be
d_2 H_k = J^{k-1},\qquad  \quad d_2 \tilde H_k = - \ri J^{k-1}.
\label{d2Hams}\ee
Therefore $(\nabla'_2 H_k - \nabla_2 H_k)$ and  $(\nabla'_2 \tilde H_k - \nabla_2 \tilde H_k)$ both vanish, and we obtain
\be
 r_+ \nabla'_2 H_k - r_- \nabla_2 H_k = d_2 H_{k+1} = J^k,
\qquad
 r_+ \nabla'_2 \tilde H_k - r_- \nabla_2 \tilde H_k = d_2 \tilde H_{k+1} = -\ri J^k.
\ee
Taking an arbitrary $F\in C^\infty(\fM, \R)$, the formulae of Theorem \ref{Th:2.1} then yield
\be
\{ F, H_k\}_2(g,J) = \{ F, H_{k+1}\}_1(g,J) = \langle \nabla_1 F(g,J), J^k \rangle,
\ee
and
\be
\{ F, \tilde H_k\}_2(g,J) = \{ F, \tilde H_{k+1}\}_1(g,J) = \langle \nabla_1 F(g,J), -\ri J^k \rangle.
\ee
This means that the Poisson bracket relations \eqref{bihrel1} and \eqref{bihrel2} hold.
It follows also immediately that the corresponding Hamiltonian flows through the initial value $(g(0),J(0))$
have the simple form as claimed. In particular, $J$ and all its functions are constant along these flows,
which implies the involutivity of the underlying Hamiltonians.
\end{proof}

\medskip\noindent
\begin{remark} \label{Rem:2.4}
Let us define the $\cG$-valued function $\tilde J$ on $\fM$ by
\be
\tilde J(g,J):=- g^{-1} J g,
\label{tJdef}\ee
and notice that $\tilde J$ is constant along all the bi-Hamiltonian flows of Proposition \ref{Prop:2.3}.
The functions of $J$ and $\tilde J$ are related by
the $2n$ functional relations
\be
\Re \tr(J^k) = (-1)^k \Re\tr(\tilde J^k),
\quad
\Im \tr(J^k) = (-1)^k \Im\tr(\tilde J^k),
\qquad\forall k=1,\dots,n.
\label{funrels}\ee
Thus the functional dimension of the ring of functions, $\fA$,  depending on $J$ and $\tilde J$
equals $4 n^2 - 2n$.   This ring is closed under both Poisson brackets, since in addition to \eqref{Jkl} one
has $\{ J_k, \tilde J_l\}_i=0$ ($i=1,2$)  
and
\be
\{\tilde J_k, \tilde J_l\}_2 = -\langle [\tilde J,T_k], r[T_l,\tilde J] + \frac{1}{2} (T_l \tilde J + \tilde J T_l) \rangle.
\label{tJkl}\ee
The elements of $\fA$ Poisson commute with the elements of
the ring of functions, $\fH \subset \fA$, generated
 by the Hamiltonians \eqref{HamsonfM}.
  It is clear that the functional dimension of $\fH$ is $2n$.
Let us recall \cite{Nekh} that a \emph{ degenerate integrable system} on a symplectic manifold of dimension $N$
is given
by an Abelian Poisson algebra of functional dimension $N_1< N/2$, whose element admit $N-N_1$ functionally
independent joint
constants of motion\footnote{Degenerate integrable systems are also called
\emph{superintegrable}
or \emph{integrable in the non-commutative sense} \cite{LMV,MF,Res2}. Liouville integrability corresponds to the limiting case $N_1=N/2$. }.
We can directly apply this definition to the first Poisson structure, which is symplectic.
Therefore we see that the Hamiltonians \eqref{HamsonfM} define a degenerate integrable system,
 at least with respect the first Poisson bracket.
The origin of the second Poisson bracket from Poisson--Lie groups (see Appendix \ref{app:A}) shows
that it is also symplectic on a dense open subset of $\fM$, and therefore we may say that
the Hamiltonians of the `bi-Hamiltonian hierarchy' of Proposition \ref{Prop:2.3}
 form a degenerate integrable system
with respect to both Poisson brackets.
\end{remark}

 \section{Poisson reduction with respect to $\U\times \U$}
 \setcounter{equation}{0}

We are going to reduce with respect to an action of $\U\times \U$, which is engendered by two
commuting actions of $\U$, called right-handed and left-handed actions.
Regarding the first Poisson structure, it will be obvious that the Poisson bracket closes on the
invariant functions. For the second Poisson bracket, this needs a proof, and we start by
providing it.

We define the right-handed action of $\U$ on $\fM$ by associating with any $\eta \in \U$ the diffeomorphism
$\cR_\eta$ of $\fM$  that acts according to
\be
\cR_\eta: (g, J) \mapsto (g \eta^{-1}, J).
\label{cR}\ee
The left-handed action is given by $\cL_\eta$,
\be
\cL_\eta: (g,J) \mapsto (\eta g, \eta J \eta^{-1}).
\label{cL}\ee
These are commuting $\U$ actions on $\fM$, for we have
\be
\cR_{\eta_1} \circ \cR_{\eta_2}= \cR_{\eta_1 \eta_2},
\qquad
\cL_{\eta_1} \circ \cL_{\eta_2}= \cL_{\eta_1 \eta_2},
\qquad
\cL_{\eta_1} \circ \cR_{\eta_2}=\cR_{\eta_2} \circ \cL_{\eta_1},
\quad \forall \eta_1, \eta_2\in \U.
\ee
We shall consider the respective sets of invariant functions,
\be
C^\infty(\fM)^{\U}_\cR:= \{ F\in C^\infty(\fM,\R)\mid F \circ \cR_\eta =F, \quad \forall \eta\in \U\},
\label{inv1}\ee
\be
C^\infty(\fM)^{\U}_\cL:= \{ F\in C^\infty(\fM,\R)\mid F \circ \cL_\eta =F, \quad \forall \eta\in \U\},
\label{inv2}\ee
and
\be
C^\infty(\fM)^{\U \times \U} = C^\infty(\fM)^{\U}_\cR \cap C^\infty(\fM)^{\U}_\cL.
\label{inv3}\ee

\medskip\noindent
\begin{lemma} \label{Lem:3.1}
For $F,H\in C^\infty(\fM)^{\U}_\cR$,  $\{ F, H\}_2$  \eqref{PB2}  takes the form
\bea
\{ F, H\}_2  &=&
 \langle r \nabla_1 F, \nabla_1 H \rangle
 +\langle \nabla_2 F - \nabla_2' F, r_+\nabla_2' H  - r_- \nabla_2 H
   \rangle \nonumber \\
 &&
 +\langle \nabla_1 F,  r_+ \nabla_2' H  - r_-\nabla_2 H  \rangle
- \langle \nabla_1 H,  r_+ \nabla_2' F  - r_- \nabla_2 F \rangle\,,
\label{PB2R}\eea
and it belongs to $C^\infty(\fM)^{\U}_\cR$.
\end{lemma}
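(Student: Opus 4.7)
The plan is to split the lemma into two parts: (i) establish the simplified formula \eqref{PB2R} by showing that the term $-\langle r\nabla_1' F, \nabla_1' H\rangle$ in \eqref{PB2} vanishes on right-$\U$-invariant functions; (ii) verify directly from \eqref{PB2R} that the result is itself right-$\U$-invariant.

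For (i), the key observation is that right-invariance of $F$ forces $\nabla_1' F(g,J) \in \cG^-$ pointwise. Indeed, differentiating $F(g e^{tX}, J) = F(g,J)$ at $t=0$ for arbitrary $X \in \u(n) = \cG^+$ yields $\langle \nabla_1' F(g,J), X\rangle = 0$ on all of $\cG^+$; since the decomposition \eqref{ortog} is orthogonal with respect to $\langle\cdot,\cdot\rangle$, this places $\nabla_1' F \in \cG^-$, and similarly for $\nabla_1' H$. Invoking \eqref{rmap}, $r\nabla_1' F \in \cG^+$, and a second use of orthogonality gives $\langle r\nabla_1' F, \nabla_1' H\rangle = 0$, which produces \eqref{PB2R}.

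For (ii), I would track how the $\cG$-valued building blocks appearing in \eqref{PB2R} transform under $\cR_\eta$. Since $\cR_\eta$ acts only on $g$ and does so by right multiplication, it commutes with the left multiplication by $e^{tX}$ used to define $\nabla_1 F$ and with the translation $J \mapsto J + tX$ used to define $d_2 F$. Hence $\nabla_1 F$ and $d_2 F$ are themselves right-invariant $\cG$-valued functions on $\fM$ whenever $F$ is, and by \eqref{E7} so are $\nabla_2 F$ and $\nabla_2' F$ (since $J$ is fixed by $\cR_\eta$). Each term of \eqref{PB2R} is a pointwise pairing of such right-invariant $\cG$-valued functions, and hence is right-invariant.

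No substantive obstacle is anticipated: the argument rests only on the orthogonal splitting $\cG = \cG^+ \oplus \cG^-$, the mapping property \eqref{rmap}, and the evident compatibility of $\cR_\eta$ with the derivative operations. The one point that deserves care is that $\nabla_1'$ is the \emph{right}-translation derivative, which is exactly why right-invariance of $F$ places $\nabla_1' F$ in $\cG^-$ and triggers the cancellation.
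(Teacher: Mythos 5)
Your proposal is correct and follows essentially the same route as the paper: right-invariance forces $\nabla_1' F \in \cG^-$, then \eqref{rmap} together with the orthogonality of \eqref{ortog} kills the term $\langle r\nabla_1'F, \nabla_1'H\rangle$, and closure follows from the $\cR_\eta$-invariance of $\nabla_1 F$, $\nabla_2 F$, $\nabla_2' F$. No gaps.
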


\begin{proof}
If $F\in C^\infty(\fM)^{\U}_\cR$, then
\be
F(g e^{tX}, J) = F(g,J),
\qquad
\forall X \in \cG^+,
\ee
and therefore
\be
\nabla'_1 F(g,J) \in \cG^-.
\label{rightinv}\ee
On account of  \eqref{rmap} and the orthogonality of the subspaces \eqref{ortog},
we see that for right-invariant functions $F,H$
\be
\langle r\nabla_1' F, \nabla_1' H \rangle = 0.
\label{310}\ee
Thus we obtain \eqref{PB2R} from \eqref{PB2}.
The closure of the Poisson bracket on $C^\infty(\fM)^{\U}_\cR$ is then a consequence of the relations
\be
(\nabla_1 F) \circ \cR_\eta = \nabla_1 F,
\quad
( \nabla_2 F) \circ \cR_\eta = \nabla_2 F,
\quad
( \nabla_2' F) \circ \cR_\eta = \nabla_2' F,
\ee
which follow directly from the definitions.
\end{proof}

\medskip\noindent
\begin{lemma} \label{Lem:3.2}
For $F,H\in C^\infty(\fM)^{\U}_\cL$,  $\{ F, H\}_2$  \eqref{PB2}
 takes the form
\bea
\{ F, H\}_2  &=&
\langle \nabla_1' F, r \nabla_1' H \rangle
 +\half \langle \nabla_2 F ,\nabla_2' H \rangle
 -\half \langle \nabla_2 H ,\nabla_2' F \rangle\nonumber \\
 &&
 +\half \langle \nabla_1 F,  \nabla_2' H  +\nabla_2 H  \rangle
- \half \langle \nabla_1 H,   \nabla_2' F + \nabla_2 F \rangle,
\label{PB2L}\eea
and it belongs to $C^\infty(\fM)^{\U}_\cL$.
\end{lemma}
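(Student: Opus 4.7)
The plan is to mirror the proof of Lemma \ref{Lem:3.1}, now using the constraint implied by left-invariance in place of $\nabla_1' F \in \cG^-$. Differentiating the condition $F(e^{tX} g, e^{tX} J e^{-tX}) = F(g, J)$ at $t=0$ for $X \in \cG^+$ and using the identity $\langle d_2 F, [X, J]\rangle = \langle [J, d_2 F], X\rangle$ together with $[J, d_2 F] = \nabla_2 F - \nabla_2' F$, the orthogonal decomposition \eqref{ortog} yields
\[\nabla_1 F + \nabla_2 F - \nabla_2' F \in \cG^-.\]
Setting $\alpha_F := \nabla_2' F - \nabla_2 F$, I would then use the decomposition $\nabla_1 F = \alpha_F + \beta_F$ with $\beta_F \in \cG^-$, and analogously for $H$.

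The heart of the argument is to substitute $\nabla_1 F = \alpha_F + \beta_F$ into \eqref{PB2} and check that every $r$-dependent cross-term collapses except $\langle \nabla_1' F, r\nabla_1' H\rangle$. Using the identity $r_+\nabla_2' H - r_-\nabla_2 H = r\alpha_H + \half(\nabla_2' H + \nabla_2 H)$ which follows from \eqref{rpm}, each of the three $r_\pm$-containing terms in \eqref{PB2} splits into an $r$-piece and a `half'-piece. The contribution $\langle r\beta_F, \beta_H\rangle$ that appears when expanding $\langle r\nabla_1 F, \nabla_1 H\rangle$ vanishes, because $r\beta_F \in \cG^+$ by \eqref{rmap} is orthogonal to $\beta_H \in \cG^-$. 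For the remaining $r$-cross-terms of types $\langle r\alpha_F, \alpha_H\rangle$, $\langle r\alpha_F, \beta_H\rangle$, and $\langle r\beta_F, \alpha_H\rangle$, repeated use of the antisymmetry \eqref{asym} and the symmetry of $\langle \cdot, \cdot\rangle$ shows that each type cancels across its contributions from the various $r$-pieces, leaving only $-\langle r\nabla_1' F, \nabla_1' H\rangle = \langle \nabla_1' F, r\nabla_1' H\rangle$.

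The accumulated `half'-pieces total $\half\langle \nabla_1 F - \alpha_F, \nabla_2' H + \nabla_2 H\rangle - \half\langle \nabla_1 H, \nabla_2' F + \nabla_2 F\rangle$, so matching \eqref{PB2L} reduces to the identity
\[-\tfrac{1}{2}\langle \alpha_F, \nabla_2' H + \nabla_2 H\rangle = \tfrac{1}{2}\langle \nabla_2 F, \nabla_2' H\rangle - \tfrac{1}{2}\langle \nabla_2 H, \nabla_2' F\rangle.\]
Expanding $\alpha_F = \nabla_2' F - \nabla_2 F$, the `diagonal' contributions $\langle \nabla_2 F, \nabla_2 H\rangle$ and $\langle \nabla_2' F, \nabla_2' H\rangle$ cancel by cyclicity of the trace (both equal $\Re\tr(J(d_2 F) J(d_2 H))$), and the `off-diagonal' ones match via the symmetry $\langle \nabla_2' F, \nabla_2 H\rangle = \langle \nabla_2 H, \nabla_2' F\rangle$ of the pairing.

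For the closure claim, I would verify the equivariance relations $\nabla_1 F \circ \cL_\eta = \eta(\nabla_1 F)\eta^{-1}$, $\nabla_2 F \circ \cL_\eta = \eta(\nabla_2 F)\eta^{-1}$, $\nabla_2' F \circ \cL_\eta = \eta(\nabla_2' F)\eta^{-1}$, and $\nabla_1' F \circ \cL_\eta = \nabla_1' F$, each an immediate consequence of the left-invariance of $F$. Since in \eqref{PB2L} the map $r$ is paired only with the $\cL$-invariant gradients $\nabla_1' F$ and $\nabla_1' H$, while every remaining term pairs two gradients that transform by a common $\eta$-conjugation, $\mathrm{Ad}$-invariance of $\langle \cdot, \cdot\rangle$ gives $\{F, H\}_2 \circ \cL_\eta = \{F, H\}_2$. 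The main obstacle will be the sign bookkeeping in the pairwise cancellations of $r$-cross-terms in the central step; once the decomposition $\nabla_1 F = \alpha_F + \beta_F$ is written down, the rest is a mechanical use of \eqref{asym}, \eqref{rmap}, the orthogonality of $\cG^\pm$, and cyclicity of the trace.
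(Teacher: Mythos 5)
Your proposal is correct and follows essentially the same route as the paper: the left-invariance identity $(\nabla_1 F)^+=(\nabla_2'F-\nabla_2F)^+$ (your $\beta_F\in\cG^-$), the cancellation of all $r$-dependent terms via the antisymmetry \eqref{asym} together with $r(\cG^-)\subset\cG^+\perp\cG^-$, and the elementary identity $\langle\nabla_2F-\nabla_2'F,\nabla_2'H+\nabla_2H\rangle=\langle\nabla_2F,\nabla_2'H\rangle-\langle\nabla_2'F,\nabla_2H\rangle$ for the $\tfrac12\,\mathrm{id}$ pieces. The only difference is notational (your $\alpha_F,\beta_F$ bookkeeping versus the paper's explicit $\cG^\pm$ components), and your closure argument matches the paper's equivariance relations.
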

\begin{proof}
The function $F$ (and similarly $H$) enjoys the invariance property
\be
F(e^{tX} g, e^{tX} J e^{-tX})  = F(g,J),
\qquad
\forall X\in \cG^+.
\ee
Taking the derivative at $t=0$ we obtain
\be
(\nabla_1 F)^+ = (\nabla_2'F - \nabla_2 F)^+.
\label{id1}\ee
Here and below, we apply the decomposition $Y= Y^+ + Y^-$ for any $Y\in \cG$, as defined
by the direct sum \eqref{ortog}.
We now look at the terms that appear in the formula \eqref{PB2}.
By using that $r$ maps $\cG^\pm$ into $\cG^\mp$ \eqref{rmap} and taking \eqref{id1} into account,  we derive the identity
\be
\langle r \nabla_1 F, \nabla_1 H\rangle =
\langle r (\nabla_1 H)^-,   (\nabla_2 F - \nabla_2' F)^+ \rangle
-\langle r (\nabla_1 F)^-,   (\nabla_2 H - \nabla_2' H)^+ \rangle.
\ee
In a similar manner, we obtain
\bea
&&\langle \nabla_1 F,  r \nabla_2' H  - r\nabla_2 H  \rangle
- \langle \nabla_1 H,  r \nabla_2' F  - r \nabla_2 F \rangle = \nonumber \\
&&\quad \langle r (\nabla_1 F)^-,   (\nabla_2 H - \nabla_2' H)^+ \rangle
-\langle r (\nabla_1 H)^-,   (\nabla_2 F - \nabla_2' F)^+ \rangle \\
&&\quad + \langle r (\nabla_2' F - \nabla_2 F)^+,   (\nabla_2 H - \nabla_2' H)^- \rangle
- \langle r (\nabla_2' H - \nabla_2 H)^+,   (\nabla_2 F - \nabla_2' F)^- \rangle.\nonumber
\eea
Next, we can write
\bea
&&\langle \nabla_2 F - \nabla_2' F, r \nabla_2' H  - r \nabla_2 H \rangle =
\\
&&\quad
\langle r (\nabla_2' F - \nabla_2 F)^+,   (\nabla_2' H - \nabla_2 H)^- \rangle
- \langle r (\nabla_2' H - \nabla_2 H)^+,   (\nabla_2' F - \nabla_2 F)^- \rangle. \nonumber
\eea
In these derivations we used the anti-symmetry of $r$ \eqref{asym}.
Observe that the terms given by the last three equations cancel altogether.
In other words, the terms that contain $r$ in $r_\pm$ \eqref{rpm} all cancel from \eqref{PB2}.
Noticing the elementary identity
\be
\langle \nabla_2 F - \nabla_2' F, \nabla_2' H + \nabla_2 H \rangle
= \langle \nabla_2 F, \nabla_2' H \rangle - \langle \nabla_2' F, \nabla_2 H \rangle,
\ee
we then obtain \eqref{PB2L} from \eqref{PB2}. The closure of the Poisson bracket
on the left $\U$ invariant functions follows from \eqref{PB2L}  and the transformation rules of the derivatives,
\bea
&&(\nabla_1 F) \circ \cL_\eta =\eta (\nabla_1 F) \eta^{-1},
\quad
(\nabla_1' F) \circ \cL_\eta =\nabla_1' F,
\nonumber\\
&& (\nabla_2 F) \circ \cL_\eta = \eta (\nabla_2 F) \eta^{-1},
\quad
(\nabla_2' F) \circ \cL_\eta = \eta (\nabla_2' F) \eta^{-1},
\eea
which hold for all $\eta \in \U$.
\end{proof}

We have seen that the $\U$ invariant functions form Poisson subalgebras with respect to the
second Poisson bracket \eqref{PB2}. Of course, the same is true regarding the first Poisson bracket \eqref{PB1}.
This can be seen directly from \eqref{PB1}, and also follows from well known  results about  cotangent
lifts of actions on a configuration space.

According to the singular value decomposition, also called Cartan (KAK) decomposition, every element $g\in \GL$ can be decomposed as
\be
g = \eta_L e^q \eta_R^{-1}, \quad
\eta_L, \eta_R \in \U, \quad q = \diag(q_1,q_2,\dots, q_n),
\quad
q_i\in \R,\,\, q_1 \geq q_2\geq \cdots \geq q_n.
\label{KAK}\ee
Here, $q$ is uniquely determined by $g$, and if
 $q_1 > q_2 >\cdots > q_n$ then $\eta_L$ and $\eta_R$ are unique up to the freedom
\be
(\eta_L, \eta_R) \mapsto (\eta_L \tau, \eta_R \tau), \quad \forall \tau \in \T^n,
\ee
with the maximal torus $\T^n < \U$.
In this paper, we call $\GL^\reg$ the dense open subset of $\GL$ whose elements obey the strict inequalities,
and we also introduce
\be
\fM^\reg := \GL^\reg \times \cG.
\label{fMreg}\ee
It is easily seen that every invariant function $F\in C^\infty(\fM)^{\U \times \U}$ can be recovered
from its restriction to the following submanifold of $\fM$:
\be
\fM_0^\reg:= \{ (e^q, J)\mid J\in \cG,\,\, q=\diag(q_1,q_2, \ldots , q_n),\, q_1 > q_2 >\cdots > q_n\}.
\label{fMreg0}\ee
With the tautological embedding $\iota: \fM_0^\reg \to \fM$, the restriction of $F$ reads
\be
f= F \circ \iota.
\ee
Obviously, $f$ is invariant under the $\T^n$ action on $\fM_0^\reg$ given by the
diffeomorphisms $A_\tau$,
\be
A_\tau: (e^q, J) \mapsto (e^q, \tau J \tau^{-1}),
\quad \forall \tau \in\T^n.
\label{Jtrans}\ee
That is to say, $f= F \circ \iota$ belongs to $C^\infty(\fM_0^\reg)^{\T^n}$.
Next, we introduce the reduced ring of functions,
\be
C^\infty(\fM)_\red:= \{ f\in C^\infty(\fM_0^\reg)^{\T^n}  \mid f = F \circ \iota,
\quad F \in C^\infty(\fM)^{\U\times \U}\}.
\label{redfun}\ee
This space of functions naturally inherits a pair of compatible Poisson brackets,
called \emph{reduced Poisson brackets}, which are
defined as follows:
\be
\{ F \circ \iota , H\circ \iota\}^\red_i := \{ F, H\}_i \circ \iota,
\quad\forall F,H \in C^\infty(\fM)^{\U \times \U}, \quad i=1,2.
\label{defredPB}\ee
The reduced Poisson brackets are well-defined, since the original Poisson brackets close on the
$\U \times \U$ invariant functions.
They contain all information about
 the
Poisson algebra carried by the (singular) quotient space
\be
\fM_\red:= \fM/ (\U\times \U),
\label{fMreddef}\ee
whose space of smooth functions is $C^\infty(\fM)^{\U \times \U}$.

Now our goal is to establish  intrinsic formulae of the reduced Poisson brackets, which contain only
derivatives with respect to the variables on $\fM_0^\reg$.
To this end, we need to express the derivatives of $F$ at $(e^q, J)\in \fM_0^\reg$ in terms of the derivatives
of the restricted function $f= F\circ \iota$.
We shall use the decompositions
\be
\cG^\pm  = \cG^\pm_0 + \cG^\pm_\perp,
\label{cGperp}\ee
where $\cG^\pm_0 \subset \cG^\pm$ contain the respective diagonal matrices, and $\cG^\pm_\perp$
contain the off-diagonal ones. Thus we can write $X^\pm = X_0^\pm + X^\pm_\perp$ for any $X\in \cG^\pm$.
Any function $f\in C^\infty(\fM_0^\reg)$ has the $\cG_0^-$-valued derivative $\nabla_1 f$ and the
$\cG$-valued derivative $d_2f$, determined by
\be
\langle  \nabla_1 f(e^q,J), X_0\rangle =  \dt f(e^{t X_0} e^q, J) =\dt f(e^{q+ t X_0}, J),
\quad
\forall X_0 \in \cG^-_0,
\label{der0}\ee
\be
\langle d_2 f(e^q,J),X\rangle = \dt f(e^q, J + t X),
\quad
\forall X\in \cG,
\ee
and we define $\nabla_2 f$ and $\nabla_2' f$ similarly to \eqref{E7}.
The definition \eqref{der0} makes sense since for small enough $t$ the components of $(q+ tX_0)$ satisfy
the same ordering condition as those of $q$.
Plainly, we have
\be
d_2 F(e^q,J) = d_2 f(e^q,J),
\ee
and as a result of the $\T^n$ invariance
\be
[J, d_2 f(e^q,J)]^+_0 =0.
\ee
Let us introduce the linear operator  $R(q)\in \mathrm{End}(\cG)$ by letting it act on an arbitrary matrix $X\in\cG$  according to
\be
(R(q) X)_{ii}=0,
\qquad
(R(q) X)_{ij} = X_{ij} \coth (q_i- q_j),
\qquad
 1\leq i\neq j \leq n.
 \label{Rq}\ee
 Notice that $R(q)$ maps $\cG^\pm_\perp$ onto $\cG^\mp_\perp$ \eqref{cGperp}, respectively, in an invertible manner, and it satisfies
 \be
 \langle R(q) X, Y \rangle = - \langle X, R(q) Y \rangle,
 \quad
 \forall X,Y \in \cG.
 \label{Rasym}\ee
 Due to the following result, $R(q)$  will appear in the expressions of the reduced Poisson brackets.

  \medskip\noindent
\begin{lemma} \label{Lem:3.3}
If $F\in C^\infty(\fM)^{\U \times \U}$ and $f=F\circ \iota$, then we have
\be
\nabla_1 F(e^q, J) = [d_2 f(e^q,J), J]^+ + \nabla_1 f(e^q,J) + R(q) [d_2 f(e^q,J), J]^+,
\label{nabF}\ee
where the superscripts refer to the decomposition \eqref{Gpm}.
\end{lemma}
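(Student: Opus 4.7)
My plan is to decompose $\nabla_1 F(e^q,J)$ along the refinement
$\cG = \cG_0^+ \oplus \cG_\perp^+ \oplus \cG_0^- \oplus \cG_\perp^-$
provided by \eqref{Gpm} and \eqref{cGperp}, and to identify each of the four blocks separately: two of them will come from the left $\U$-invariance of $F$, one from the right $\U$-invariance transported through the identity $\nabla_1 F(g,J) = g (\nabla_1' F(g,J)) g^{-1}$, and the last one from the fact that $f = F \circ \iota$ encodes $F$ along the directions tangent to $\fM_0^\reg$. The three nonzero blocks will reassemble into the three summands of \eqref{nabF}.

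For the anti-Hermitian part I would differentiate the left-invariance identity $F(e^{tX} g, e^{tX} J e^{-tX}) = F(g, J)$ at $t = 0$ for $X \in \cG^+$. Using the cyclicity rearrangement $\langle d_2 F, [X, J]\rangle = \langle [J, d_2 F], X\rangle$, this yields $\langle \nabla_1 F, X\rangle = \langle [d_2 F, J], X\rangle$ for every $X \in \cG^+$, and the orthogonality of $\cG^\pm$ then forces $(\nabla_1 F)^+ = [d_2 F, J]^+$. Because $d_2 F$ restricted to $\fM_0^\reg$ agrees with $d_2 f$, this produces the first summand $[d_2 f, J]^+$ of \eqref{nabF}. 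The $\T^n$-invariance of $f$ implies $[d_2 f, J]_0^+ = 0$, as already recorded in the text, so $(\nabla_1 F)^+$ actually lies in $\cG_\perp^+$.

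Next I would use the right $\U$-invariance, which by the argument given in Lemma \ref{Lem:3.1} places $\nabla_1' F(e^q, J)$ in $\cG^-$, and translate it to $\nabla_1 F$ via the $\mathrm{Ad}$-relation at $g=e^q$. Setting $Y := \nabla_1 F(e^q, J)$ and $Z := \nabla_1' F(e^q, J)$, one has $Y_{ij} = e^{q_i - q_j} Z_{ij}$ with $Z$ Hermitian, so extracting Hermitian and anti-Hermitian parts gives $Y_{ij}^+ = Z_{ij} \sinh(q_i - q_j)$ and $Y_{ij}^- = Z_{ij} \cosh(q_i - q_j)$ for $i \neq j$, while $Y_{ii}^+ = 0$ and $Y_{ii}^- = Z_{ii}$. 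Eliminating $Z$ off the diagonal produces $(\nabla_1 F)_\perp^- = R(q)(\nabla_1 F)^+$, which by the previous step equals $R(q)[d_2 f, J]^+$, the third summand of \eqref{nabF}.

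Finally, to capture $(\nabla_1 F)_0^-$ I would pair $\nabla_1 F(e^q, J)$ with an arbitrary $X_0 \in \cG_0^-$: since $X_0$ commutes with $q$, the curve $e^{tX_0} e^q = e^{q + t X_0}$ stays in $\fM_0^\reg$ for small $t$, so $\langle \nabla_1 F(e^q, J), X_0\rangle = \dt f(e^{q+tX_0}, J) = \langle \nabla_1 f(e^q, J), X_0\rangle$. The orthogonality of $\cG_0^-$ to the remaining summands then gives $(\nabla_1 F)_0^- = \nabla_1 f$, the middle summand of \eqref{nabF}. I do not foresee any serious obstacle; the main point requiring attention is the bookkeeping that ensures the three contributions fall into mutually orthogonal subspaces and sum unambiguously to $\nabla_1 F$, which rests on the vanishing of both diagonal parts $[d_2 f, J]_0^+ = 0$ and $(R(q)[d_2 f, J]^+)_0 = 0$ — the former from $\T^n$-invariance, the latter directly from the definition \eqref{Rq} of $R(q)$.
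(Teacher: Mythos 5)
Your proof is correct and follows essentially the same route as the paper's: the $\cG^+$ component from differentiating the left $\U$-invariance (the paper phrases this as $(\nabla_1 F)^+=(\nabla_2'F-\nabla_2F)^+=[d_2f,J]^+$), the $\cG^-_\perp$ component from $(\nabla_1'F)^+=0$ combined with the relation $\nabla_1'F=e^{-q}(\nabla_1F)e^q$ (you write this entry-wise, the paper via $\cosh\ad_q$ and $\sinh\ad_q$, which is the same computation), and the $\cG^-_0$ component from pairing with diagonal Hermitian directions tangent to $\fM_0^\reg$. The only differences are cosmetic.
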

\begin{proof}
We know from \eqref{rightinv} that $(\nabla_1'F)^+ =0$ holds, because $F\in C^\infty(\fM)^{\U}_\cR$.
On the other hand, $(\nabla_1F)^+ = (\nabla_2' F - \nabla_2 F)^+$ holds \eqref{id1},
because $F\in C^\infty(\fM)^{\U}_\cL$.
By using that
\be
\nabla_1' F(e^q, J) = e^{-q} (\nabla_1 F(e^q,J)) e^q,
\ee
we can write
\be
0 = (\nabla_1'F )^+ = (\cosh \ad_q)( (\nabla_1F)^+) - (\sinh \ad_q) ((\nabla_1 F)^-),
\ee
at any $(e^q,J)$, where $\ad_q(X):= [q,X]$.
This implies that
\be
(\nabla_1 F)^-_\perp = R(q) (\nabla_1 F)^+.
\ee
Finally, since
\be
\dt F(e^{t X_0} e^q, J) = \dt f( e^{tX_0} e^q, J),
\quad
\forall X_0\in \cG^-_0,
\ee
we get
\be
(\nabla_1 F(e^q, J))^-_0 = \nabla_1 f(e^q, J).
\ee
The proof is completed by noting that
\be
\nabla_2' F - \nabla_2 F = \nabla_2' f - \nabla_2 f  = [d_2 f, J] \quad \hbox{at any}\quad (e^q,J)\in \fM_0^\red.
\ee
\end{proof}

 \medskip\noindent
\begin{theorem} \label{Prop:3.4}
For $f,h \in C^\infty(\fM)_\red$ \eqref{redfun}, the first reduced Poisson bracket \eqref{defredPB} is given by
\bea
\{ f,h\}_1^\red(e^q,J) &=&  \langle \nabla_1 f, (d_2h)^-_0\rangle - \langle \nabla_1 h, (d_2 f)^-_0 \rangle \label{redPB1}\\
&+&\langle R(q) [d_2 f, J]^+, (d_2 h)^- \rangle   - \langle R(q) [d_2 h, J]^+, (d_2 f)^- \rangle
\nonumber \\
 &+& \langle J^+, [ (d_2 f)^-, (d_2 h)^-] - [ (d_2 f)^+, (d_2 h)^+] \rangle,\nonumber
\eea
where the derivatives are taken at $(e^q,J)$ and we use $R(q)$ \eqref{Rq}.
If $h$ is the restriction of any of the Hamiltonians $H_k$ or $\tilde H_k$ \eqref{HamsonfM}, then
this formula can be written as
\be
\{f,h\}_1^\red(e^q,J) = \langle \nabla_1 f, (d_2 h)^-_0 \rangle +
\langle d_2 f,
[R(q) (d_2 h)^- - (d_2 h)^+, J] \rangle.
\label{redham1}\ee
 \end{theorem}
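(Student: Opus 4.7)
My plan is to start from the canonical bracket \eqref{PB1}
\be
\{F,H\}_1(e^q,J) = \langle \nabla_1 F, d_2 H\rangle - \langle \nabla_1 H, d_2 F\rangle + \langle J, [d_2 F, d_2 H]\rangle,
\ee
and substitute the expression for $\nabla_1 F(e^q,J)$ furnished by Lemma \ref{Lem:3.3}, together with the equality $d_2 F = d_2 f$. The three summands of $\nabla_1 F$ appearing in Lemma \ref{Lem:3.3} lie in three mutually orthogonal subspaces of $\cG$, namely $\cG^+_\perp$, $\cG^-_0$ and $\cG^-_\perp$, so the pairing with $d_2 h$ collapses onto the matching components $(d_2 h)^+$, $(d_2 h)^-_0$ and $(d_2 h)^-$ respectively. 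This immediately produces the $\langle \nabla_1 f, (d_2 h)^-_0\rangle$ and $\langle R(q)[d_2 f, J]^+, (d_2 h)^-\rangle$ summands that appear in \eqref{redPB1}, plus an auxiliary contribution $\langle [d_2 f, J]^+, (d_2 h)^+\rangle$; the symmetric treatment of $-\langle \nabla_1 H, d_2 F\rangle$ yields the swapped partners.

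The only nontrivial step is then to verify the identity
\be
\langle [d_2 f, J]^+, (d_2 h)^+\rangle - \langle [d_2 h, J]^+, (d_2 f)^+\rangle + \langle J, [d_2 f, d_2 h]\rangle = \langle J^+, [(d_2 f)^-, (d_2 h)^-] - [(d_2 f)^+, (d_2 h)^+]\rangle.
\ee
I would decompose every commutator according to $\cG = \cG^+ + \cG^-$, writing $[d_2 f, J]^+ = [(d_2 f)^+, J^+] + [(d_2 f)^-, J^-]$ and similarly for $[d_2 h, J]^+$, using the commutation pattern \eqref{comrel}, and then apply the $\ad$-invariance identity $\langle [A,B], C\rangle = \langle A, [B,C]\rangle$ to push $J^\pm$ into the rightmost slot. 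A careful bookkeeping shows that all contributions proportional to $\langle J^-, \cdot\rangle$ cancel and that the $\langle J^+, \cdot\rangle$ contributions combine to the claimed right-hand side; this is the chief technical obstacle, but it is a mechanical rearrangement of a sum of eight terms.

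For the second assertion \eqref{redham1}, the key observation is that for $h$ equal to the restriction of any $H_k$ or $\tilde H_k$ one has $\nabla_1 h = 0$ and $d_2 h \in \{J^{k-1}, -\ri J^{k-1}\}$, so in particular $[d_2 h, J] = 0$. This wipes out both the $\langle \nabla_1 h, (d_2 f)^-_0\rangle$ and the $\langle R(q)[d_2 h, J]^+, (d_2 f)^-\rangle$ summands of \eqref{redPB1}. To recognize what survives as the right-hand side of \eqref{redham1} I would prove two small auxiliary identities. First, using the antisymmetry $\langle R(q) X, Y\rangle = -\langle X, R(q) Y\rangle$ together with the orthogonality of $\cG^+$ and $\cG^-$,
\be
\langle d_2 f, [R(q)(d_2 h)^-, J]\rangle = \langle R(q)[d_2 f, J]^+, (d_2 h)^-\rangle.
\ee
Second, exploiting the consequences $[(d_2 h)^+, J^\pm] = -[(d_2 h)^-, J^\mp]$ of the constraint $[d_2 h, J] = 0$,
\be
\langle J^+, [(d_2 f)^-, (d_2 h)^-] - [(d_2 f)^+, (d_2 h)^+]\rangle = -\langle d_2 f, [(d_2 h)^+, J]\rangle.
\ee
Both identities are proven by the same kind of expansion-and-matching argument used above. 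Summing them reassembles the surviving expression into the compact form $\langle d_2 f, [R(q)(d_2 h)^- - (d_2 h)^+, J]\rangle$, completing the proof.
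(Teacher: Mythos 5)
Your proposal is correct and follows essentially the same route as the paper: substitute the decomposition of $\nabla_1 F$ from Lemma \ref{Lem:3.3} into \eqref{PB1}, use the orthogonality of the subspaces to match components, and then verify that the leftover terms $\langle [d_2 f, J]^+, d_2 h\rangle - \langle [d_2 h, J]^+, d_2 f\rangle + \langle J, [d_2 f, d_2 h]\rangle$ combine into $\langle J^+, [(d_2 f)^-, (d_2 h)^-] - [(d_2 f)^+, (d_2 h)^+]\rangle$, which is exactly the computation carried out in the paper. The two auxiliary identities you propose for \eqref{redham1} (using $[d_2 h, J]=0$, $\nabla_1 h=0$, and the antisymmetry of $R(q)$) are precisely the ones the paper uses, so no further comment is needed.
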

\begin{proof}
The first and second lines of \eqref{redPB1} represent the contributions of the second and third terms of
\eqref{nabF} obtained  upon substitution in the formula \eqref{PB1}.
Regarding the third line, it arises from the first term of \eqref{nabF} by taking the sum of
\be
\langle [d_2 f, J]^+ , d_2 h\rangle  -  \langle [d_2 h, J]^+ , d_2 f\rangle  =
- 2 \langle J^+,  [(d_2 f)^+, (d_2 h)^+] \rangle
- \langle J^-, [ d_2 f, d_2 h]^- \rangle
\ee
and
\be
\langle J, [d_2 f, d_2 h]\rangle = \langle J^+, [d_2 f, d_2 f]^+ \rangle + \langle J^-, [d_2 f, d_2 h]^-\rangle.
\ee
To get \eqref{redham1} from \eqref{redPB1}, we use that for any Hamiltonian $h$ proportional with the
real or imaginary parts of $\tr(J^k)$ one has $[d_2 h, J ] =0$.
In this case, the second line of \eqref{redPB1} gives
\be
\langle R(q) [d_2 f, J]^+, (d_2 h)^- \rangle =\langle R(q) [d_2 f, J], (d_2 h)^- \rangle =
\langle d_2 f, [R(q) (d_2 h)^-, J]\rangle,
\ee
and the third line becomes
\bea
&&\langle J^+, [ (d_2 f)^-, (d_2 h)^-] - [ (d_2 f)^+, (d_2 h)^+] \rangle =
\langle J, [ (d_2 f)^-, (d_2 h)^-] - [ (d_2 f)^+, (d_2 h)^+] \rangle \nonumber\\
&& \qquad =\langle J, [  (d_2 h)^+, (d_2 f)^-] + [  (d_2 h)^+, (d_2 f)^+] \rangle
= -\langle d_2 f, [ (d_2 h)^+, J] \rangle,
\eea
which confirm the claim \eqref{redham1}.
\end{proof}

 \medskip\noindent
\begin{theorem} \label{Prop:3.5}
For $f,h \in C^\infty(\fM)_\red$ \eqref{redfun}, the second reduced Poisson bracket \eqref{defredPB}  can be written as
\bea
&& 2\{f,h\}_2^\red(e^q, J) =
\langle \nabla_1 f, (\nabla _2h + \nabla'_2 h)^-_0\rangle - \langle \nabla_1 h, (\nabla_2 f + \nabla_2' f)^-_0
\rangle \label{redPB2}\\
&&\qquad +\langle R(q) [d_2 f, J]^+, (\nabla_2 h +\nabla_2' h)^- \rangle   -
\langle R(q) [d_2 h, J]^+, (\nabla_2 f +\nabla_2' f)^- \rangle
\nonumber \\
&&\qquad + \langle (\nabla_2 f)^-, (\nabla_2' h)^-\rangle  +\langle (\nabla_2' f)^+, (\nabla_2 h)^+\rangle
 -(\nabla_2' f)^-, (\nabla_2 h)^-\rangle  -\langle (\nabla_2 f)^+, (\nabla_2' h)^+\rangle,
 \nonumber
\eea
where the derivatives are taken at $(e^q,J)$ and $R(q)$ is given by \eqref{Rq}.
If $h$ is the restriction of any of the Hamiltonians $H_k$ or $\tilde H_k$ \eqref{HamsonfM}, then this formula
can be recast in the form
\be
\{f,h\}_1^\red(e^q,J) = \langle \nabla_1 f, (\nabla_2 h)^-_0 \rangle +
\langle d_2 f, [R(q) (\nabla _2 h)^- - (\nabla_2 h)^+, J] \rangle.
\label{redham2}\ee
\end{theorem}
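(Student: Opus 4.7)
The proof will closely parallel that of Theorem~\ref{Prop:3.4}. The strategy is to start from an already-reduced expression for $\{F,H\}_2$ provided by Lemma~\ref{Lem:3.1}, substitute the expression for $\nabla_1 F(e^q,J)$ from Lemma~\ref{Lem:3.3}, and simplify using the subspace decompositions \eqref{Gpm}, \eqref{cGperp}, the trace-form orthogonality, the antisymmetry \eqref{asym} of $r$, and the properties \eqref{rmap}, \eqref{Rasym} of $r$ and $R(q)$.

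Concretely, I would begin from the right-invariant form \eqref{PB2R} of Lemma~\ref{Lem:3.1}, applicable since $C^\infty(\fM)^{\U\times\U}\subset C^\infty(\fM)^\U_\cR$. Inserting
\[
\nabla_1 F(e^q,J) = [d_2 f, J]^+ + \nabla_1 f + R(q)[d_2 f, J]^+
\]
and the analogous expression for $H$ into the term $\langle r\nabla_1 F, \nabla_1 H\rangle$, many cross terms vanish by $\langle \cG^+,\cG^-\rangle=0$, by $r(\cG^\pm)\subset \cG^\mp$, and by the vanishing of $r$ on diagonal matrices. The surviving contributions recombine into pieces containing $R(q)$ acting on $(\nabla_2 h+\nabla_2' h)^-$, producing the second line of~\eqref{redPB2}. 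The antisymmetry \eqref{Rasym} is then used freely to shift $R(q)$ onto whichever argument is more convenient.

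Substituting $\nabla_1 F$ and $\nabla_1 H$ into the remaining terms of~\eqref{PB2R} accounts for the other two lines. The part proportional to $\nabla_1 f\in \cG^-_0$ pairs only with the diagonal Hermitian part of its partner, yielding the first line of~\eqref{redPB2}. The contributions involving $[d_2 f, J]^+$ multiplied by the identity halves of $r_\pm$ from~\eqref{rpm} recombine, via the identity $(\nabla_1 F)^+=(\nabla_2' F-\nabla_2 F)^+$ recorded in~\eqref{id1}, with the $r$-free pieces of $\langle \nabla_2 F-\nabla_2' F, r_+\nabla_2' H-r_-\nabla_2 H\rangle$ to give the third line of~\eqref{redPB2}. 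A useful consistency check is to perform the same reduction starting from the left-invariant form~\eqref{PB2L} and verify that one obtains the same answer; the cancellation pattern is then the one already analyzed inside Lemma~\ref{Lem:3.2}.

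The specialization~\eqref{redham2} is a short additional step. When $h=H_k$ or $\tilde H_k$, the derivative $d_2 h$ is (up to a factor of $\pm\ri$) a power of $J$, so $[d_2 h, J]=0$ and $\nabla_2 h=\nabla_2' h = J\, d_2 h$. The $R(q)$ contribution involving $[d_2 h, J]^+$ therefore drops, and the last line of~\eqref{redPB2} collapses upon grouping the $\pm$ pieces, reproducing~\eqref{redham2} exactly as in the derivation of~\eqref{redham1}. The main obstacle I anticipate is the careful bookkeeping in the second step: one must track how the triangular decomposition underlying $r$ interacts with the Hermitian/anti-Hermitian decomposition underlying $R(q)$, and check that all terms containing $r$ explicitly cancel, leaving $R(q)$ as the only $q$-dependent operator in the final formula.
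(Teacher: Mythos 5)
Your strategy is sound and would arrive at the correct formulae, but you take a genuinely harder route than the paper at the decisive step. The paper does \emph{not} substitute \eqref{nabF} into the right-invariant form \eqref{PB2R}: it first combines \eqref{310} with Lemma \ref{Lem:3.2} to observe that for functions invariant under \emph{both} actions the bracket collapses to the completely $r$-free expression
\begin{equation*}
2\{F,H\}_2=\langle \nabla_2 F,\nabla_2'H\rangle-\langle\nabla_2 H,\nabla_2'F\rangle
+\langle\nabla_1 F,\nabla_2'H+\nabla_2 H\rangle-\langle\nabla_1 H,\nabla_2'F+\nabla_2 F\rangle,
\end{equation*}
and only then inserts Lemma \ref{Lem:3.3}; after that, every line of \eqref{redPB2} drops out by direct expansion, with $R(q)$ entering solely through \eqref{nabF}. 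Your route through \eqref{PB2R} is not wrong, but it leaves all of $r$ and $r_\pm$ in play, and your bookkeeping of where the terms go is then inaccurate: the surviving pieces of $\langle r\nabla_1F,\nabla_1H\rangle$ (which mix $r$ with $R(q)$, e.g.\ $\langle r[d_2f,J]^+,R(q)[d_2h,J]^+\rangle$) do \emph{not} produce the second line of \eqref{redPB2}; they must cancel against the $r$-parts of $\langle\nabla_1F,r_+\nabla_2'H-r_-\nabla_2H\rangle$ and of the $\langle\nabla_2F-\nabla_2'F,\,\cdot\,\rangle$ term, while the second line actually comes from the $\pm\tfrac12\,\mathrm{id}$ halves of $r_\pm$ in \eqref{rpm} hitting the $R(q)[d_2f,J]^+$ summand of $\nabla_1F$. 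This cancellation is exactly the computation already performed once inside Lemma \ref{Lem:3.2}, so redoing it here is wasted effort; the "consistency check" you mention in passing (starting from the left-invariant form) is in fact the efficient main line. Your treatment of the specialization \eqref{redham2} is fine and essentially matches the paper, which shortcuts via $\nabla_2 H=\nabla_2'H$ and $\langle\nabla_2F,\nabla_2'H\rangle-\langle\nabla_2H,\nabla_2'F\rangle=\langle d_2F,[d_2H,J^2]\rangle=0$ before applying the antisymmetry \eqref{Rasym} of $R(q)$.
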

\begin{proof}
We see by combining \eqref{310} (obtained in the proof of Lemma \ref{Lem:3.1}) and Lemma \ref{Lem:3.2}
that for $F,H\in C^\infty(\fM)^{\U \times \U}$
\be
2\{ F, H\}_2  =
 \langle \nabla_2 F ,\nabla_2' H \rangle
 - \langle \nabla_2 H ,\nabla_2' F \rangle
 + \langle \nabla_1 F,  \nabla_2' H  +\nabla_2 H  \rangle
-  \langle \nabla_1 H,   \nabla_2' F + \nabla_2 F \rangle.
\label{PB2LR}\ee
Putting $f= F\circ \iota$ and $h= H\circ \iota$, we substitute the identity \eqref{nabF}
both for $\nabla_1 F$ and $\nabla_1H$. After that, we spell out the first two terms of \eqref{PB2LR}
and the contributions
coming from the first terms of $\nabla_1 F$ and $\nabla_1 H$.
The formula  \eqref{redPB2} is then obtained by collecting terms.
Turning to the proof of \eqref{redham2}, instead of \eqref{redPB2} it is shorter to go back to
\eqref{PB2LR}.  We observe from \eqref{d2Hams}  that in this case
$[d_2 H, J]=0$. Thus $\nabla_2 H = \nabla_2' H$ and
\be
\langle \nabla_2 F ,\nabla_2' H \rangle
 - \langle \nabla_2 H ,\nabla_2' F \rangle =\langle d_2 F, [d_2H, J^2] \rangle = 0.
 \ee
Then
\bea
\{f,h\}_2^\red(e^q,J)&=& \langle \nabla_1 F (e^q, J), \nabla_2 H(e^q, J) \rangle  \\
&=& \langle [d_2 f(e^q,J), J]^+ + \nabla_1 f(e^q,J) + R(q) [d_2 f(e^q,J), J]^+, \nabla_2 h(e^q, J)\rangle.
\nonumber
\eea
The verification is finished by noting that
\be
\langle R(q) [d_2 f, J]^+, \nabla_2 h\rangle = - \langle [d_2 f, J], R(q) (\nabla_2 h)^- \rangle=
\langle d_2 f,  [R(q) (\nabla_2 h)^-, J] \rangle,
\ee
and
\be
\langle [d_2 f, J]^+, \nabla_2 h\rangle = \langle d_2f, [J, (\nabla_2 h)^+]\rangle .
\ee
\end{proof}

The following statement summarizes the outcome of our construction.
\medskip\noindent
\begin{proposition}  \label{Th:3.6}
 The formulae \eqref{redPB1} and \eqref{redPB2} yield two compatible Poisson brackets
 on $C^\infty(\fM)_\red$ \eqref{redfun}. The commuting Hamiltonians
 \be
 h_k:= \frac{1}{k} \Re \tr(J^k)
 \quad\hbox{and}\quad
 \tilde h_k := \frac{1}{k} \Im \tr(J^k),
 \qquad
 k\in \N,
 \label{redhams}\ee
 give rise to bi-Hamiltonian evolution equations since they satisfy
 \be
 \{ f, h_k\}_2^\red = \{ f, h_{k+1}\}_1^\red
 \quad \hbox{and}\quad
 \{ f, \tilde h_k\}_2^\red = \{ f, \tilde h_{k+1}\}_1^\red,
 \quad
 \forall f\in C^\infty(\fM)_\red,\,\, k\in \N.
 \label{redbih}\ee
 \end{proposition}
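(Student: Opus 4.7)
The plan is to deduce every assertion by lifting to $\U\times\U$-invariant functions on $\fM$ and invoking the unreduced results of Section 2. The definition \eqref{defredPB} of the reduced brackets is meaningful because $\{F,H\}_i$ is $\U\times\U$-invariant whenever $F,H$ are: for $i=2$ this closure is the content of Lemmas \ref{Lem:3.1}--\ref{Lem:3.2}, and for $i=1$ it is standard (and also visible directly from \eqref{PB1}). Antisymmetry and the Leibniz rule of each $\{\,,\,\}_i^\red$ are inherited from $\{\,,\,\}_i$ without further work.

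For the Jacobi identity I would pick $f,g,h\in C^\infty(\fM)_\red$ together with $\U\times\U$-invariant lifts $F,G,H$, and note that the closure above forces $\{G,H\}_i$ to be itself an invariant lift of $\{g,h\}_i^\red$. Consequently
\[
\{f,\{g,h\}_i^\red\}_i^\red = \{F,\{G,H\}_i\}_i \circ \iota,
\]
and the cyclic sum of the left-hand side equals the restriction to $\fM_0^\reg$ of the Jacobiator of $\{\,,\,\}_i$, which vanishes on all of $\fM$. Compatibility then follows in one line from Proposition \ref{Prop:2.2}: the pencil $\{\,,\,\}_1+\lambda\{\,,\,\}_2$ is Poisson on $\fM$ for every $\lambda\in\R$, so the same lifting argument transfers its Jacobi identity to $\{\,,\,\}_1^\red+\lambda\{\,,\,\}_2^\red$.

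For the bi-Hamiltonian recursion \eqref{redbih} I would observe that $H_k$ and $\tilde H_k$ of \eqref{HamsonfM}, being built from conjugation-invariant traces of $J$, are $\U\times\U$-invariant and restrict along $\iota$ to $h_k$ and $\tilde h_k$. For an arbitrary $f\in C^\infty(\fM)_\red$ with invariant lift $F$, Proposition \ref{Prop:2.3} supplies
\[
\{F,H_k\}_2 = \{F,H_{k+1}\}_1,
\qquad
\{F,\tilde H_k\}_2 = \{F,\tilde H_{k+1}\}_1,
\]
globally on $\fM$; composing with $\iota$ delivers \eqref{redbih}. Mutual involutivity of the $h_k$ and $\tilde h_k$ descends in the same fashion from the involutivity asserted in Proposition \ref{Prop:2.3}.

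I do not foresee any genuine obstacle: every step is a formal consequence of the closure lemmas of this section together with the results of Section 2. The one technical point worth flagging is the implicit use of the existence of $\U\times\U$-invariant smooth lifts $F\in C^\infty(\fM)^{\U\times\U}$ for each $f\in C^\infty(\fM)_\red$; but this is built into the very definition \eqref{redfun}, so the usual subtleties of Poisson reduction on a singular quotient are bypassed at the level of invariant functions.
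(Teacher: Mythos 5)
Your proposal is correct and follows essentially the same route as the paper: both derive well-definedness, the Jacobi identity and compatibility of the reduced brackets by restricting the corresponding unreduced statements through the definition \eqref{defredPB}, relying on the closure Lemmas \ref{Lem:3.1}--\ref{Lem:3.2} and Proposition \ref{Prop:2.2}. The only cosmetic difference is in verifying \eqref{redbih}: you restrict the unreduced identity of Proposition \ref{Prop:2.3} along $\iota$, while the paper reads the same equality off the intrinsic formulas \eqref{redham1} and \eqref{redham2} together with $\nabla_2 h_k = J^k = d_2 h_{k+1}$ from \eqref{dershk}; given the earlier results the two arguments are equivalent.
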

\begin{proof}
This is obvious from our construction.
In particular,
the compatibility of the reduced Poisson brackets follows from the compatibility
of the original Poisson brackets on $C^\infty(\fM, \R)$ by applying the definition \eqref{defredPB}.
The properties \eqref{redbih} are consequences of \eqref{redham1} and \eqref{redham2} taking into account that
\be
\nabla_2 h_k = J^k= d_2 h_{k+1}
\quad
\hbox{and}
\quad
\nabla_2 \tilde h_k = -\ri J^k = d_2 \tilde h_{k+1},
\label{dershk}\ee
which are implied by \eqref{d2Hams}.
\end{proof}

In the next remark, we explain that the compatible Poisson brackets can be defined also on
the ring of function $C^\infty(\fM^\reg_0)^{\T^n}$.

 \medskip\noindent
\begin{remark} \label{Rem:3.7}
The ring of functions $C^\infty(\fM)_\red$ \eqref{redfun} is contained in $C^\infty(\fM_0^\reg)^{\T^n}$,
but is not equal to it.
For example, the components of $q$ give elements of $C^\infty(\fM_0^\reg)^{\T^n}$, and can be extended to unique,
$\U \times \U$ invariant continuous  functions on $\fM$,
 but these
functions lose their differentiability at the locus where $q_i= q_{i+1}$ for some $i$.
This holds since the $e^{2 q_i}$ \eqref{KAK} are the ordered eigenvalues of $g g^\dagger$ and, as is well known,  the differentiability
of eigenvalues
is in general lost where they coincide.
Nevertheless, the formulae \eqref{redPB1} and \eqref{redPB2} define compatible Poisson brackets
on the whole of $C^\infty(\fM_0^\reg)^{\T^n}$. In order to see this, consider $\fM^\red$ \eqref{fMreg},
which is the set of elements
of $\fM$ that can be transformed into $\fM_0^\reg$ by the action of $\U \times \U$.
This is a dense open subset and the pull-back by $\iota: \fM_0^\reg \to \fM^\reg$  yields an injective
and \emph{surjective} map
from $C^\infty(\fM^\reg)^{\U \times \U}$ onto $C^\infty(\fM_0^\reg)^{\T^n}$, that is,
\be
C^\infty(\fM_0^\reg)^{\T^n}  = \{ f \in C^\infty(\fM_0^\reg) \mid f = F \circ \iota,
\quad F \in C^\infty(\fM^\reg)^{\U\times \U}\}.
\label{redfun0}\ee
Then the application of \eqref{defredPB} to $F, H \in C^\infty(\fM^\reg)^{\U \times \U}$
gives rise to compatible Poisson brackets on $C^\infty(\fM_0^\reg)^{\T^n}$.
They are described by the formulae \eqref{redPB1} and \eqref{redPB2} for any
$f,h\in C^\infty(\fM_0^\reg)^{\T^n}$,
since they are determined by calculations identical to those presented above.
\end{remark}

For a vector field $\cE$ on $\fM_0^\reg$, we denote the derivative of $f\in C^\infty(\fM_0^\reg)$ by
$\cE[f]$. The vector field $\cE$ is encoded by the matrix valued functions $\cE[q]$ and $\cE[J]$, i.e.,
by the derivatives of $q$ and $J$ regarded as evaluation functions that return $q$ and $J$ when
applied to $(q,J) \in \fM_0^\reg$.
Then the chain rule reads
\be
\cE[f] = \langle \nabla_1 f, \cE[q] \rangle +  \langle d_2 f, \cE[J] \rangle.
\ee
For any fixed $h \in C^\infty(\fM_0^\reg)^{\T^n}$, the two Poisson brackets with $h$ determine
two derivations of  $C^\infty(\fM_0^\reg)^{\T^n}$.  These correspond to vector fields $\cE_h^i$ ($i=1,2$) that are unique only
up to the addition of infinitesimal gauge transformations.
The term \emph{infinitesimal gauge transformation} refers to any vector field $\cZ$ for which
\be
\cZ[q]=0
\quad\hbox{and}\quad
\cZ[J] = [\cT, J]
\label{cZ}\ee
with some function $\cT: \fM_0^\reg\to \cG^+_0$.
Note that $\cG_0^+$ is just the Lie algebra of $\T^n$.
This ambiguity drops out after projection to the quotient space $\fM_0^\reg/\T^n$.
For definiteness, we shall fix this ambiguity of the vector field $\cE_h^i$ by
 by imposing the condition
 \be
\cE_h^i[f] = \{f,h\}_i^\red,
\qquad \forall f\in C^\infty(\fM_0^\reg),
\ee
 where $\{f, h\}_i^\red$ is understood to be given by the formulae  \eqref{redPB1} and \eqref{redPB2}.
These formulae define anti-symmetric bi-derivations on $C^\infty(\fM_0^\reg)$, but the Jacobi
identity holds only for the $\T^n$ invariant functions.
By some abuse of terminology, we call the vector field $\cE^i_h$  \emph{the Hamiltonian vector field}
associated with $h$ by means of the bracket $\{\ ,\ \}_i^\red$.

\medskip\noindent
\begin{proposition}\label{Prop:3.8}
Consider the vector fields  $\cE_k^i$ and $\tilde \cE_k^i$ defined by
\be
\cE_k^i[f] = \{ f, h_k\}_i^\red
\quad
\hbox{and}\quad
\tilde \cE_k^i[f] = \{ f, \tilde h_k\}_i^\red, \quad k\in \N,\, i=1,2,
\ee
with the 
Hamiltonians \eqref{redhams}. Using $R(q)$ \eqref{Rq}, these vector fields have the explicit form
\be
\cE_k^2[q_j]= \cE_{k+1}^1[q_j] = \Re (J^k)_{jj},
\quad
\cE_k^2[J] = \cE_{k+1}^1 [J] = \frac{1}{2}[ R(q)(J^k + (J^k)^*)  + ((J^k)^* - J^k ), J],
\label{cE}\ee
and
\be
\tilde \cE_k^2[q_j]= \tilde \cE_{k+1}^1[q_j] = \Re (-\ri J^k)_{jj},
\quad
\tilde \cE_k^2[J] = \tilde \cE_{k+1}^1 [J] =
\frac{1}{2}[\ri R(q)(  (J^k)^* - J^k   )  + \ri ( J^k + (J^k)^*), J].
\label{tcE}\ee
All these vector fields are tangent to the submanifold
\be
\fM_{0,-}^\reg = \{ (q,J^-) \in \fM_0^\reg \mid J^- \in \cG^-\} \subset \fM_0^\reg
\label{fM-}\ee
defined by imposing the constraint $J^+=0$, as well as to the submanifold
\be
\fM_{0,+}^\reg = \{ (q,J^+) \in \fM_0^\reg \mid J^+ \in \cG^+\} \subset \fM_0^\reg
\label{fM+}\ee
defined by imposing the constraint $J^-=0$.
The restriction $\cV_k^i$ of $\cE_k^i$  on $\fM_{0,-}^\reg$ gives
\be
\cV_k^2[q_j] = \cV_{k+1}^1 [q_j] = (J^-)^k_{jj},
\quad
\cV_k^2[J^-] = \cV_{k+1}^1 [J^-] = [ R(q)(J^-)^k, J^-],
\label{cV}\ee
while the restriction
of $\tilde \cE_k^i$ vanishes identically on $\fM_{0,-}^\reg$  for all $k$ and $i$.
The vector fields $\cV_k^i$ reproduce the evolutional vector fields of the
spin Sutherland
hierarchy described in \cite{FeNon}.

Denoting the restrictions of $\cE_k^i$ and $\tilde \cE_k^i$ on $\fM_{0,+}^\reg$  by $\cU_k^i$ and $\tilde \cU_k^i$, we obtain
\be
\cU^2_{2l-1} = \cU^1_{2l} =0,
\quad
\tilde \cU_{2l}^2 = \tilde \cU_{2l+1}^1 =0,
\ee
\bea
&&\cU_{2l}^2[q_j] = \cU_{2l+1}^1[q_j] = \Re((J^+)^{2l})_{jj},
\quad
\cU_{2l}^2[J^+] = \cU_{2l+1}^1[J^+] = [ R(q) (J^+)^{2l}, J^+],
\label{cU}\\
&&\tilde \cU_{2l-1}^2[q_j] = \tilde \cU_{2l}^1[q_j] = \Re (-\ri  (J^+)^{2l-1})_{jj},
\quad
\tilde \cU_{2l-1}^2[J^+] = \tilde \cU_{2l}^1[J^+] = [ -\ri R(q) (J^+)^{2l-1}, J^+]\nonumber.
\eea
By making the substitution $J^+ = \ri J^-$, the vector fields shown in \eqref{cU} get
transformed into those that appear in \eqref{cV}, up some irrelevant overall signs.
\end{proposition}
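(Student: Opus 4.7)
The plan is to exploit the simplified formulas for the reduced brackets with the Hamiltonians \eqref{redhams} that are already recorded in the paper, namely \eqref{redham1} and \eqref{redham2}. These apply precisely because $d_2 h_k = J^{k-1}$ and $d_2\tilde h_k = -\ri J^{k-1}$ (by \eqref{d2Hams}) commute with $J$, so that $[d_2 h, J] = 0$ and the brackets collapse to a single line. Using the chain rule
\[
\cE[f] = \langle \nabla_1 f, \cE[q]\rangle + \langle d_2 f, \cE[J]\rangle
\]
together with the fact that $\nabla_1 f$ takes values in the real diagonal matrices $\cG_0^-$, one reads off $\cE_k^i[q_j]$ from the first term and $\cE_k^i[J]$ from the second term in \eqref{redham1}/\eqref{redham2}. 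The equality $\cE_{k+1}^1 = \cE_k^2$ (and similarly for tildes) is then transparent: in \eqref{redham1} applied to $h_{k+1}$ the relevant matrix is $d_2 h_{k+1} = J^k$, while in \eqref{redham2} applied to $h_k$ the relevant matrix is $\nabla_2 h_k = J^k$, so the two produce literally the same vector field. This is just the infinitesimal form of the bi-Hamiltonian relation \eqref{redbih} already proved in Proposition \ref{Th:3.6}.

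The explicit form \eqref{cE} then follows by decomposing $J^k = (J^k)^+ + (J^k)^-$ with $(J^k)^\pm = \tfrac{1}{2}(J^k \mp (J^k)^*)$, and likewise for $-\ri J^k$ in the tilde case; collecting terms inside the commutator reproduces the stated expressions. The first component $\cE_k^2[q_j] = \Re(J^k)_{jj}$ arises because $\langle X, M\rangle = \sum_i X_{ii}\, \Re M_{ii}$ for diagonal real $X$.

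For tangency to $\fM_{0,-}^\reg$ and $\fM_{0,+}^\reg$, the key input is that $R(q)$ exchanges $\cG^+_\perp$ and $\cG^-_\perp$ (noted after \eqref{Rq}) and that the commutator obeys \eqref{comrel}. When $J \in \cG^-$, $J^k$ is Hermitian, so $(J^k)^+ = 0$ and $R(q) J^k \in \cG^+$; then $[R(q)J^k, J] \in [\cG^+,\cG^-]\subset \cG^-$, showing that $\cE_k^i$ preserves the constraint $J^+=0$. The same bracket manipulation shows $\tilde\cE_k^i[J] = \ri[J^k, J]=0$ on $\fM_{0,-}^\reg$ since $J^k$ commutes with $J$, and $\Re(-\ri J^k)_{jj}=0$ since the diagonal of $J^k$ is real. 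When $J\in \cG^+$, the parity of $k$ governs whether $J^k$ lies in $\cG^\pm$, and a case-by-case substitution in \eqref{cE} and \eqref{tcE} (using $(J^k)^* = (-1)^k J^k$) yields \eqref{cU} and the vanishing of the complementary vector fields; tangency again follows from \eqref{rmap} and \eqref{comrel}. The identification of $\cV_k^i$ with the spin Sutherland hierarchy of \cite{FeNon} is a matching of formulas, since that reference derived exactly the vector fields $[R(q)(J^-)^k, J^-]$ as the evolutions of the hierarchy.

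The final comparison $J^+ = \ri J^-$ is a direct substitution: on $\fM_{0,+}^\reg$ one has $(J^+)^{2l} = (-1)^l (J^-)^{2l}$ and $(J^+)^{2l-1} = \ri(-1)^{l-1}(J^-)^{2l-1}$, so that \eqref{cU} goes over into \eqref{cV} up to overall factors $\pm 1, \pm\ri$ which are irrelevant for the orbit structure. The main obstacle I expect is bookkeeping: keeping track of the $\pm$ decomposition consistently across all four vector fields and both signs of the parity of $k$, and carefully distinguishing the roles of $d_2 h$ versus $\nabla_2 h$; once the single computation for a generic $h$ with $[d_2 h, J]=0$ is done, everything else reduces to substituting powers of $J$ and invoking the structural properties \eqref{rmap}--\eqref{comrel} of the decomposition \eqref{ortog}.
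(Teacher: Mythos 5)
Your proposal is correct and follows essentially the same route as the paper: the components of the vector fields are read off from the simplified bracket formulae \eqref{redham1} and \eqref{redham2} together with \eqref{dershk}, tangency follows from $[\cG^+,\cG^\pm]\subset\cG^\pm$ and the mapping properties of $R(q)$, and the restricted forms come from the parity case analysis $(J^k)^*=(-1)^kJ^k$ on $\fM^\reg_{0,+}$. The paper's own proof is just a terse version of exactly these steps, so no further comparison is needed.
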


\begin{proof}
The formulae \eqref{cE} and \eqref{tcE} follow directly by applying the formulae \eqref{redham1}, \eqref{redham2}
and \eqref{dershk}.
 The tangency
 to $\fM^\reg_{0,\pm}$ is a consequence of the fact that
$[\cG^+,\cG^\pm ]\subset \cG^{\pm}$.
The statements about the form  of the restricted vector fields are plain from
 \eqref{cE} and \eqref{tcE}.
 Comparison with  equation (1.8) in \cite{FeNon} shows that
 the bi-Hamiltonian vector fields \eqref{cV} reproduce the spin Sutherland hierarchy
studied earlier.
\end{proof}

 \medskip\noindent
\begin{remark} \label{Rem:3.9}
It is worth noting that the vector fields $\cZ$ \eqref{cZ} representing infinitesimal gauge transformations
are tangent to the submanifolds $\fM_{0,\pm}^\reg$. Therefore the possibility to restrict a Hamiltonian vector field  from
$\fM_0^\reg$  to $\fM_{0,\pm}^\reg$ is independent of the ambiguity of adding an infinitesimal gauge transformation.
\end{remark}

\section{Interpretation as spin Sutherland models}
\setcounter{equation}{0}

We below develop the physical interpretation of the reduced bi-Hamiltonian system.
This interpretation will be reached via a suitable
parametrization of the variable $J$ for $(e^q, J) \in \fM_0^\reg$. In this parametrization
the first reduced Poisson structure and one of the reduced Hamiltonians take the form
characteristic of spin Sutherland models.
 We first present the particular case corresponding to
the Hamiltonian \eqref{spinSuth1} and deal
with the general case  \eqref{spinSuth2} subsequently.
We will not spell out the expression of the second
Poisson bracket in terms of the new variables, since
those formulae are complicated and do not enhance our
understanding.

\subsection{Spin Sutherland model on the subspace $\fM_{0,-}^\reg \subset \fM_0^\reg$}

We have seen in Proposition \ref{Prop:3.8}  that the evolutional vector fields of the spin Sutherland hierarchy
(given by \eqref{cV})
result from  our construction by restriction to the submanifold $\fM_{0,-}^\reg$.
Now we show that the bi-Hamiltonian structure found in \cite{FeNon}
also results from this restriction.  We start by demonstrating that
the derivative of $J^+$ vanishes along all Hamiltonian vector fields $\cE_h^i$
at those point where $J^+=0$, i.e., on the submanifold $\fM^\reg_{0,-}$ \eqref{fM-}.
To see this, we rearrange the Poisson brackets in the form
\be
\{f,h\}_i^\red = \langle \nabla_1 f, \cE_h^i[q] \rangle + \langle (d_2f)^-, \cE_h^i[J^-]\rangle +
\langle (d_2 f)^+, \cE_h^i[J^+] \rangle,
\ee
and show that the functions
\be
\cE_h^i[J^+]: \fM_0^\reg \to \cG^+
\ee
vanish for both Poisson brackets upon imposing the constraint $J^+=0$.

The following lemmas are obtained by straightforward,
somewhat tedious, calculations. We sketch only the proof of the second lemma, which is the more
complicated one.

\begin{lemma}\label{lem:4.1}
The Hamiltonian vector field $\cE_h^1$ associated with the Poisson bracket \eqref{redPB1} has the components
\be
\cE_h^1[q] = (d_2h)^-_0,
\ee
\be
\cE_h^1[J^-] = -\nabla_1 h  + [(d_2h)^-, J^+] + R(q)[J, d_2h]^+ + [R(q) (d_2h)^-, J^-],
\ee
\be
\cE_h^1[J^+]= [ R(q) (d_2h)^- - (d_2 h)^+, J^+].
\ee
Consequently, $\cE_h^1$ is tangent to the submanifold $\fM^\reg_{0,-}$ \eqref{fM-}
for every $h\in C^\infty(\fM_0^\reg)^{\T^n}$.
 \end{lemma}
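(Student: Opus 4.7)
The plan is to read off the three components $\cE_h^1[q]$, $\cE_h^1[J^-]$, $\cE_h^1[J^+]$ from formula \eqref{redPB1} by matching coefficients in the chain rule expansion
\[
\{f,h\}_1^\red = \langle \nabla_1 f, \cE_h^1[q]\rangle + \langle (d_2 f)^-, \cE_h^1[J^-]\rangle + \langle (d_2 f)^+, \cE_h^1[J^+]\rangle.
\]
Here $\nabla_1 f$ takes values in $\cG_0^-$ and $(d_2f)^\pm$ take values in $\cG^\pm$, with the three pieces mutually orthogonal with respect to $\langle\cdot,\cdot\rangle$, so the identification of components is unambiguous. The coefficient of $\nabla_1 f$ comes entirely from the first term of \eqref{redPB1} and gives at once $\cE_h^1[q] = (d_2 h)_0^-$.

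For the remaining components I would rewrite the other four terms of \eqref{redPB1} so that one of $(d_2 f)^\pm$ stands explicitly on the right of a pairing. The tools used will be: the ad-invariance $\langle X,[Y,Z]\rangle = \langle [X,Y],Z\rangle$; the anti-symmetry \eqref{Rasym} of $R(q)$; the grading properties $R(q)\cG^\pm_\perp = \cG^\mp_\perp$; the commutator rules \eqref{comrel}; and the observation that $\nabla_1 h \in \cG_0^-$ so $\langle \nabla_1 h, (d_2 f)_0^-\rangle = \langle \nabla_1 h, (d_2 f)^-\rangle$. Concretely, $\langle R(q)[d_2f,J]^+, (d_2h)^-\rangle$ becomes $\langle d_2 f, [R(q)(d_2h)^-, J]\rangle$ after two such moves, and splitting $d_2 f = (d_2f)^+ + (d_2f)^-$ together with $[R(q)(d_2h)^-, J^\pm] \in \cG^\pm$ distributes this cleanly between the $J^-$ and $J^+$ components. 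The term $-\langle R(q)[d_2h,J]^+, (d_2f)^-\rangle$ already has $(d_2f)^-$ exposed and supplies $R(q)[J,d_2h]^+$ to $\cE_h^1[J^-]$. The last two terms $\langle J^+, [(d_2f)^-,(d_2h)^-]-[(d_2f)^+,(d_2h)^+]\rangle$ contribute $[(d_2h)^-, J^+]$ to $\cE_h^1[J^-]$ and $-[(d_2h)^+, J^+]$ to $\cE_h^1[J^+]$ by a direct use of ad-invariance. Collecting the pieces reproduces the three formulas displayed in the statement.

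The tangency to $\fM_{0,-}^\reg$ is then immediate: setting $J^+ = 0$ in the expression for $\cE_h^1[J^+] = [R(q)(d_2h)^- - (d_2 h)^+, J^+]$ gives zero, so the flow preserves the constraint $J^+ = 0$.

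The only genuinely delicate point is bookkeeping of signs and of the $(\pm)$ and diagonal versus off-diagonal decompositions. In particular one must check that the piece $R(q)[J, d_2h]^+$ appearing in $\cE_h^1[J^-]$ lives in $\cG_\perp^-$, while $\nabla_1 h$ lives in $\cG_0^-$, so that they combine with $[(d_2h)^-,J^+] + [R(q)(d_2h)^-, J^-] \in \cG^-$ to give a genuinely $\cG^-$-valued expression; this is automatic from the stated properties of $R(q)$, but writing out the proof requires a careful inspection. I expect this is the only real obstacle; once the pairings are rewritten as above, reading off the components is mechanical.
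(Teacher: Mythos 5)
Your proposal is correct and is exactly the ``straightforward, somewhat tedious'' calculation that the paper alludes to but does not write out (it only sketches the proof of the companion Lemma \ref{lem:4.2}): one reads off the components of $\cE_h^1$ from \eqref{redPB1} by exposing $\nabla_1 f$, $(d_2f)^-$ and $(d_2f)^+$ via ad-invariance of $\langle\ ,\ \rangle$, the anti-symmetry \eqref{Rasym}, and the grading rules \eqref{comrel} and $R(q)\cG^\pm_\perp=\cG^\mp_\perp$, and the tangency to $\fM^\reg_{0,-}$ follows by setting $J^+=0$. Your term-by-term assignments (including the split of $\langle d_2f,[R(q)(d_2h)^-,J]\rangle$ between the $J^\pm$ components and the signs coming from the last line of \eqref{redPB1}) all check out against the stated formulae.
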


\begin{lemma}\label{lem:4.2}
The Hamiltonian vector field $\cE_h^2$ associated with the Poisson bracket \eqref{redPB2} has the components
\be
\cE_h^2[q] = \left(J^- (d_2h)^- + J^+ (d_2 h)^+\right)^-_0,
\label{L1}\ee
\be
\cE_h^2[J^-] =  \left(2J^- (d_2h)^- J^+  - 2J^- R(q)((d_2h)^- J^- + (d_2h)^+ J^+) - \nabla_1 h J^-\right)^-,
\label{L2}\ee
\be
\cE_h^2[J^+]= \left( 2 J^+ (d_2h)^+ J^-   - 2 J^+ R(q)( (d_2h)^- J^- + (d_2h)^+ J^+)  - \nabla_1 h J^+ \right)^+.
\label{L3}\ee
It follows that $\cE_h^2$ is tangent to  both submanifolds $\fM^\reg_{0,-}$ \eqref{fM-} and
$\fM^\reg_{0,+}$ \eqref{fM+}.
 \end{lemma}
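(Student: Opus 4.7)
The plan is to determine the three components of the vector field $\cE_h^2$ by matching the chain-rule expansion
\[
\cE_h^2[f]=\langle\nabla_1 f,\cE_h^2[q]\rangle+\langle (d_2 f)^-,\cE_h^2[J^-]\rangle+\langle (d_2 f)^+,\cE_h^2[J^+]\rangle
\]
term by term against the explicit formula for $2\{f,h\}_2^\red$ given by Theorem~\ref{Prop:3.5}. Since $\nabla_1 f\in\cG_0^-$ and $(d_2 f)^\pm\in\cG^\pm$ can be prescribed independently on $\fM_0^\reg$, this matching uniquely characterises $\cE_h^2[q]$, $\cE_h^2[J^-]$ and $\cE_h^2[J^+]$.

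First I would substitute $\nabla_2 h=J\,d_2 h$, $\nabla_2' h=d_2 h\,J$, and likewise for $f$, into the eight summands of \eqref{redPB2}. The $\nabla_1 f$-summand is already in the required form: the parity rules \eqref{comrel} yield
\[
(Jd_2 h+d_2 h\,J)^- = J^+(d_2 h)^++(d_2 h)^+ J^+ + J^-(d_2 h)^-+(d_2 h)^- J^-,
\]
whose diagonal equals $2(J^+(d_2 h)^++J^-(d_2 h)^-)_0^-$, and dividing by two produces \eqref{L1}. Each of the seven remaining summands is then rewritten in the form $\langle d_2 f,B\rangle$ for some matrix-valued $B$, using three basic tools: (i) trace cyclicity in the form $\langle[d_2 f,J],Z\rangle=\langle d_2 f,[J,Z]\rangle$; (ii) the antisymmetry \eqref{Rasym} of the operator $R(q)$; and (iii) the orthogonality $\langle X^\pm,Y^\mp\rangle=0$, which lets me drop $(\cdot)^\pm$ projections whenever the dual argument already lies in $\cG^\pm$.

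Once every summand has the form $\langle d_2 f,B\rangle$, decomposing $B=B^++B^-$ identifies $2\cE_h^2[J^\pm]$ as the aggregated $\cG^\pm$-part. The main obstacle I expect is the bookkeeping at this stage: the $R(q)$-generated pieces coming from the second line of \eqref{redPB2} must be combined with the bi-quadratic pieces coming from the third line, and a substantial number of cancellations occur because $R(q)(\cG^\pm)\subset\cG^\mp_\perp$ and because of the parity relations \eqref{comrel}. After these cancellations the $R(q)$-terms aggregate into $-4(J^\pm R(q)((d_2 h)^- J^-+(d_2 h)^+ J^+))^\pm$, the $R(q)$-free bi-quadratic contributions into $4(J^\pm(d_2 h)^\pm J^\mp)^\pm$, and the anticommutator $\nabla_1 h\,J^\pm+J^\pm\nabla_1 h$, which is automatically in $\cG^\pm$, equals $2(\nabla_1 h\,J^\pm)^\pm$; dividing by two then reproduces precisely \eqref{L2} and \eqref{L3}.

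For the tangency statements an inspection of the resulting explicit formulas suffices: every summand in \eqref{L2} carries an explicit factor $J^-$ (as the outer factor in $2J^-(d_2 h)^- J^+$ and in $-2J^- R(q)(\cdots)$, and as the right factor in $-\nabla_1 h\,J^-$), so $\cE_h^2[J^-]$ vanishes on $\{J^-=0\}=\fM_{0,+}^\reg$. Symmetrically, every summand in \eqref{L3} carries a factor $J^+$, so $\cE_h^2[J^+]$ vanishes on $\{J^+=0\}=\fM_{0,-}^\reg$, and $\cE_h^2$ is tangent to both submanifolds as claimed.
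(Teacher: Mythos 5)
Your proposal is correct and follows essentially the same route as the paper: it substitutes $\nabla_2 h = J\,d_2h$, $\nabla_2'h = d_2h\,J$ into \eqref{redPB2}, reads off $\cE_h^2[q]$ from the $\nabla_1 f$-term, and collects the coefficients of $(d_2f)^\pm$ using trace cyclicity, the antisymmetry \eqref{Rasym} and the parity rules, exactly as in the paper's proof (which relegates the laborious bi-quadratic bookkeeping to Appendix B); your predicted aggregates $-4\bigl(J^\pm R(q)((d_2h)^-J^-+(d_2h)^+J^+)\bigr)^\pm$ and $4\bigl(J^\pm(d_2h)^\pm J^\mp\bigr)^\pm$ match the paper's computation. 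The tangency argument by inspection of the explicit factors of $J^\mp$ is also the one used in the paper.
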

 \begin{proof}
 We shall use that, for any $X\in \cG$, $X^- = \frac{1}{2} (X + X^*)$ and $X^+ = \frac{1}{2} (X- X^*)$, together
 with obvious properties of the trace form \eqref{E1}, like $\langle X^*, Y^*\rangle = \langle X, Y\rangle$.
 By directly spelling it out, we find
 \be
 (\nabla_2 h + \nabla_2' h)^- = 2 \left( J^- (d_2 h)^- + J^+ (d_2 h)^+ \right)^-.
 \ee
 Thus, the first term of $\{f,h\}_2^\red$ \eqref{redPB2} gives \eqref{L1}, and the second term of \eqref{redPB2} gives the last terms in both lines
 \eqref{L2} and \eqref{L3}. In order to confirm the latter statement, note that
 \bea
 &&\frac{1}{2} \langle \nabla_1 h, (\nabla_2 f + \nabla_2' f)^-_0 \rangle = \langle \nabla_1 h, J^- (d_2 f)^- + J^+ (d_2 f)^+ \rangle =
 \\
 &&
   \langle \nabla_1 h J^-, (d_2 f)^- \rangle  + \langle \nabla_1 h J^+, (d_2 f)^+ \rangle
  = \langle (\nabla_1 h J^-)^-, (d_2 f)^- \rangle  + \langle (\nabla_1 h J^+)^+, (d_2 f)^+ \rangle. \nonumber
 \eea
 Next, we inspect the terms of \eqref{redPB2} that contain $R(q)$ \eqref{Rq}. We use the anti-symmetry of $R(q)$ \eqref{Rasym}, the property
 $(R(q) X)^* = - R(q) X^*$, and that it maps $\cG^\pm$ into $\cG^{\mp}$, respectively.  Then we can write
 \bea
&&\langle R(q) [d_2f, J]^+, (\nabla_2 h + \nabla_2' h)^- \rangle  \nonumber\\
&&\,\, = 2 \langle R(q)\left( (d_2 f)^+ J^+ - J^+ (d_2 f)^+ + (d_2 f)^- J^- - J^- (d_2 f)^- \right), J^- (d_2 h)^- + J^+ (d_2 h)^+ \rangle \nonumber \\
&&\,\, =2 \langle R(q)\left( (d_2 f)^+ J^+  + (d_2 f)^- J^-  \right), J^- (d_2 h)^- +    J^+ (d_2 h)^+  + (d_2 h)^- J^- + (d_2 h)^+ J^+ \rangle
\nonumber \\
&&\,\,=  - 2  \langle (d_2 f)^+ , J^+ R(q) \left( J^- (d_2 h)^- +    J^+ (d_2 h)^+  + (d_2 h)^- J^- + (d_2 h)^+ J^+ \right) \rangle \nonumber\\
&&\,\quad  - 2  \langle (d_2 f)^- , J^- R(q) \left( J^- (d_2 h)^- +    J^+ (d_2 h)^+  + (d_2 h)^- J^- + (d_2 h)^+ J^+ \right) \rangle.
 \eea
Similarly, we obtain
\bea
&& - \langle R(q) [d_2 h, J]^+, (\nabla_2 f + \nabla_2' f)^- \rangle
= - 2 \langle R(q) [d_2 h, J]^+, (d_2 f)^- J^- +(d_2 f)^+  J^+  \rangle  \nonumber\\
&&\qquad   = - 2 \langle (d_2 f)^+,  J^+ R(q) [d_2 h, J]^+ \rangle - 2  \langle (d_2 f)^-,  J^- R(q) [d_2 h, J]^+ \rangle.
\eea
By adding these two expressions, and taking into account the factor $2$ on the left-hand side of \eqref{redPB2},
we get the terms containing $R(q)$ in \eqref{L2} and \eqref{L3}.
The first terms on the right-hand sides of \eqref{L2} and \eqref{L3} result  by expanding and collecting
all terms coming from the last line of
\eqref{redPB2}, which is laborious but fully straightforward.
Some further details are given in Appendix \ref{app:details}.
 \end{proof}

Let us remember \cite{OR} that
a Poisson submanifold $N$ of a Poisson manifold $(M, \{\ , \}_M)$ is characterized by the property
that if one considers any Hamiltonian vector field on $M$ and restricts it to $N$, then the
restricted vector field is tangent to $N$. Under this condition, one obtains a Poisson structure
$\{\ ,\ \}_N$ on $N$ as follows.  Take any smooth functions $\cF, \cH$ on $N$ and extend them arbitrarily
to smooth functions $f, h$ on $M$. (It is sufficient to consider such extensions only locally, and $N$ can be
an immersed submanifold).
Then the formula
\be
\{\cF, \cH\}_N(x) := \{ f, h\}_M(x),
\qquad
\forall x\in N,
\ee
gives a well-defined Poisson bracket on $N$.
It follows from Lemma \ref{lem:4.1} and Lemma \ref{lem:4.2} that this procedure can be applied in our situation, too, and thus
we obtain well-defined Poisson brackets on $C^\infty(\fM^\reg_{0,-})^{\T^n}$ by restriction
of the Poisson brackets on $C^\infty(\fM_0^\reg)^{\T^n}$.
The variables of a function $\cF \in C^\infty(\fM^\reg_{0,-})$ are given by the pair $(e^q, J^-)$.
 Mimicking the previous practice, we introduce the corresponding $\cG^-_0$-valued and
$\cG^-$-valued derivatives $\nabla_1 \cF$ and $d_2 \cF$, and also set
$\nabla_2 \cF := J^- d_2 \cF$ and $\nabla_2' \cF :=  d_2 \cF J^-$.

\begin{theorem}\label{Th:4.3}
The compatible Poisson brackets given by \eqref{redPB1} and \eqref{redPB2} on $C^\infty(\fM^\reg_0)^{\T^n}$
can be restricted to $C^\infty(\fM^\reg_{0,-})^{\T^n}$. For $\cF, \cH \in C^\infty(\fM^\reg_{0,-})^{\T^n}$
the restricted brackets, denoted  $\{\cF ,\cH \}_{i,-}^{\red}$, can be written as
\be
\{\cF, \cH\}_{1,-}^{\red}(e^q, J^-) = \langle \nabla_1 \cF, d_2 \cH \rangle - \langle \nabla_1 \cH, d_2 \cF\rangle
+ \langle J^-, [R(q) d_2\cF, d_2\cH ]  + [ d_2 \cF, R(q) d_2 \cH] \rangle
\label{-PB1}\ee
and
\be
\{ \cF, \cH\}^{\red}_{2, -}(e^q, J^-) = \langle \nabla_1 \cF, \nabla_2 \cH \rangle
- \langle \nabla_1 \cH, \nabla_2 \cF \rangle + 2 \langle \nabla_2 \cF, R(q) \nabla_2 \cH\rangle,
\label{-PB2}\ee
where the  derivatives are evaluated at $(e^q, J^-)$.
These formulae  reproduce the compatible
Poisson brackets of the hyperbolic spin Sutherland hierarchy
described earlier in \cite{FeNon}.
\end{theorem}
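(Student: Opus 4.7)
The plan is to combine the two inputs already at hand: Lemmas \ref{lem:4.1} and \ref{lem:4.2} show that on the constraint surface $J^+=0$ both $\cE_h^1[J^+]$ and $\cE_h^2[J^+]$ vanish, so $\fM^\reg_{0,-}$ is a Poisson submanifold of $\fM_0^\reg$ with respect to each of $\{\cdot,\cdot\}_1^\red$ and $\{\cdot,\cdot\}_2^\red$. By the general restriction procedure recalled just before the theorem, this yields well-defined, automatically compatible reduced Poisson brackets on $C^\infty(\fM^\reg_{0,-})^{\T^n}$. What remains is to derive the explicit formulas \eqref{-PB1} and \eqref{-PB2} and then to identify them with those of \cite{FeNon}.

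To compute the restrictions I would exploit the freedom of extension: given $\cF\in C^\infty(\fM^\reg_{0,-})^{\T^n}$, extend it to $F\in C^\infty(\fM_0^\reg)^{\T^n}$ by setting $F(e^q,J):=\cF(e^q,J^-)$, i.e., independent of $J^+$. For this extension $d_2F\in \cG^-$, hence $(d_2F)^+=0$ and $(d_2F)^-=d_2\cF$, while $\nabla_1F=\nabla_1\cF$, $\nabla_2F=J^-\,d_2\cF=\nabla_2\cF$, and $\nabla_2'F=d_2\cF\cdot J^-=\nabla_2'\cF$. For \eqref{-PB1}, I substitute into \eqref{redPB1} and evaluate at $J^+=0$: the $\langle J^+,\cdots\rangle$ term drops, the first two terms collapse to $\langle\nabla_1\cF,d_2\cH\rangle-\langle\nabla_1\cH,d_2\cF\rangle$ because $\nabla_1\cF\in\cG_0^-$ already pairs only with the diagonal of $d_2\cH$, and for the $R(q)$-terms I use that $[d_2\cF,J^-]\in\cG^+$ (since $[\cG^-,\cG^-]\subset\cG^+$), so $[d_2\cF,J^-]^+=[d_2\cF,J^-]$; then the anti-symmetry \eqref{Rasym} of $R(q)$ together with the invariance of the trace form converts each such term into a piece of the commutator expression on the right-hand side of \eqref{-PB1}.

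For \eqref{-PB2} the key algebraic observation is that for $A,B\in\cG^-$ one has $(AB)^-=\tfrac12(AB+BA)$ and $(AB)^+=\tfrac12[A,B]$, whence $(\nabla_2\cF)^-=(\nabla_2'\cF)^-$, $(\nabla_2'\cF)^+=-(\nabla_2\cF)^+$, and $\nabla_2\cF+\nabla_2'\cF=2(\nabla_2\cF)^-$. Substituting the $J^+$-independent extension into \eqref{redPB2}, the first two terms reduce to $2\langle\nabla_1\cF,\nabla_2\cH\rangle-2\langle\nabla_1\cH,\nabla_2\cF\rangle$; the four quadratic-in-$\nabla_2$ terms in the last line of \eqref{redPB2} cancel pairwise by the sign relations just noted; and the $R(q)$-terms combine, after one more application of anti-symmetry of $R(q)$ and a cyclic rearrangement of the trace, into $4\langle\nabla_2\cF,R(q)\nabla_2\cH\rangle$. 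Dividing by the overall factor of $2$ on the left delivers \eqref{-PB2}.

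The main obstacle is the bookkeeping in the second-bracket substitution, where several $\cG^\pm$-projections, the anti-symmetries of $r$ and $R(q)$, and a few integrations by parts moving $J^-$ through commutators all have to be orchestrated; the abstract Poisson-submanifold argument, however, guarantees extension-independence of the answer and so provides a consistency check at each stage. Finally, matching \eqref{-PB1}--\eqref{-PB2} with the compatible brackets of \cite{FeNon} is a direct comparison of notation once the hyperbolic spin-Sutherland Lax matrix there is identified with our Hermitian variable $J^-$; this identification is already consistent with Proposition \ref{Prop:3.8}, where the bi-Hamiltonian vector fields \eqref{cV} on $\fM^\reg_{0,-}$ were seen to reproduce the spin Sutherland hierarchy of \cite{FeNon}.
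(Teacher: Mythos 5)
Your proposal is correct and follows essentially the same route as the paper: both rest on Lemmas \ref{lem:4.1} and \ref{lem:4.2} to establish that $\fM^\reg_{0,-}$ is a Poisson submanifold, and then obtain \eqref{-PB1}--\eqref{-PB2} by substitution at $J^+=0$, using the same algebraic facts ($[\cG^-,\cG^-]\subset\cG^+$, the anti-symmetry of $R(q)$, and $\nabla_2\cH=(\nabla_2'\cH)^*$). The only organizational difference is that the paper pairs $\nabla_1 f$ and $(d_2f)^-$ with the already-computed vector-field components $\cE_h^i[q]$, $\cE_h^i[J^-]$ from those lemmas, whereas you substitute a $J^+$-independent extension directly into \eqref{redPB1} and \eqref{redPB2}; the two computations are equivalent and your bookkeeping (in particular the pairwise cancellation of the four quadratic terms and the coefficient $4$ from the $R(q)$-terms) checks out.
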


\begin{proof}  Suppose that $\cF, \cH$ are the restrictions of $f, h \in C^\infty(\fM^\reg_0)^{\T^n}$.
Then, according to the definition of the restricted brackets,
\be
\{\cF, \cH\}_{i,-}^{\red}(e^q, J^-)
= \left(\langle \nabla_1 f, \cE_h^i[q] \rangle + \langle (d_2f)^-, \cE_h^i[J^-]\rangle\right)(e^q, J^-),
\ee
where the formulae of the preceding two lemmas are applied, at $J^+=0$.
Here, we substitute $(d_2 f)^-(e^q, J^-) = d_2 \cF(e^q, J^-)$ and analogous relations
for the other derivatives.
 This readily leads to the above forms of the restricted Poisson brackets.
For example, to obtain \eqref{-PB2}, we also use that
\be
-\langle d_2 \cF, 2J^- R(q)\nabla'_2 \cH \rangle = - 2 \langle \nabla'_2 \cF, R(q) \nabla'_2 \cH
\rangle = 2  \langle \nabla_2 \cF, R(q) \nabla_2 \cH \rangle.
\ee
This holds by virtue of the identities
\be
\nabla_2 \cH = (\nabla_2' \cH)^*, \quad R(q) X^* = - (R(q) X)^*,\quad
\langle X^*, Y^*\rangle = \langle X, Y\rangle,\,\,  \forall X,Y\in \cG.
\ee
Taking into account some obvious differences of notation,  one sees by direct comparison
that the Poisson brackets in \eqref{-PB1} and \eqref{-PB2}  coincide with those in Theorem 1
of \cite{FeNon}.
\end{proof}

\begin{remark}\label{Rem:4.4}  The Poisson bracket \eqref{-PB2} was obtained in \cite{FeNon} by
suitably rewriting the Poisson structure of an example of models
derived by Li \cite{Li} applying a rather complicated method based on dynamical Poisson--Lie groupoids.
Then it was directly shown to be compatible with the first Poisson bracket \eqref{-PB1}
extracted from \cite{FP1}.
The present derivation is simpler and it highlights that both Poisson brackets originate
from a single reduction in a unified manner.
If we parametrize the Hermitian matrix $J=J^-$ in the form
\be
(J^-)_{ij} = p_i \delta_{ij} - (1- \delta_{ij}) \frac{\xi_{ij}}{\sinh(q_i - q_j)},
\label{J-par}\ee
where the $p_i$ are arbitrary real numbers and $\xi$ in an off-diagonal anti-Hermitian matrix,
then the reduced Hamiltonian $\frac{1}{2} \tr(J^2)$ reproduces \eqref{spinSuth1}.
The spin $\xi$ matters up to the gauge transformations $\xi \mapsto \tau \xi \tau^{-1}$, $\forall \tau \in \T^n$.
Under the reduced first Poisson bracket \eqref{-PB1}, the $\T^n$ invariant functions of $\xi$
are those arising from the $\u(n)$ Lie-Poisson bracket reduced
by the first class constraints $\xi_{kk} =0$ for all $k$.
The $q_i, p_i$ $(i=1,\dots, n)$ form canonical pairs with respect to  the reduced first Poisson bracket,
and they Poisson commute with the $\T^n$ invariant functions of $\xi$.
These statements are proved in \cite{FeNon,FP1}, and will be generalized in the next subsection.
\end{remark}

The combined message of Theorem \ref{Th:4.3} and Proposition \ref{Prop:3.8}  is summarized in the next corollary.

\begin{corollary}\label{Cor:4.5}
The restrictions of the Hamiltonians $h_k$ \eqref{redhams} on $\fM^\reg_{0,-}$,
given by $\cH_k(e^q, J^-) = \frac{1}{k} \tr( (J^-)^k)$, together with
the compatible Poisson brackets of Theorem \ref{Th:4.3} reproduce the bi-Hamiltonian vector fields
\eqref{cV} of the spin Sutherland hierarchy.  By using the parametrization \eqref{J-par},
$\cH_2(e^q, J^-)$  turns into\footnote{This justifies calling the system `spin Sutherland hierarchy'.}
 $\cH_{\mathrm{spin}-1}$ \eqref{spinSuth1}.
 The Hamiltonians $\tilde h_k$ \eqref{redhams} vanish
 on $\fM^\reg_{0,-}$.
\end{corollary}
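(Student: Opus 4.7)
The plan is to assemble the three assertions of the corollary from Theorem \ref{Th:4.3} and Proposition \ref{Prop:3.8}, together with one short direct computation at the end. There is no substantive obstacle, since the corollary is a synthesis statement whose ingredients have all been prepared in the previous sections.

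First, I would observe that on $\fM_{0,-}^\reg$ the matrix $J=J^-$ is Hermitian, so every power $(J^-)^k$ is Hermitian and hence $\tr((J^-)^k)\in\R$. This immediately gives
\begin{equation}
h_k\bigl|_{\fM_{0,-}^\reg}(e^q, J^-) = \tfrac{1}{k}\tr\bigl((J^-)^k\bigr) = \cH_k(e^q,J^-),
\qquad
\tilde h_k\bigl|_{\fM_{0,-}^\reg} \equiv 0,
\end{equation}
which settles the last assertion and identifies the restricted Hamiltonians with the $\cH_k$.

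Next, Theorem \ref{Th:4.3} exhibits $\fM_{0,-}^\reg$ as a Poisson submanifold for both reduced brackets, so the Hamiltonian vector field of $\cH_k$ relative to $\{\cdot,\cdot\}_{i,-}^\red$ is simply the restriction to $\fM_{0,-}^\reg$ of the Hamiltonian vector field of $h_k$ relative to $\{\cdot,\cdot\}_i^\red$. These restrictions have already been computed in Proposition \ref{Prop:3.8} and are precisely the vector fields $\cV_k^i$ of \eqref{cV}. The recursion $\{\,\cdot\,,h_k\}_2^\red = \{\,\cdot\,,h_{k+1}\}_1^\red$ from Proposition \ref{Th:3.6} therefore descends to $\cV_k^2 = \cV_{k+1}^1$, yielding the full bi-Hamiltonian hierarchy on $\fM_{0,-}^\reg$.

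For the final identification $\cH_2 = \cH_{\mathrm{spin}-1}$, I would substitute \eqref{J-par} into
\begin{equation}
\cH_2(e^q, J^-) = \tfrac{1}{2}\tr\bigl((J^-)^2\bigr) = \tfrac{1}{2}\sum_{i,j} (J^-)_{ij}(J^-)_{ji}.
\end{equation}
Using $\xi_{ii}=0$, the anti-Hermiticity $\xi_{ji} = -\overline{\xi_{ij}}$, and the identity $\sinh(q_j-q_i) = -\sinh(q_i-q_j)$, the diagonal part reproduces $\tfrac{1}{2}\sum_i p_i^2$ while the off-diagonal contributions pair up into $\sum_{i<j}|\xi_{ij}|^2/\sinh^2(q_i-q_j)$, matching \eqref{spinSuth1} exactly. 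The only step requiring actual writing is this brief sign-bookkeeping; everything else is an invocation of Theorem \ref{Th:4.3} and Proposition \ref{Prop:3.8}.
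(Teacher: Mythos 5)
Your proposal is correct and follows exactly the route the paper intends: the corollary is stated there without a separate proof, precisely as the combination of Theorem \ref{Th:4.3} and Proposition \ref{Prop:3.8}, and the details you supply (reality of $\tr((J^-)^k)$ for Hermitian $J^-$, the Poisson-submanifold restriction of the Hamiltonian vector fields, and the sign bookkeeping $\xi_{ij}\xi_{ji}=-|\xi_{ij}|^2$ in evaluating $\frac{1}{2}\tr((J^-)^2)$) are the ones the paper leaves implicit.
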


Finally, let us observe that $J^+$ and $J^-$ appear rather symmetrically in the formula of Lemma \ref{lem:4.2}. In particular,
$\cE_h^2[J^-] = 0$ holds
after restriction to  the submanifold $\fM^\reg_{0,+}$ \eqref{fM+}.
Therefore we can restrict the second reduced Poisson bracket on this submanifold.
Moreover, $n$ out of the $2n$ commuting Hamiltonians \eqref{redhams} survives this restriction,
in correspondence to the vector fields in \eqref{cU}.
It can be verified that the substitution $J^+ = \ri J^-$, with the new variable $J^-$,
converts the restricted Poisson bracket
on $C^\infty(\fM^\reg_{0,+})^{\T^n}$ into the restricted second Poisson bracket on
$C^\infty(\fM^\reg_{0,-})^{\T^n}$.
This means that we do not obtain anything new from this restriction, and hence we omit
its more detailed description.

\subsection{The general case: Sutherland model with two spins}

We now explain how the generalized spin Sutherland Hamiltonian \eqref{spinSuth2} arises from our reduced system.
For this purpose, we take $(e^q, J)\in \fM_0^\reg$ and (applying \eqref{Gpm}) define the new variables
\be
\xi^l:= - J^+, \quad
\xi^r:= (e^{-q} J e^q)^+,
\quad
p_k:= J^-_{kk},\,\, k=1,\dots, n.
\label{xilrdef}\ee
We observe that the pair $(\xi^l, \xi^r)$ obeys the constraints
\be
\xi^l_{kk} + \xi^r_{kk} =0,
\qquad
k=1,\dots, n.
\label{xilr0}\ee

\begin{lemma}\label{lem:4.6}
The matrix $J$ can be reconstructed from $(q,p, \xi^l, \xi^r)$ defined by \eqref{xilrdef}
according to the formula
\be
J_{ij} = p_i \delta_{ij} - (1 - \delta_{ij}) \left( \coth(q_i -q_j) \xi^l_{ij} + \xi^r_{ij}/\sinh(q_i-q_j)\right) - \xi^l_{ij},\qquad\forall
1\leq i,j\leq n.
\label{Jpar}\ee
This expression provides a parametrization of $\fM_0^\reg$ \eqref{fMreg0}  by the variables $(q,p, \xi^l, \xi^r)$, where  $q_1 > q_2 >\dots > q_n$, the
$p_k \in \R$ are arbitrary and $(\xi^l, \xi^r) \in \u(n) \oplus \u(n)$ is subject to the constraints \eqref{xilr0}.
The residual gauge transformations act on $(q,p, \xi^l, \xi^r)$   according to
\be
(q,p, \xi^l, \xi^r) \mapsto (q,p, \tau \xi^l \tau^{-1}, \tau \xi^r \tau^{-1}),
\quad
\forall
\tau \in \T^n.
\label{xitrans}\ee
In terms of these variables, the reduced Hamiltonian coming from $H_1$ in \eqref{HamsonfM} becomes $\Re \tr(J) = \sum_{k=1}^n p_k$, while
$H_2 = \frac{1}{2} \Re \tr(J^2)$ yields the generalized spin Sutherland Hamiltonian $\cH_{\mathrm{spin}-2}$ \eqref{spinSuth2}
displayed in the Introduction.
\end{lemma}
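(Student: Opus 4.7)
The proof is a direct verification organized around three tasks: first, to check that \eqref{Jpar} reconstructs $J$ from the data $(q,p,\xi^l,\xi^r)$ defined by \eqref{xilrdef}; second, to verify the constraint \eqref{xilr0}, the bijectivity of the parametrization, and the gauge action \eqref{xitrans}; third, to compute the two named Hamiltonians in the new variables.

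For the first two tasks, I would employ the decomposition $J = J^- + J^+$ into Hermitian and anti-Hermitian parts from \eqref{ortog}. The anti-Hermitian contribution $J^+_{ij} = -\xi^l_{ij}$ is immediate from the definition of $\xi^l$. For the Hermitian part, I would expand $\xi^r = (e^{-q}Je^q)^+$ entry-wise, using that $(e^{-q}Je^q)_{ij} = e^{q_j - q_i} J_{ij}$ and that Hermitian conjugation fixes $J^-$ and negates $J^+$. A short calculation then produces
\[
\xi^r_{ij} = -\sinh(q_i - q_j)(J^-)_{ij} + \cosh(q_i - q_j)\, J^+_{ij}
\]
for all $i,j$, which at $i=j$ gives $\xi^r_{ii} = J^+_{ii} = -\xi^l_{ii}$, exactly the constraint \eqref{xilr0}. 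Substituting $J^+_{ij} = -\xi^l_{ij}$ into the off-diagonal relation and solving for $(J^-)_{ij}$, then adding $J^+_{ij}$, reproduces \eqref{Jpar}. Bijectivity follows at once because the maps in both directions are explicit. The gauge rule \eqref{xitrans} is checked by noting that any $\tau \in \T^n$ is diagonal, so commutes with $e^q$ and with the similarity $J \mapsto e^{-q}Je^q$; is unitary, so commutes with the $(\cdot)^\pm$ projections; and satisfies $|\tau_{kk}|=1$, so leaves $p_k = J^-_{kk}$ invariant.

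For the Hamiltonians, $\Re\tr(J) = \sum p_i$ is immediate since $\xi^l \in \u(n)$ forces $\xi^l_{ii}$ to be purely imaginary. For the quadratic case I would expand $\tr(J^2) = \tr((J^-)^2) + 2\tr(J^- J^+) + \tr((J^+)^2)$ and observe that $\tr(J^- J^+)$ is purely imaginary (it equals minus its complex conjugate), so
\[
\Re\tr(J^2) = \sum_i \bigl(p_i^2 - |\xi^l_{ii}|^2\bigr) + \sum_{i\neq j}\Bigl(|(J^-)_{ij}|^2 - |\xi^l_{ij}|^2\Bigr).
\]
Inserting the explicit formula for $(J^-)_{ij}$ and using $\cosh^2 - \sinh^2 = 1$ together with the anti-Hermiticity relation $\xi^l_{ji} = -\overline{\xi^l_{ij}}$ expresses the summand as a rational combination of $|\xi^l_{ij}|^2$, $|\xi^r_{ij}|^2$ and $\Re(\xi^r_{ij}\xi^l_{ji})$ with hyperbolic coefficients. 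The last step is to rewrite the coefficient of the cross term using $2\cosh^2(x/2) = 1 + \cosh x$ and $\sinh x = 2\sinh(x/2)\cosh(x/2)$ so that the $1/\sinh^2((q_i-q_j)/2)$ contribution in \eqref{spinSuth2} appears.

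I do not expect a conceptual obstacle; the main difficulty is bookkeeping. Signs must be tracked carefully under the relations $\xi^{l,r}_{ji} = -\overline{\xi^{l,r}_{ij}}$; the identity $\Re(\xi^r_{ij}\xi^l_{ji}) = \Re(\xi^r_{ji}\xi^l_{ij})$ is needed to pass from $\sum_{i\neq j}$ to $2\sum_{i<j}$; and the half-angle identities must be combined in the correct order to produce the two-spin potential in its stated form.
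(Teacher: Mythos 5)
Your proposal is correct and follows essentially the same route as the paper's proof: decompose $J=J^++J^-$, read off $\xi^r_{ij}=\cosh(q_i-q_j)J^+_{ij}-\sinh(q_i-q_j)J^-_{ij}$ from conjugation by $e^q$ (which gives the constraint \eqref{xilr0} at $i=j$), solve for $J^-_{ij}$, and evaluate the traces using the half-angle identity, which the paper invokes in the equivalent form $2\cosh x/\sinh^2 x=1/\sinh^2(x/2)-2/\sinh^2 x$. One remark on the sign bookkeeping you rightly flag: carrying your computation through with the conventions \eqref{xilrdef} and \eqref{Jpar} produces the cross terms $+2\Re(\xi^r_{ij}\xi^l_{ji})/\sinh^2(q_i-q_j)-\Re(\xi^r_{ij}\xi^l_{ji})/\sinh^2((q_i-q_j)/2)$, i.e.\ with signs opposite to those printed in \eqref{spinSuth2}, which appears to be a discrepancy internal to the paper rather than a flaw in your argument.
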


\begin{proof}
By decomposing $J$ as $J=J^+ + J^-$ using \eqref{Gpm}, we can write
\be
(e^{-q} J e^q)_{ij} = (\cosh(q_i - q_j) J^+_{ij} - \sinh(q_i - q_j) J^-_{ij}) + (\cosh(q_i - q_j) J^-_{ij} - \sinh(q_i -q_j) J^+_{ij}).
\ee
Since the first two terms give the anti-Hermitian part $(e^{-q} J e^q)^+$, we obtain from \eqref{xilrdef}
\be
\xi^r_{ij} = -\cosh(q_i - q_j) \xi^l_{ij} - \sinh(q_i - q_j) J^-_{ij}.
\ee
This relation can be solved for $J^-_{ij}$ as
\be
J^-_{ij} = - \coth(q_i -q_j) \xi^l_{ij} -\xi^r_{ij}/\sinh(q_i-q_j), \quad \hbox{if}\quad i\neq j,
\ee
and $J_{ii}^- = p_i$ by definition.
Thus we have derived  \eqref{Jpar} and its easy to see that this yields a smooth bijection between
$(e^q, J)\in \fM^\reg_0$ and the set of quadruplets $(q, p, \xi^l, \xi^r)$ satisfying the conditions
stated by the lemma.
The transformation rule \eqref{xitrans} is equivalent to \eqref{Jtrans}.
 The forms of $\Re \tr(J)$ and $\frac{1}{2} \Re \tr(J^2)$ then follow
by straightforward evaluation.  To obtain the formula \eqref{spinSuth2}, one applies the identity
$2 \cosh x /\sinh^2 x = 1/\sinh^2(x/2) - 2/\sinh^2 x$.
\end{proof}

It turns out that under the reduced first Poisson bracket \eqref{redPB1} the $q_i, p_i$ form canonically
conjugate pairs, the Poisson brackets of the $\T^n$ invariant functions of $\xi^l, \xi^r$
are governed by the Lie--Poisson bracket of $\u(n) \oplus \u(n)$ reduced by the constraints
\eqref{xilr0}, and these two sets of variables decouple  under $\{\ ,\ \}_1^\red$.
This result can be obtained as a consequence of the symplectic reduction approach
adopted in \cite{FP2}. Alternatively,   we can directly perform the required  change of variables
in the formula \eqref{redPB1}.
 In order to make this paper self-contained, we present
the second method,   but relegate all computational details to Appendix \ref{app:B}.
Incidentally, we have verified that the two methods give the same result, which provides
an excellent check on our considerations.

Thus,  by using \eqref{Jpar},  we
parametrize $\fM_0^\reg$ by the quintets of variables
\be
(q,p, \xi^l_\perp, \xi^r_\perp, \xi_0)
\quad\hbox{where}\quad \xi_0\in \cG^+_0 \quad\hbox{and}\quad \xi^l= \xi^l_\perp + \xi_0,\,\,
\xi^r = \xi^r_\perp - \xi_0.
\label{5var}\ee
For any smooth, real function $F$ of these variables, we have the `partial gradients'
\be
d_{\xi^l_\perp} F \in \cG^+_\perp, \quad
d_{\xi^r_\perp}F \in \cG^+_\perp,\quad
d_{\xi_0} F \in \cG^+_0,
\label{5der}\ee
which are defined in the natural manner using the restriction of the pairing \eqref{E1} to $\cG^+_\perp$ and to
$\cG^+_0$ (remember that $\cG^+ = \u(n)$).
For arbitrary $\T^n$ invariant functions $f,h$ of the old variables $(e^q, J)$ we write
\be
f(e^q, J) = F(q,p, \xi^l_\perp, \xi^r_\perp, \xi_0)
\quad\hbox{and}\quad
h(e^q, J) = H(q,p, \xi^l_\perp, \xi^r_\perp, \xi_0),
\ee
and calculate $\{F,H\}_1^\red$ from the identity
\be
\{ F,H\}_1^\red(q,p,\xi^l_\perp, \xi^r_\perp, \xi_0) :=  \{f,h\}_1^\red(e^q,J).
\ee

\begin{proposition}\label{Prop:4.8}
Let $F,H$ be $\T^n$ invariant smooth functions on $\fM_0^\reg$,
parametrized  by the variables
\eqref{5var} that transform according to
\be
(q,p,\xi^l_\perp, \xi^r_\perp, \xi_0) \mapsto (q,p, \tau \xi^l_\perp \tau^{-1},
\tau \xi^r_\perp \tau^{-1}, \xi_0), \qquad \forall \tau \in \T^n.
\ee
In terms of these variables,  the first reduced Poisson bracket \eqref{redPB1} can be written as
\bea
&& \{F,H\}_1^\red(q,p, \xi^l_\perp, \xi^r_\perp, \xi_0)=
\sum_{i=1}^n \left( \frac{\partial F}{\partial q_i} \frac{\partial H}{\partial p_i}
- \frac{\partial H}{\partial q_i} \frac{\partial F}{\partial p_i}\right)
\label{2spinPB}\\
&& +
 \langle \xi^l_\perp + \xi_0, [ d_{\xi^l_\perp} F +
  d_{\xi_0} F , d_{\xi^l_\perp} H +  d_{\xi_0} H] \rangle
 + \langle \xi^r_\perp - \xi_0, [ d_{\xi^r_\perp} F
 , d_{\xi^r_\perp} H  ] \rangle.
\nonumber\eea
 \end{proposition}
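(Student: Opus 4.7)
The strategy is to perform a direct change of variables in the formula \eqref{redPB1} of Theorem \ref{Prop:3.4}, expressing all ingredients in terms of the quintet $(q,p,\xi^l_\perp,\xi^r_\perp,\xi_0)$ introduced in \eqref{5var}. Since Lemma \ref{lem:4.6} establishes that \eqref{Jpar} gives a diffeomorphism between $\fM_0^\reg$ and the set of admissible quintets, the computation is purely a chain-rule exercise; the only question is whether the resulting expression simplifies to the clean decoupled form asserted in \eqref{2spinPB}.

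The first step is to translate the derivatives appearing on the right-hand side of \eqref{redPB1}. For $d_2 f(e^q,J)$, which is the derivative at fixed $g=e^q$, one differentiates \eqref{Jpar} with respect to $p, \xi^l, \xi^r$ and reads off that $d_2 f$ decomposes into diagonal (contributing via $\partial F/\partial p_i$) and off-diagonal Hermitian and anti-Hermitian parts, each carrying explicit $\coth(q_i-q_j)$ and $1/\sinh(q_i-q_j)$ factors inherited from \eqref{Jpar}. In particular $(d_2 f)^+_\perp$ and $(d_2 f)^-_\perp$ each become linear combinations of $d_{\xi^l_\perp} F$ and $d_{\xi^r_\perp} F$ with hyperbolic-function coefficients, while $(d_2 f)^+_0 = d_{\xi_0} F$. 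For $\nabla_1 f(e^q,J)$ one must be careful: this derivative is taken at fixed $J$, whereas $\partial F/\partial q_i$ is computed at fixed $(\xi^l,\xi^r,p)$; the two differ by an implicit contribution coming from the $q$-dependence of $\xi^r = (e^{-q} J e^q)^+$ in \eqref{xilrdef}, which produces an extra term quadratic in $\xi^l,\xi^r$ proportional to $[q, \cdot\,]$-type expressions.

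Substituting these expressions into \eqref{redPB1}, the proof then reduces to algebraic simplification. The first two terms of \eqref{redPB1}, which involve $(d_2 h)^-_0$ and $(d_2 f)^-_0$, will produce the canonical pairing $\sum_i(\partial_{q_i}F\,\partial_{p_i}H - \partial_{q_i}H\,\partial_{p_i}F)$ plus correction terms. The Lie-Poisson brackets for $\xi^l$ and $\xi^r$ should materialize from the third line of \eqref{redPB1}, namely $\langle J^+,[\,(d_2 f)^-,(d_2 h)^-]-[\,(d_2 f)^+,(d_2 h)^+]\rangle$, because $J^+ = -\xi^l$ and the Hermitian part of $J$ repackages into $\xi^r$ via hyperbolic identities. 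The middle terms involving $R(q)$ are where the delicate cancellations must occur: the $\coth$ weights in $R(q)$ must conspire with the $\coth$ and $1/\sinh$ factors built into $d_2 f$ to wipe out all explicit $q$-dependence in the spin sector, leaving only the two uncoupled Lie-Poisson brackets on $\u(n)\oplus\u(n)$.

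The main obstacle is precisely this last simplification. One has to exploit the hyperbolic identity $\coth x \coth y - 1/(\sinh x \sinh y) = \coth(x+y)(\coth x + \coth y) - \text{etc.}$, together with the antisymmetry \eqref{Rasym} of $R(q)$, to collapse a large sum of $q$-dependent quartic contributions in $(\xi^l,\xi^r)$ into the purely algebraic Lie-Poisson pairings $\langle \xi^l,[d_{\xi^l}F,d_{\xi^l}H]\rangle$ and $\langle \xi^r,[d_{\xi^r_\perp}F,d_{\xi^r_\perp}H]\rangle$. A useful cross-check, as the author indicates, is to compare the answer with the independent derivation via symplectic reduction along the lines of \cite{FP2}: both methods must yield the same bracket, and agreement confirms that no sign or combinatorial factor has been lost. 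Computational details are collected in Appendix B, while the final form \eqref{2spinPB} exhibits the anticipated structure: the $(q_i,p_i)$ form Darboux pairs, decoupled from the spins, and the two spin variables $(\xi^l,\xi^r)$ carry the Lie--Poisson bracket of $\u(n)\oplus\u(n)$ reduced by the first-class constraints \eqref{xilr0}, with $\xi_0$ acting as the diagonal Lagrange multiplier entering both summands with opposite signs.
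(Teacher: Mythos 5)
Your strategy coincides with the paper's: Lemma \ref{lem:4.6} supplies the change of variables, and the proof consists of substituting the new-variable expressions for $d_2 f$ and $\nabla_1 f$ into \eqref{redPB1} and simplifying with hyperbolic identities, which is exactly what the paper does in its appendix. As written, however, your argument is a roadmap rather than a proof — the entire content of the proposition is the computation you defer — and two of your predictions about how the terms organize are inaccurate, which would matter the moment you tried to carry it out. First, the correction relating $\nabla_1 f$ to $d_q F$ is not ``quadratic in $\xi^l,\xi^r$'': applying the chain rule to $\xi^r=(e^{-q}Je^q)^+$ at fixed $J$ gives a term \emph{linear} in the spins and linear in $d_{\xi^r_\perp}F$, namely $\nabla_1 f = d_q F - \bigl[R(q)S(\Q)^{-1}\xi^r_\perp + S(\Q)^{-2}\xi^l_\perp,\, S(\Q)\, d_{\xi^r_\perp}F\bigr]^-_0$ with $S(\Q)=\sinh(\ad_q)$, and this piece must cancel against $F_p$-dependent contributions generated by the $R(q)$ terms of \eqref{redPB1} (this is the content of Lemma \ref{lem:C.8}). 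Second, the $R(q)$ terms do not simply ``wipe out'' leaving the third line of \eqref{redPB1} to produce the spin brackets: the $\xi^r$ Lie--Poisson term $\langle \xi^r,[d_{\xi^r_\perp}F,d_{\xi^r_\perp}H]\rangle$ arises precisely from $\langle R(q)[(d_2 f)^-,J^-],(d_2 h)^-\rangle$ (Lemma \ref{prop:C.8}), while the $\xi^l$ bracket emerges only after combining the remaining $R(q)$ contributions with the $\langle J^+,\cdot\rangle$ terms (Lemma \ref{lem:C.9}). So the plan is the right one and would succeed, but the decisive cancellations occur in different places than you anticipate, and the proof is complete only once they are exhibited — either by the explicit computation of Appendix \ref{app:B} or by the alternative symplectic-reduction argument of \cite{FP2} that you correctly cite as a cross-check.
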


\begin{remark}\label{Rem:4.9}
The $\T^n$ invariance of $H$ implies that
\be
\langle \xi^l_\perp, [d_{\xi_0} F, d_{\xi^l_\perp} H]\rangle +
\langle \xi^r_\perp, [d_{\xi_0} F, d_{\xi^r_\perp} H]\rangle =0.
\label{Tinv1}\ee
By using this identity, and its counterpart with $F$ and $H$ exchanged, one may write the terms containing $d_{\xi_0}$ in the second line
of \eqref{2spinPB} in many alternative ways.
Upon imposing the constraint $\xi^l =0$, and putting $\xi:= \xi^r_\perp$, $J$ in \eqref{Jpar} reproduces $J^-$ in \eqref{J-par}.
Setting also all derivatives with respect to $\xi_0$ and $\xi^l_\perp$ to zero, the Poisson bracket
\eqref{2spinPB} reproduces the first reduced Poisson bracket of the spin Sutherland model
\eqref{spinSuth1} described in Remark \ref{Rem:4.4}. Of course,  by setting $\xi^r=0$ instead of $\xi^l=0$ one reaches the same model.
\end{remark}

\begin{remark} \label{Rem:4.10}
Let us recall from Remark \ref{Rem:2.4} that the master system behind
the spin Sutherland model \eqref{spinSuth2} is a degenerate integrable system.
It is known \cite{J,Zung} that degenerate integrability
is generically preserved under Hamiltonian reduction.
The degenerate integrability of the model \eqref{spinSuth2} on generic symplectic leaves
of the first Poisson structure follows as a special case of  results of \cite{Res3}.
It would be nice to enhance these  results by explicitly exhibiting
the required number of independent constants of motion.
We here restrict ourselves to displaying a large
number of $\U \times \U$ invariant elements of the ring $\fA$ of unreduced constants
of motion, discussed in Remark \ref{Rem:2.4}, which descend to constants of motion
of the reduced system.
Namely, let $P$ be
an arbitrary product of non-negative powers of $J^+$ and $J^-$. That is, $P$ has the form
\be
P= (J^+)^{k_1} (J^-)^{k_2} (J^+)^{k_3} (J^-)^{k_4}\cdots ,
\ee
with non-negative integers $k_1$, $k_2$, $k_3$, $k_4$ etc.
On account of the transformation rules \eqref{cR} and \eqref{cL},
the real and imaginary parts of $\tr P$ are $\U\times \U$ invariant
elements of $\fA$.
Similarly,  using \eqref{tJdef}, one obtains $\U \times \U$ invariant unreduced constants of motion
by taking the trace of an arbitrary product of powers of
${\tilde J}^+$ and ${\tilde J}^-$.
Degenerate integrable systems are also known to be Liouville integrable under very general conditions
\cite{J}.
Considering the restriction of the model to generic symplectic leaves,  a construction
of sufficient number of constant of motion in involution can be found
in \cite{KLOZ}.
\end{remark}

\section{Conclusion}
\setcounter{equation}{0}

We here introduced a bi-Hamiltonian hierarchy 
on the cotangent bundle of the real Lie group $\GL$ and
analyzed its quotient with respect to the symmetry group $\U \times \U$.
We described the form of the compatible reduced Poisson brackets (Theorems \ref{Prop:3.4}
and  \ref{Prop:3.5}) as well as the bi-Hamiltonian vector fields
generated by the commuting reduced Hamiltonians (Propositions \ref{Th:3.6} and
\ref{Prop:3.8}).
We found that the restriction of the reduced bi-Hamiltonian hierarchy to a joint
Poisson subspace of its two Poisson brackets reproduces the hyperbolic spin Sutherland
hierarchy associated with the Hamiltonian \eqref{spinSuth1}.
In the general case, the reduced system was identified as a Sutherland model coupled
to two spin variables  according to the Hamiltonian \eqref{spinSuth2}.
Thus the commuting flows of the model that are generated by the spectral invariants
of the Lax matrix \eqref{Jpar} all admit bi-Hamiltonian description, which may be considered
as our main result.

The spin Sutherland interpretation arose from using
the variables $(q,p,\xi^l, \xi^r)$ instead of $(e^q,J)$, in which the first reduced
Poisson bracket takes the form displayed in Proposition \ref{Prop:4.8}.
It is in principle possible to present also the second Poisson bracket in these variables,
but the resulting formulae are not expected to have a transparent  structure.
At least on the subspace where $J$ is Hermitian and positive definite,  it should be possible
to construct an alternative parametrization that would allow $\tr(J)$ to be interpreted
as a spin Ruijsenaars--Schneider Hamiltonian.  In the corresponding trigonometric case,
such a  change of variables is known \cite{FeLMP}, and it permits one to recover the
spinless trigonometric Ruijsenaars--Schneider model on a special symplectic leaf of
the reduced second Poisson bracket. Thus we suspect that the spinless hyperbolic Ruijsenaars--Schneider
model should be found on a symplectic leaf of the second Poisson structure described in
Theorem \ref{Th:4.3}. However, we do not know how to characterize the symplectic leaves
of this Poisson structure. We encountered difficulties when trying to find them by
`analytic continuation' from the trigonometric to the hyperbolic case.
This poses a very interesting open problem for future work.
It is worth mentioning that, at least to our knowledge,
no derivation of the
the real, hyperbolic Ruijsenaars--Schneider model by Hamiltonian reduction is known at present,
as opposed to the real trigonometric model and its complex holomorphic counterpart,
for which several reduction treatments are available \cite{A,CF1,FK2,FR,Ob}.
It would be important to construct such a derivation,
and finding the symplectic leaves of the Poisson bracket \eqref{-PB2} could help to resolve this long-standing
 conundrum.

\bigskip
\bigskip
\begin{acknowledgements}
I wish to thank Maxime Fairon for several useful remarks on the manuscript.
This work was supported in part by the NKFIH research grant K134946.
\end{acknowledgements}

\appendix
\section{An explanation of the second Poisson bracket}\label{app:A}
\setcounter{equation}{0}

We below outline how the second Poisson bracket \eqref{PB2} arises by a change
of variables from Semenov-Tian-Shansky's Poisson bracket \cite{STS} on the Heisenberg double $G\times G$.
We will be brief since this explanation closely follows the appendix in \cite{FeAHP}.
However, note that in \cite{FeAHP} we considered holomorphic complex functions, while
 now we deal with real smooth functions.

We begin  by introducing a non-degenerate, invariant
bilinear form on the real Lie algebra $\cG \oplus \cG$ by
\be
\langle (X_1,X_2), (Y_1, Y_2) \rangle_2 :=
\langle X_1, Y_1 \rangle - \langle X_2, Y_2 \rangle,
\label{G1}\ee
where $(X_1,X_2)$ and $(Y_1,Y_2)$ are from $\cG \oplus \cG$, and \eqref{E1} is applied.
It is not difficult to see that $\cG \oplus \cG$
is the vector space direct sum of the
subalgebras
\be
\cG^\delta := \{ (X,X)\mid X \in \cG\}
\label{G2}\ee
and
\be
\cG^*:= \{ (r_+(X), r_-(X))  \mid X  \in \cG\}.
\label{G3}\ee
Recall that $r_\pm$ are defined in \eqref{rpm}, i.e., $r_+(X) = X_> + \half X_0$ and
$r_-(X) = - X_< - \half X_0$ for $X$ written as in \eqref{E2}
These are isotropic subalgebras, meaning that the bilinear form \eqref{G1} vanishes on them separately.
After identifying $\cG$ with $\cG^\delta$, we  can use the bilinear form
to take $\cG^*$ \eqref{G3} as a model of the dual space of $\cG$,
which explains the notation.

Let us define the linear operator $\rho$ on $\cG \oplus \cG$ by
\be
\rho:= \frac{1}{2} \left( P_{\cG^\delta} - P_{\cG^*}\right)
\ee
 using the projections
 $P_{\cG^\delta}$ onto $\cG^\delta$ and $P_{\cG^*}$  onto $\cG^*$
 associated with the vector space direct sum $\cG \oplus \cG = \cG^\delta + \cG^*$.
 It features in two well-known \cite{STS} Poisson brackets on $C^\infty(G\times G,\R)$.
  For $\cF\in C^\infty(G\times G, \R)$,
  the $\cG \oplus \cG$-valued left- and right-derivatives are determined by
\bea
&&\langle  \cD \cF(g_1,g_2), (X_1, X_2) \rangle_2: = \dt \cF(e^{t X_1} g_1, e^{t X_2} g_2),\nonumber\\
&& \langle \cD' \cF(g_1,g_2), (X_1, X_2)\rangle_2: = \dt \cF(g_1 e^{t X_1} , g_2 e^{t X_2}),
\label{G6}\eea
where $t\in \R$, $(X_1,X_2)$ runs over $\cG \oplus \cG$ and $(g_1,g_2)\in G\times G$.
With these notations, the two Poisson brackets read
\be
\{\cF, \cH\}_\pm := \langle \cD \cF, \rho \cD \cH \rangle_ 2 \pm  \langle \cD' \cF, \rho \cD' \cH \rangle_ 2.
\label{PBpm}\ee
 The minus bracket is called the Drinfeld double bracket, and the plus one the Heisenberg double bracket.
 The former makes $G\times G$ into a Poisson--Lie group,
 and the latter is symplectic in a neighbourhood of the identity \cite{STS}.

Now we introduce new variables in a neighbourhood of  $(\1_n, \1_n)\in G \times G$.
We need the connected subgroups of $G$ corresponding to the subalgebras in \eqref{cGdec1}.
These are denoted $G_>$,   $G_<$ and $G_0$, respectively, where $G_>$
contains the upper triangular
complex matrices in $G$ having $1$ in their diagonal entries, and
  $G_0$ contains the diagonal matrices in $G$.
Then the connected Lie subgroups of $G\times G$ associated
with the subalgebras $\cG^\delta$ \eqref{G2} and $\cG^*$ \eqref{G3} are
\be
G^\delta := \{ g_\delta  \mid
g_\delta:= (g,g),\, g \in G\}
\label{G4}\ee
and
\be
G^*= \left\{  \eta_*  \mid \eta_*:=\left(\eta_> \eta_0, ( \eta_0 \eta_<)^{-1}\right),\,
\eta_> \in G_>,\, \eta_0\in G_0,\, \eta_< \in G_<\right\}.
\label{G5}\ee
Since $\cG \oplus \cG = \cG^\delta + \cG^*$,
 there exist open neighbourhoods of the identity in $G\times G$ whose elements can be factorized uniquely as
 \be
 (g_1, g_2) = g_{\delta L} \eta_{*R}^{-1} = \eta_{*L} g_{\delta R}^{-1},
 \label{G8}\ee
 where $(g_{\delta L}, \eta_{*R})$ and $(g_{\delta R}, \eta_{*L})$ vary in corresponding open sets
 around the identity in $G^\delta \times G^*$.
 Let us write
 \be
 g_{\delta R} = (g,g)
 \quad\hbox{and}\quad
 \eta_{*R} = \left(\eta_> \eta_0, ( \eta_0 \eta_<)^{-1}\right),
 \ee
 and use the factorizations \eqref{G8} to define the map
 \be
 \psi: (g_1, g_2) \mapsto (g, J) \quad \hbox{with}\quad J:= \eta_> \eta_0^2 \eta_<.
 \label{psi}\ee
By suitably choosing its domain, $\psi$ gives a diffeomorphism between open
subsets of $G\times G$ containing $(\1_n,\1_n)$, but it is advantageous for us to regard its
image as a subset of $G\times \cG$.

The following statement can be verified by essentially the same calculations
that were presented in \cite{FeAHP}.

\begin{proposition}
Let $\cF$ and $\cH$ be smooth real functions on the domain of the local diffeomorphism $\psi$ \eqref{psi}.
Referring to \eqref{PBpm},  for $F:= \cF \circ \psi^{-1}$ and $H:= \cH \circ \psi^{-1}$ define
\be
\{ F , H\}_2 := \{ \cF, \cH\}_+ \circ \psi^{-1}.
\label{locPB2}\ee
Then $\{F,H\}_2$ \eqref{locPB2} yields a Poisson bracket for smooth functions defined locally around
$(\1_n, \1_n)\in G\times \cG$, and it has the explicit form displayed in \eqref{PB2}.
\end{proposition}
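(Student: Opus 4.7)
The plan is to verify \eqref{locPB2} by direct change of variables: since $\psi$ is a local diffeomorphism onto its image and $\{\,\cdot\,,\,\cdot\,\}_+$ is known to be a Poisson bracket on $G\times G$, its pull-back under $\psi^{-1}$ is automatically a Poisson bracket. Thus no separate check of the Jacobi identity is required, and the work reduces to matching the pulled-back bracket with the explicit formula \eqref{PB2}.

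The first technical step is to build a dictionary between the derivatives $\cD\cF,\cD'\cF \in \cG\oplus\cG$ used on the $(g_1,g_2)$ side and the derivatives $\nabla_1 F$, $\nabla_1' F$, $d_2 F \in \cG$ used on the $(g,J)$ side, for $F:=\cF\circ\psi^{-1}$. I would obtain it by differentiating the two factorizations $(g_1,g_2)=g_{\delta L}\eta_{*R}^{-1}=\eta_{*L}g_{\delta R}^{-1}$ at $t=0$ along the one-parameter curves in \eqref{G6}. An infinitesimal left translation by $(X_1,X_2)$ perturbs each of the four factors by amounts obtained by splitting an $\mathrm{Ad}$-translated copy of $(X_1,X_2)$ according to $\cG\oplus\cG=\cG^\delta+\cG^*$, and analogously for right translations (with no adjoint twist). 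The operators $r_\pm$ enter precisely here, since the embedding $\cG\hookrightarrow\cG^*$ is $Y\mapsto(r_+Y,r_-Y)$ while $\cG\hookrightarrow\cG^\delta$ is $Y\mapsto(Y,Y)$. Once this is done, the variation of $J=\eta_>\eta_0^2\eta_<$ is read off from the variations of the Gauss factors comprising $\eta_{*R}$, which gives the needed expressions for $\nabla_2 F$ and $\nabla_2' F$.

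With the dictionary in hand, I would substitute into
\begin{equation*}
\{\cF,\cH\}_+ = \langle \cD\cF,\rho\,\cD\cH\rangle_2 + \langle \cD'\cF,\rho\,\cD'\cH\rangle_2
\end{equation*}
and simplify, using that $\rho$ acts as $+\tfrac12$ on $\cG^\delta$ and $-\tfrac12$ on $\cG^*$, that both subspaces are isotropic for $\langle\,\cdot\,,\,\cdot\,\rangle_2$, and that the induced pairing between them reduces to \eqref{E1}. Collecting contributions of the same type, the four groups of terms visible in \eqref{PB2} should emerge: the purely $g$-dependent piece $\langle r\nabla_1 F,\nabla_1 H\rangle-\langle r\nabla'_1 F,\nabla'_1 H\rangle$ from the parts of $\cD\cF,\cD'\cF$ tangent to $G^\delta$, the $J$-quadratic piece from the parts tangent to $G^*$, and the two mixed terms.

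The main obstacle is the bookkeeping in this dressing computation: the $\mathrm{Ad}$-twists by $\eta_{*L}^{-1}$ and $g_{\delta L}^{-1}$ entangle the upper, diagonal and lower triangular Gauss pieces, and one must repackage the result into the $r_\pm$-form and into the specific derivatives $\nabla_2 F = J\,d_2 F$, $\nabla'_2 F = (d_2 F)\,J$. Fortunately, this is formally the same calculation as the one already carried out for the holomorphic case in \cite{FeAHP}; in the present real smooth setting one only needs to replace $\tr$ by $\Re\tr$ and holomorphic by smooth real functions, and check that each intermediate identity survives verbatim. Thus the bulk of the proof is to invoke \cite{FeAHP} and verify that the real analogue of each step is valid.
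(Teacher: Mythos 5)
Your proposal matches the paper's own treatment: the paper likewise observes that the pull-back of the Heisenberg double bracket under the local diffeomorphism $\psi$ is automatically a Poisson bracket, and disposes of the identification with \eqref{PB2} by stating that it follows from ``essentially the same calculations'' as in the holomorphic case of \cite{FeAHP}, with $\tr$ replaced by $\Re\tr$ and holomorphic functions by smooth real ones. Your sketch of the derivative dictionary via the two factorizations \eqref{G8} is exactly the computation being invoked, so the approach is the same, merely spelled out in slightly more detail.
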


The proposition guarantees that the Jacobi identity holds for the restriction
of the second Poisson bracket \eqref{PB2} to an open set around $(\1_n,\1_n)$.
This implies that \eqref{PB2} gives a Poisson bracket  on $C^\infty(\fM,\R)$, too.
Indeed, as determined by  \eqref{PB2},
the Poisson brackets of the coordinate functions on $\fM$ provided by the matrix elements \eqref{gcomp}
and the components of $J$ are real-analytic functions on $\fM$, and thus the Jacobi identity holds
for them globally since we know from the proposition that it holds on an open set.

\begin{remark}
Incidentally, by writing $\nabla_1 F(g,J) = g d_1 F(g,J)$, where $d_1 F$ is defined similarly to
\eqref{E5} using that $G$ is an open subset of $\cG$, one may extend both Poisson brackets
 \eqref{PB1} and \eqref{PB2}
to  $C^\infty(\cG \times \cG,\R)$ as well.
\end{remark}

 \begin{remark}
 For the aficionados of Poisson--Lie groups,
we remark that
$\u(n)^\delta < \cG^\delta$ has the property that its annihilator inside $\cG^*$ \eqref{G3}
with respect to the pairing \eqref{G1} is a Lie subalgebra of $\cG^*$.
By applying the general theory \cite{STS}, one can trace back the closure  statements
of Lemma \ref{Lem:3.1} and Lemma \ref{Lem:3.2} to this property.
We gave direct proofs of these lemmas, and thus there is no need to elaborate this point.
\end{remark}

\section{A remark on the proof of Lemma \ref{lem:4.2}}\label{app:details}
\setcounter{equation}{0}

The goal of this appendix is to help those readers who wish to go through the details of the last
step of the proof of Lemma \ref{lem:4.2},  which requires the demonstration of the following identity:
\bea
&&X:= \langle (\nabla_2 f)^-, (\nabla_2' h)^-\rangle  +\langle (\nabla_2' f)^+, (\nabla_2 h)^+\rangle
 -\cE(f,h)  \nonumber\\
&&\quad\,\,\, =  4 \langle (d_2 f)^+, (J^+ (d_2 h)^+ J^-)^+ \rangle + 4 \langle (d_2 f)^-, (J^- (d_2h)^- J^+)^- \rangle,
 \label{X1}\eea
 where $\cE(f,h)$ stands for the 2 terms obtained by exchanging $f$ and $h$.
We note that $X$ as defined above can be expressed in the alternative form
\be
X= \langle [ d_2 f,J]^+, (J d_2 h + d_2 h J)^+ \rangle + \langle J d_2 f, d_2 h J \rangle  - \cE(f,h).
\label{X2}\ee
 This was also used in the proof of Theorem \ref{Prop:3.5}, and one can easily check it.
When spelling out the expression \eqref{X2},
it is convenient to write
\be
f^\pm := (d_2 f)^\pm \quad \hbox{and} \quad h^\pm := (d_2 h)^\pm.
\ee
With this notation, we have
\bea
[d_2 f, J]^+ &=&  [f^+, J^+] + [f^-, J^-],\nonumber\\
( J d_2 h  +  d_2 h J)^- &=& (J^+ h^+ + h^+ J^+) + (J^- h^- + h^- J^-),\\
 ( J d_2 h  +  d_2 h J)^+ &=& (J^- h^+ + h^+ J^-) + (J^+ h^- + h^- J^+).
\eea
To verify the subsequent
statements,
one only needs to use the cyclic property of the trace form \eqref{E1} and that $\cG^+$ is perpendicular to $\cG^-$.

\begin{lemma}\label{lem:X1}
The following identity holds:
\bea
 \langle J d_2 f, d_2 h J \rangle  - \cE(f,h)&=&
 \langle [f^+, h^+] + [f^-, h^-], J^+ J^- + J^- J^+ \rangle
\\
&+&   \langle [f^+, h^-] + [f^-,h^+],  (J^-)^2 + (J^+)^2 \rangle. \nonumber
\eea
\end{lemma}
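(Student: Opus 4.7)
The plan is to reduce the identity to a single trace-form expression and then match parity components. First, by cyclicity of the trace form \eqref{E1}, I would observe that
\[
\langle J\, d_2 f, d_2 h\, J\rangle = \Re\tr\bigl(J\, d_2 f\, d_2 h\, J\bigr) = \Re\tr\bigl(J^2\, d_2 f\, d_2 h\bigr)= \langle J^2, d_2 f\, d_2 h\rangle.
\]
Subtracting the same expression with $f$ and $h$ exchanged collapses the anticommutator part and leaves
\[
\langle J\, d_2 f, d_2 h\, J\rangle - \cE(f,h) = \langle J^2, [d_2 f, d_2 h]\rangle.
\]

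Next, I would expand both factors according to the decomposition $\cG = \cG^+ \oplus \cG^-$ from \eqref{ortog}. Writing $J = J^+ + J^-$ gives
\[
J^2 = (J^+)^2 + (J^-)^2 + (J^+ J^- + J^- J^+).
\]
Using $(J^\pm)^* = \mp J^\pm$, a direct check shows that $(J^+)^2$ and $(J^-)^2$ are Hermitian, hence in $\cG^-$, while $J^+ J^- + J^- J^+$ is anti-Hermitian, hence in $\cG^+$. On the other side, expanding $[d_2 f, d_2 h] = [f^+ + f^-, h^+ + h^-]$ and invoking the commutation relations \eqref{comrel}, namely $[\cG^+,\cG^+]\subset \cG^+$, $[\cG^-,\cG^-]\subset \cG^+$ and $[\cG^+,\cG^-]\subset \cG^-$, the four commutators split as
\[
[f^+, h^+] + [f^-, h^-] \in \cG^+, \qquad [f^+, h^-] + [f^-, h^+] \in \cG^-.
\]

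The final step is to use the orthogonality of $\cG^+$ and $\cG^-$ with respect to $\langle\cdot,\cdot\rangle$. The $\cG^+$ component of $J^2$ pairs nontrivially only with the $\cG^+$ component of $[d_2 f, d_2 h]$, and similarly for $\cG^-$. This yields
\[
\langle J^2, [d_2 f, d_2 h]\rangle = \langle J^+ J^- + J^- J^+,\, [f^+, h^+] + [f^-, h^-]\rangle + \langle (J^+)^2 + (J^-)^2,\, [f^+, h^-] + [f^-, h^+]\rangle,
\]
which is exactly the claimed identity after using the symmetry of $\langle\cdot,\cdot\rangle$. There is no substantive obstacle here; the only point that deserves care is verifying the $\cG^\pm$ parities of the four summands of $J^2$ and of the four commutators, but these follow from the rules $(XY)^* = Y^* X^*$ and from \eqref{comrel}.
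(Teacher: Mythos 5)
Your proposal is correct and follows exactly the route the paper intends: the paper states that these identities are verified using only the cyclic property of the trace form and the orthogonality of $\cG^+$ and $\cG^-$, which is precisely your reduction to $\langle J^2,[d_2f,d_2h]\rangle$ followed by matching parities. The parity checks you flag (Hermiticity of $(J^\pm)^2$, anti-Hermiticity of $J^+J^-+J^-J^+$, and the commutator rules \eqref{comrel}) are all accurate, so there is no gap.
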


\begin{lemma}\label{lem:X2}  The following identity holds:
\bea
\langle [ d_2 f,J]^+, (J d_2 h + d_2 h J)^+ \rangle - \cE(f,h)& =& 2 \langle f^+,  J^+ h^+ J^- - J^- h^+  J^+\rangle \\
& + & 2  \langle f^-, J^- h^- J^+ - J^+ h^- J^- \rangle \nonumber \\
&+& \langle [h^+, f^+] + [h^-, f^-], J^+ J^- + J^- J^+ \rangle \nonumber\\
&+&  \langle [h^-, f^+]   + [ h^+, f^-], (J^-)^2 + (J^+)^2 \rangle. \nonumber
\eea
\end{lemma}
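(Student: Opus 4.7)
The plan is to expand both factors in $\langle [d_2 f, J]^+, (J d_2 h + d_2 h J)^+\rangle$ using $J = J^+ + J^-$, $d_2 f = f^+ + f^-$, and $d_2 h = h^+ + h^-$, then subtract the exchanged term $\cE(f,h)$, and finally collect the resulting trace monomials via cyclic invariance of $\Re\tr$ and the parity rules \eqref{comrel}.

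First, I would record the consequences of those parity rules for the two factors: since $[\cG^\pm, \cG^\pm]\subset \cG^+$ and $[\cG^+, \cG^-]\subset \cG^-$,
\[
[d_2 f, J]^+ = [f^+, J^+] + [f^-, J^-],
\]
while the analogous statement for anticommutators (equal parity gives $\cG^-$, opposite parity gives $\cG^+$) yields
\[
(J d_2 h + d_2 h J)^+ = (J^+ h^- + h^- J^+) + (J^- h^+ + h^+ J^-).
\]
Substituting these two identities into the inner product and expanding every commutator produces a sum of sixteen trace monomials of the form $\Re\tr(X_1 X_2 X_3 X_4)$, each with a definite sign inherited from $[X,Y] = XY - YX$.

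I would then organise these monomials according to the positions of the two $J$-factors after a suitable cyclic rotation, distinguishing the case in which the two $J$-factors become adjacent, producing a product of the form $(J^+)^2$, $(J^-)^2$, or $J^+ J^- + J^- J^+$, from the case in which the two $J$-factors remain separated by a sandwiched $h^\pm$- or $f^\pm$-factor. For the adjacent-$J$ monomials, the subtraction of $\cE(f,h)$ turns each antisymmetrised pair into a commutator $[h^{\epsilon_1}, f^{\epsilon_2}]$ paired with the relevant symmetric $J$-combination; a short case-by-case inspection matches these to the third and fourth lines of the identity. For the separated-$J$ monomials, subtracting $\cE(f,h)$ does not produce a commutator but rather doubles the coefficient while interchanging the order of the two $J$-factors across the sandwiched $h^\pm$, giving precisely $2\langle f^+, J^+ h^+ J^- - J^- h^+ J^+\rangle$ together with $2\langle f^-, J^- h^- J^+ - J^+ h^- J^-\rangle$, i.e.\ the first two lines of the claimed identity.

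The main obstacle is purely bookkeeping: tracking the signs produced by the convention $[X,Y] = XY - YX$ and arranging the sixteen monomials so that the commutator and non-commutator pieces fall into the expected pattern once $\cE(f,h)$ is subtracted. No further analytic input is required beyond the cyclicity of $\Re\tr$, the orthogonality of $\cG^+$ and $\cG^-$ under \eqref{E1}, and the parity rules \eqref{comrel}.
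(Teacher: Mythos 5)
Your proposal is correct and follows essentially the same route as the paper: Appendix~\ref{app:details} records exactly the decompositions $[d_2 f,J]^+=[f^+,J^+]+[f^-,J^-]$ and $(Jd_2h+d_2hJ)^+=(J^-h^++h^+J^-)+(J^+h^-+h^-J^+)$ and then verifies the identity by expanding and regrouping the resulting monomials using cyclicity of $\Re\tr$ and the orthogonality of $\cG^+$ and $\cG^-$. The only cosmetic difference is that in the equal-parity blocks (those producing $(J^\pm)^2$) the separated-$J$ monomials cancel in pairs already before subtracting $\cE(f,h)$, rather than doubling, but this does not affect your bookkeeping scheme.
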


Finally, the claimed formula \eqref{X1} follows by combining the statements displayed above.

\section{Proof of Proposition \ref{Prop:4.8}}\label{app:B}
\setcounter{equation}{0}

We introduce the notations
\be
S(x):= \sinh x,\quad C(x):= \cosh x,\quad \Q:= \ad_q,
\ee
and parametrize $J$ according to \eqref{Jpar}, i.e.,
\be
J = p - R(q) \xi^l_\perp -  S(\Q)^{-1} \xi^r_\perp - \xi^l,\qquad
\xi^l = \xi^l_\perp + \xi_0,
\label{Jparr}\ee
where $p=\diag(p_1,\dots, p_n)\in \cG^-_0$ and $R(q) = \coth \Q$ on $\cG_\perp$.
Note that $S(\Q)^{-1}$ denotes the inverse of the restriction of $S(\Q)$
on $\cG_\perp$, where $S(\Q)^{-1} = (1/S)(\Q)$.
We shall use that $R(q)$ and $S(\Q)$ both map $\cG^{\pm}_\perp$ to $\cG^{\mp}_\perp$ \eqref{cGperp},
respectively, and they vanish on $\cG_0$,
while $\cG^\pm$ are invariant subspaces of $C(\Q)$.

We put
\be
f(e^q,J) = F(q, p, \xi^l_\perp, \xi^r_\perp, \xi_0),
\qquad
h(e^q,J) = H(q, p, \xi^l_\perp, \xi^r_\perp, \xi_0).
\label{fF}\ee
This leads to
\be
(d_2 f)^- = d_p F + S(\Q) d_{\xi^r_\perp} F,
\ee
\be
(d_2 f)^+ = -  d_{\xi_0} F - d_{\xi^l_\perp} F + C(\Q) d_{\xi^r_\perp} F,
\ee
and
\be
\nabla_1 f = d_q F - \left[ R(q) S(\Q)^{-1} \xi^r_\perp + S(\Q)^{-2} \xi^l_\perp,
S(\Q) d_{\xi^r_\perp} F\right]^-_0.
\label{nab1}\ee
Here, $d_q F, d_p F$ belong to $\cG^-_0$,  and for the other derivatives see  \eqref{5der}.
We recall the formula
\bea
\{ f,h\}_1^\red(e^q,J) &=&  \langle \nabla_1 f, (d_2h)^-_0\rangle - \langle \nabla_1 h, (d_2 f)^-_0 \rangle \label{reddPB1}\\
&+&\langle R(q) [d_2 f, J]^+, (d_2 h)^- \rangle   - \langle R(q) [d_2 h, J]^+, (d_2 f)^- \rangle
\nonumber \\
 &+& \langle J^+, [ (d_2 f)^-, (d_2 h)^-] - [ (d_2 f)^+, (d_2 h)^+] \rangle ,\nonumber
\eea
which we have to rewrite in terms of $F$ and $H$ \eqref{fF}.
To make the subsequent equations shorter, we  denote
\be
F_q:= d_q F,\quad
F_p := d_pF, \quad F_0:=d_{\xi_0} F, \quad F^r_\perp :=d_{\xi^r_\perp} F,
\quad F^l_\perp :=d_{\xi^l_\perp} F,
\quad F^l:= F^l_\perp + F_0,
\ee
and will often omit the argument $q$, $\Q$  in $R(q)$, $S(\Q)$ etc.
We shall write $\cE(F,H)$ for any expression obtained by exchanging the roles of $F$ and $H$.

We start by inspecting
\be
\langle R [d_2 f, J]^+, (d_2 h)^- \rangle = \langle R [(d_2 f)^-, J^-], (d_2 h)^-\rangle +
 \langle R [(d_2 f)^+, J^+], (d_2 h)^- \rangle.
\ee

\begin{lemma}\label{prop:C.8}
We have the identity
\bea
&&
\langle R [(d_2 f)^-, J^-], (d_2 h)^-\rangle - \cE(f,h)
= \langle \xi^r_\perp, [F^r_\perp, H^r_\perp ]\rangle
+ \langle C\xi^l_\perp, [F^r_\perp, H^r_\perp]\rangle \nonumber
\\
&& \qquad\qquad \qquad \qquad  +  \left( \langle F_p, [ S^{-1} \xi^r_\perp, C H_\perp^r]
+ [R \xi^l_\perp, C H_\perp^r]\rangle  - \cE(F,H)\right).
\label{--id}
\eea
\end{lemma}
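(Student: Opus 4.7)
The plan is to substitute the parametrization \eqref{Jparr} for $J^-$ together with $(d_2 f)^- = F_p + S F^r_\perp$ and $(d_2 h)^- = H_p + S H^r_\perp$ into the LHS and expand bilinearly. Since $F_p, p \in \cG^-_0$ are both diagonal, $[F_p, p]=0$, so $[(d_2 f)^-, J^-]$ produces only five nonzero commutator pieces, namely $-[F_p, R\xi^l_\perp]$, $-[F_p, S^{-1}\xi^r_\perp]$, $[SF^r_\perp, p]$, $-[SF^r_\perp, R\xi^l_\perp]$, $-[SF^r_\perp, S^{-1}\xi^r_\perp]$. Moreover $R$ vanishes on $\cG_0$, so applying $R$ always lands in $\cG_\perp$, which is orthogonal to $H_p \in \cG_0$ under $\langle\,,\,\rangle$; consequently only the pairing against $S H^r_\perp$ survives, leaving exactly five scalar terms to process.

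The reduction relies on three tools. First, the operators $R$, $S$, $C$, $\ad_p$ mutually commute on $\cG_\perp$ (they act as scalar multiplication in the $\ad_q$-eigenbasis $\{E_{ij}\}_{i\neq j}$), yielding the key identity $RS = SR = C$ on $\cG_\perp$. Second, the trigonometric commutator identity $S[A,B] = [SA, CB] + [CA, SB]$ for $A,B \in \cG_\perp$, obtained entrywise from $\sinh(x{+}y)=\sinh x\cosh y + \cosh x\sinh y$. Third, the pairing parities $\langle RX, Y\rangle = -\langle X, RY\rangle$ (this is \eqref{Rasym}), $\langle SX, Y\rangle = -\langle X, SY\rangle$, $\langle S^{-1}X, Y\rangle = -\langle X, S^{-1}Y\rangle$, $\langle CX, Y\rangle = \langle X, CY\rangle$, together with trace invariance $\langle [X,Y],Z\rangle = \langle X, [Y,Z]\rangle$. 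Using these, the two $F_p$-pieces reduce immediately to $\langle F_p, [R\xi^l_\perp, CH^r_\perp]\rangle$ and $\langle F_p, [S^{-1}\xi^r_\perp, CH^r_\perp]\rangle$ (move $R$ via parity, convert $RS \to C$, apply invariance); together with their $F \leftrightarrow H$ partners they deliver the $\langle F_p, \ldots\rangle - \cE(F,H)$ portion of the RHS. The $[SF^r_\perp, p]$ piece becomes $\langle p, [SH^r_\perp, CF^r_\perp]\rangle$, and its antisymmetrization collapses via the trigonometric identity to $-\langle p, S[F^r_\perp, H^r_\perp]\rangle$, which vanishes by $\cG_0 \perp \cG_\perp$.

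The main obstacle is the two off-off terms. For $-\langle R[SF^r_\perp, S^{-1}\xi^r_\perp], SH^r_\perp\rangle$ the tools yield $\langle \xi^r_\perp, S^{-1}[SF^r_\perp, CH^r_\perp]\rangle$; applying the trig identity as $[SF^r_\perp, CH^r_\perp] = S[F^r_\perp, H^r_\perp] - [CF^r_\perp, SH^r_\perp]$, the $S^{-1}S$ factor produces a clean piece proportional to $\langle \xi^r_\perp, [F^r_\perp, H^r_\perp]\rangle$, while the residual $-\langle \xi^r_\perp, S^{-1}[CF^r_\perp, SH^r_\perp]\rangle$ combines with its $F \leftrightarrow H$ partner—via a second application of the trig identity in the form $[CF^r_\perp, SH^r_\perp] - [CH^r_\perp, SF^r_\perp] = S[F^r_\perp, H^r_\perp]$—to give an extra contribution; together they sum exactly to the stated $\langle \xi^r_\perp, [F^r_\perp, H^r_\perp]\rangle$. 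The $R\xi^l_\perp$-term is handled by the same scheme: one writes $R[SF^r_\perp, CH^r_\perp] = (C[F^r_\perp, H^r_\perp])_\perp - R[CF^r_\perp, SH^r_\perp]$, the first piece directly gives $\langle C\xi^l_\perp, [F^r_\perp, H^r_\perp]\rangle$ after using $C$-symmetry, and the residual cancels upon antisymmetrization by the same mechanism, producing the claimed $\langle C\xi^l_\perp, [F^r_\perp, H^r_\perp]\rangle$. Summing all five antisymmetrized contributions recovers the RHS exactly; the only delicate point is the double bookkeeping of signs in the repeated use of the trig identity, which is routine once one tracks it carefully at each stage.
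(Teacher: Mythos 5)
Your proposal is correct and follows essentially the same route as the paper: substitute the parametrization of $J^-$ and $(d_2f)^-,(d_2h)^-$, use $RS=C$ on $\cG_\perp$ together with the (anti)symmetries of $R,S,S^{-1},C$ under the trace form, discard the $p$-term and the $H_p$-pairing by $\cG_0\perp\cG_\perp$, and reduce the two off-diagonal terms with the hyperbolic addition identity $[SA,CB]+[CA,SB]=S[A,B]$. The only (harmless) difference is organizational — you apply the addition formula before antisymmetrizing, so each off-diagonal term yields a doubled clean piece plus a residual whose antisymmetrization subtracts one copy (note that for the $\xi^l_\perp$ term the residual does not literally cancel but contributes $-\langle C\xi^l_\perp,[F^r_\perp,H^r_\perp]\rangle$, exactly as in your $\xi^r_\perp$ bookkeeping), whereas the paper antisymmetrizes first and invokes the identity once.
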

\begin{proof}
We first note that
\be
\langle R [(d_2 f)^-, J^-], (d_2 h)^-\rangle =  - \langle [(d_2f)^-, J^-], R (d_2h)^-\rangle,
\ee
and
\be
R (d_2h)^- = R (d_p H + S H^r_\perp )= C H^r_\perp.
\ee
Then we substitute $(d_2 f)^-$ and $J^-$,  which gives
\bea
&& \langle R [(d_2 f)^-, J^-], (d_2 h)^-\rangle - \cE(f,h)=
 \left( \langle  [ F_p +S F^r_\perp , S^{-1} \xi^r_\perp -p],  C H^r_\perp \rangle - \cE(F,H)\right)
  \nonumber \\
 && \qquad \qquad \qquad +
\left(
\langle F_p, [R \xi^l_\perp, C H^r_\perp ]\rangle +
\langle CH^r_\perp,  [ SF^r_\perp, R \xi^l_\perp] \rangle
-\cE(F,H)\right).
\label{d-d-}\eea
By expanding $S$ and $C$ in terms of exponentials, one can check that
\be
 [ S F^r_\perp , C H^r_\perp] - \cE(F,H) = S [F^r_\perp, H^r_\perp].
\ee
By combining this with the equality $\langle V, S(\Q) W \rangle = -\langle S(\Q)V, W\rangle$
($\forall V,W\in \cG$),
 we find that
\be
 \langle  [ F_p +S F^r_\perp , S^{-1} \xi^r_\perp -p ],  C H^r_\perp \rangle - \cE(F,H)=
\langle \xi^r_\perp, [F^r_\perp, H^r_\perp] \rangle
 + \left(  \langle F_p, [S^{-1} \xi^r_\perp, C H^r_\perp ]\rangle - \cE(F,H)\right),
 \ee
 and
\be
\langle [CH^r_\perp,  SF^r_\perp], R \xi^l_\perp \rangle -\cE(F,H)
=\langle [F^r_\perp, H^r_\perp], C \xi^l_\perp \rangle.
\ee
To get the last equality, we also used that $S(\Q) R(q) \xi^l_\perp  = C(\Q) \xi^l_\perp$.
Putting together these identities, the claim is proved.
\end{proof}

The other terms in the second and third lines of \eqref{reddPB1} can be spelled out
straightforwardly, and the result is summarized as follows.

\begin{lemma}
The following equalities hold.
First,
\be
\langle R [(d_2 f)^+, J^+], (d_2 h)^-\rangle - \cE(f,h)
=
 -2 \langle \xi^l, [CF^r_\perp, C H^r_\perp] \rangle
+ \left(\langle \xi^l, [F^l, C H^r_\perp] \rangle -  \cE(F,H)\right).
\ee
Second,
\be
\langle J^+, [(d_2f)^-, (d_2h)^-]\rangle = -\langle \xi^l, [S F^r_\perp, S H^r_\perp] \rangle
+\left( \langle F_p,  [\xi^l_\perp,S H^r_\perp] \rangle - \cE (F,H)\right).
\ee
Third,
\be
 -\langle J^+, [(d_2f)^+, (d_2h)^+]\rangle
 =
 \langle \xi^l, [F^l, H^l] + [CF^r_\perp, C H^r_\perp] \rangle
 + \left(\langle \xi^l, [ CH^r_\perp, F^l]\rangle - \cE(F,H)\right).
\ee
\end{lemma}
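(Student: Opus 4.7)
The plan is to substitute the parametrization \eqref{Jparr} of $J$ by the new variables $(q,p,\xi^l_\perp,\xi^r_\perp,\xi_0)$ into the intrinsic formula \eqref{redPB1} for $\{f,h\}_1^\red$ and carry out the computation line by line. The target is to show that the first line of \eqref{redPB1} collapses to the canonical $(q,p)$ pairing, while the second (the $R(q)$-line) and the third (the $J^+$-bracket line) together produce the two Lie--Poisson contributions from $\xi^l = \xi^l_\perp + \xi_0$ and $\xi^r = \xi^r_\perp - \xi_0$, and that all cross-couplings between $(q,p)$ and the spin variables cancel pairwise.

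The preparatory step is to rewrite the derivatives appearing in \eqref{redPB1} in the new variables. A direct application of the chain rule gives
\[
(d_2 f)^- = F_p + S(\hat{q}) F^r_\perp, \qquad
(d_2 f)^+ = -F^l_\perp - F_0 + C(\hat{q}) F^r_\perp,
\]
and an analogous expression for $\nabla_1 f$ consisting of $F_q$ plus corrections linear in $\xi^l_\perp,\xi^r_\perp$ that arise from the $q$-dependence of $R(q)$ and $S(\hat{q})^{-1}$ in \eqref{Jparr}. Three algebraic tools will be used repeatedly: the antisymmetry $\langle R(q) X,Y\rangle = -\langle X, R(q) Y\rangle$, the operator relation $R(q) S(\hat{q}) = C(\hat{q})$ on $\cG_\perp$, and the hyperbolic addition identity
\[
[S(\hat{q}) X, C(\hat{q}) Y] - [S(\hat{q}) Y, C(\hat{q}) X] = S(\hat{q})[X,Y], \qquad X,Y \in \cG_\perp,
\]
which is the key device that collapses mixed $S$-$C$ products back into single commutators.

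The three lines of \eqref{redPB1} are then evaluated separately. Splitting $[d_2 f, J]^+ = [(d_2 f)^-, J^-] + [(d_2 f)^+, J^+]$, the first piece of the $R(q)$-line, processed by the hyperbolic addition identity, yields $\langle\xi^r_\perp + C(\hat{q})\xi^l_\perp,[F^r_\perp,H^r_\perp]\rangle$ plus cross terms of the form $\langle F_p,[S(\hat{q})^{-1}\xi^r_\perp + R(q)\xi^l_\perp, C(\hat{q})H^r_\perp]\rangle - \cE(F,H)$. The second piece, using $J^+ = -\xi^l$, contributes $-2\langle\xi^l,[C(\hat{q})F^r_\perp, C(\hat{q})H^r_\perp]\rangle + \langle\xi^l,[F^l, C(\hat{q})H^r_\perp]\rangle - \cE(F,H)$. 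The $J^+$-bracket line contributes $\langle\xi^l,[F^l,H^l]\rangle + \langle\xi^l,[C(\hat{q})F^r_\perp,C(\hat{q})H^r_\perp]\rangle - \langle\xi^l,[S(\hat{q})F^r_\perp,S(\hat{q})H^r_\perp]\rangle$ together with cross $F_p$--$\xi^l$ terms. Finally, the first line of \eqref{redPB1} produces the canonical $\sum_i(\partial_{q_i}F)(\partial_{p_i}H) - \cE(F,H)$ plus further $F_p$--spin cross terms from the $\xi^l_\perp,\xi^r_\perp$ corrections inside $\nabla_1 f$.

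The assembly step is the heart of the proof and is where the three target terms emerge. Using $C(\hat{q})^2 - S(\hat{q})^2 = 1$ on $\cG_\perp$, the combination $[C F^r_\perp, C H^r_\perp] - [S F^r_\perp, S H^r_\perp]$ paired against $\xi^l$ collapses to $[F^r_\perp,H^r_\perp]$; after cancellation against the $-2\langle\xi^l,[C F^r_\perp,C H^r_\perp]\rangle$ contribution, the $\xi^l$-pieces merge to $\langle\xi^l,[F^l,H^l] + [F^r_\perp,H^r_\perp]\rangle$. Combining this with the $\langle C(\hat{q})\xi^l_\perp,[F^r_\perp,H^r_\perp]\rangle$ piece from the $R(q)$-line and using $\xi^l = \xi^l_\perp + \xi_0$ together with the $\T^n$-invariance identity $\langle\xi^l_\perp,[d_{\xi_0}F,d_{\xi^l_\perp}H]\rangle + \langle\xi^r_\perp,[d_{\xi_0}F,d_{\xi^r_\perp}H]\rangle = 0$ (and its $F\leftrightarrow H$ partner) reorganises everything into $\langle\xi^l_\perp + \xi_0,[F^l_\perp + F_0, H^l_\perp + H_0]\rangle + \langle\xi^r_\perp - \xi_0,[F^r_\perp,H^r_\perp]\rangle$, while the $F_p$--spin cross terms from the $R(q)$-line and from $\nabla_1 f$ cancel pairwise, leaving only the canonical $(q,p)$ pairing. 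The main obstacle is the bookkeeping: roughly a dozen distinct term types arise, and verifying their pairwise cancellation requires careful tracking of every $S(\hat{q})$--$C(\hat{q})$ factor as well as of which variables are held fixed in each partial derivative. A useful sanity check is the specialization $\xi^l_\perp = \xi_0 = 0$, which must reproduce the spin Sutherland Poisson bracket of Remark \ref{Rem:4.4}.
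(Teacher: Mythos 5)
Your treatment of the three identities themselves is correct and follows the same route as the paper: the paper also obtains this lemma by direct substitution of $(d_2 f)^- = F_p + S F^r_\perp$, $(d_2 f)^+ = -F^l + C F^r_\perp$ and $J^+ = -\xi^l$ into the second and third lines of the reduced bracket \eqref{reddPB1}, using $R H_p = 0$, $RS = C$ on $\cG_\perp$, and the (anti)symmetry of $R$, $S$, $C$ with respect to the trace form \eqref{E1}. The contributions you assert match the lemma exactly. One bookkeeping omission at the level of the statement: when you sum the second and third identities into the ``$J^+$-bracket line,'' there is, besides the $F_p$-type cross terms you mention, also the term $\langle\xi^l,[CH^r_\perp,F^l]\rangle-\cE(F,H)$ from the third identity; it cancels against $\langle\xi^l,[F^l,CH^r_\perp]\rangle-\cE(F,H)$ coming from the first identity, but you never record this cancellation, so your list of surviving terms is incomplete as written.

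The genuine error lies in your assembly step. The identity you invoke, that $[CF^r_\perp,CH^r_\perp]-[SF^r_\perp,SH^r_\perp]$ ``collapses to $[F^r_\perp,H^r_\perp]$ by $C^2-S^2=1$,'' is false: a commutator is bilinear, not a product, so $C^2-S^2=1$ does not apply. Expanding $C,S$ in $e^{\pm\Q}$ and using $[e^{\Q}V,e^{\Q}W]=e^{\Q}[V,W]$ gives
\[
[CX,CY]-[SX,SY]=\tfrac{1}{2}\bigl([e^{\Q}X,e^{-\Q}Y]+[e^{-\Q}X,e^{\Q}Y]\bigr),
\]
which is not $[X,Y]$. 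Moreover, the sign combination that actually arises after adding the three identities of this lemma to the $[(d_2f)^-,J^-]$ contribution is the \emph{plus} one, and the correct collapse, used in the paper after its Lemma \ref{lem:C.9}, is $[CX,CY]+[SX,SY]=C[X,Y]$; the factor $C$ is then moved across the pairing (it is symmetric) so as to cancel the term $\langle C\xi^l_\perp,[F^r_\perp,H^r_\perp]\rangle$, whence $[F^r_\perp,H^r_\perp]$ ends up paired with $\xi^r=\xi^r_\perp-\xi_0$ rather than with $\xi^l$, contrary to your intermediate ``merge'' claim (your final formula is nonetheless the correct one). Since these defects concern the assembly beyond the lemma under review, the verdict is: for the statement itself your method and results coincide with the paper's, but the broader computation as you describe it would fail at the collapse step unless the minus-sign identity is replaced by the plus-sign one.
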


The next statement results by collecting the terms from the preceding two lemmas.

\begin{lemma}\label{lem:C.9}
Denote by $X$ the sum of the last two lines of \eqref{reddPB1}. Then we have
\bea
&&X=\langle \xi^l, [F^l, H^l] \rangle + \langle \xi^r, [F^r_\perp, H^r_\perp] \rangle \nonumber\\
&& \qquad +\left(\langle F_p, [R  \xi^l_\perp + S^{-1} \xi^r_\perp  , C H^r_\perp ] +
[\xi^l_\perp, S H^r_\perp]  \rangle  - \cE(F,H)\right)  \nonumber\\
&&\qquad + \langle C \xi^l, [ F^r_\perp, H^r_\perp ] \rangle - \langle \xi^l,  [ C F^r_\perp, C H^r_\perp ] + [ S F^r_\perp, S H^r_\perp ]
 \rangle.
 \label{Xeq}
\eea
\end{lemma}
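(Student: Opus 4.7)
The proof is a bookkeeping exercise: the goal is simply to add up the four contributions supplied by the two preceding lemmas and collect like terms, so the plan is to make the decomposition explicit and then identify the few cancellations and rewrites that produce the claimed form \eqref{Xeq}.

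First I would unpack $X$ by splitting line~2 of \eqref{reddPB1} with the help of the inclusion rules \eqref{comrel}: since $[\cG^+,\cG^+]\cup[\cG^-,\cG^-]\subset\cG^+$, we have $[d_2f,J]^+=[(d_2f)^-,J^-]+[(d_2f)^+,J^+]$, so line~2 becomes
\[
\bigl(\langle R[(d_2f)^-,J^-],(d_2h)^-\rangle-\cE(f,h)\bigr)+\bigl(\langle R[(d_2f)^+,J^+],(d_2h)^-\rangle-\cE(f,h)\bigr).
\]
Line~3 is already split as $\langle J^+,[(d_2f)^-,(d_2h)^-]\rangle-\langle J^+,[(d_2f)^+,(d_2h)^+]\rangle$. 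Lemma~\ref{prop:C.8} and the three identities of the subsequent lemma evaluate these four pieces separately; call them $A_1,A_2,A_3,A_4$. Then $X=A_1+A_2+A_3+A_4$, and the task is purely to sort terms.

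Next I would collect. The only piece of the form $\langle\xi^l,[F^l,H^l]\rangle$ comes from $A_4$, contributing the first summand of \eqref{Xeq}. The unadorned $[F^r_\perp,H^r_\perp]$ terms arise solely from $A_1$, yielding $\langle\xi^r_\perp+C\xi^l_\perp,[F^r_\perp,H^r_\perp]\rangle$; here the one substantive observation is that $\ad_q$ annihilates $\cG_0$ and hence $C(\Q)\xi_0=\xi_0$, so that
\[
\langle\xi^r_\perp+C\xi^l_\perp,[F^r_\perp,H^r_\perp]\rangle
=\langle\xi^r_\perp-\xi_0,[F^r_\perp,H^r_\perp]\rangle+\langle C(\xi^l_\perp+\xi_0),[F^r_\perp,H^r_\perp]\rangle
=\langle\xi^r,[F^r_\perp,H^r_\perp]\rangle+\langle C\xi^l,[F^r_\perp,H^r_\perp]\rangle,
\]
reproducing the mixed $\xi^r$/$C\xi^l$ part of \eqref{Xeq}. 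The $F_p$-linear terms come from $A_1$ and $A_3$ and add directly to $\langle F_p,[R\xi^l_\perp+S^{-1}\xi^r_\perp,CH^r_\perp]+[\xi^l_\perp,SH^r_\perp]\rangle-\cE(F,H)$, matching the middle line of \eqref{Xeq} verbatim.

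It remains to handle the residual $\xi^l$-terms. The combination $A_2+A_4$ contains $\langle\xi^l,[F^l,CH^r_\perp]+[CH^r_\perp,F^l]\rangle-2\,\cE(F,H)$, and both the raw and $\cE$-exchanged parts vanish by the antisymmetry of $[\,\cdot\,,\,\cdot\,]$; this is the one cancellation to check. The surviving $\xi^l$-pieces are $-2\langle\xi^l,[CF^r_\perp,CH^r_\perp]\rangle$ from $A_2$, $+\langle\xi^l,[CF^r_\perp,CH^r_\perp]\rangle$ from $A_4$, and $-\langle\xi^l,[SF^r_\perp,SH^r_\perp]\rangle$ from $A_3$, which add to $-\langle\xi^l,[CF^r_\perp,CH^r_\perp]+[SF^r_\perp,SH^r_\perp]\rangle$, reproducing the last line of \eqref{Xeq}. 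Every contribution of $A_1+A_2+A_3+A_4$ has now been accounted for, completing the proof.

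The main (very mild) obstacle is the double bookkeeping of the $\cE(F,H)$ exchanges, since some of them cancel identically while others survive in partnered pairs; the only nontrivial algebraic input beyond collecting is the identity $C(\Q)\xi_0=\xi_0$, which is what reconciles the $A_1$-output $\xi^r_\perp+C\xi^l_\perp$ with the stated form $\xi^r+C\xi^l$.
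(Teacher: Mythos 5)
Your proof is correct and is essentially identical to the paper's argument: the paper likewise obtains \eqref{Xeq} by simply summing the four contributions $\langle R[(d_2f)^-,J^-],(d_2h)^-\rangle$, $\langle R[(d_2f)^+,J^+],(d_2h)^-\rangle$, $\pm\langle J^+,[(d_2f)^\mp,(d_2h)^\mp]\rangle$ (with exchanges) from the two preceding lemmas, with the only substantive step being $C(\Q)\xi_0=\xi_0$ and the rewriting $\langle \xi^r_\perp,[F^r_\perp,H^r_\perp]\rangle=\langle \xi^r,[F^r_\perp,H^r_\perp]\rangle+\langle C\xi_0,[F^r_\perp,H^r_\perp]\rangle$, exactly as you do. Your accounting of the cancellation $\langle\xi^l,[F^l,CH^r_\perp]+[CH^r_\perp,F^l]\rangle=0$ and of the surviving $[CF^r_\perp,CH^r_\perp]$, $[SF^r_\perp,SH^r_\perp]$ terms matches the paper's bookkeeping.
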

To get \eqref{Xeq}, we used that $C \xi_0 = \xi_0$, and rewrote $\langle \xi^r_\perp , [F^r_\perp, H^r_\perp] \rangle$
coming from Proposition \ref{prop:C.8} as
\be
\langle \xi^r_\perp , [F^r_\perp, H^r_\perp] \rangle = \langle \xi^r, [F^r_\perp, H^r_\perp] \rangle
+  \langle C\xi_0, [F^r_\perp, H^r_\perp] \rangle \quad\hbox{with}\quad \xi^r = \xi^r_\perp - \xi_0.
\ee
By spelling out $C$ and $S$ in terms of $e^{\pm \Q}$ and using
$[e^{\Q } V, e^{\Q} W]=e^{\Q} [V,W]$ ($\forall V,W\in \cG$) we obtain
\be
 [ C F^r_\perp, C H^r_\perp ] + [ S F^r_\perp, S H^r_\perp ] = C [  F^r_\perp,  H^r_\perp ],
\ee
which implies that the third line in Lemma \ref{lem:C.9} vanishes.

\begin{lemma} \label{lem:C.8}
 The following equalities  hold:
\be
\langle \nabla_1 f, (d_2h)^-_0\rangle = \langle F_q,  H_p\rangle + \langle [R\xi_\perp^r + S^{-1} \xi_\perp^l, F^r_\perp], H_p\rangle,
\ee
and
\be
\langle H_p, [R\xi_\perp^l + S^{-1} \xi_\perp^r, C F^r_\perp]  + [\xi^l_\perp, S F^r_\perp ]\rangle
 =  \langle [R\xi_\perp^r + S^{-1} \xi_\perp^l, F^r_\perp], H_p\rangle.
\ee
\end{lemma}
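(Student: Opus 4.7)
The plan is to verify both equalities by direct computation, relying on three structural facts established earlier in the paper: (i) with respect to the pairing $\langle\cdot,\cdot\rangle$, the operator $\ad_q$ is antisymmetric, so $C(\Q)$ is symmetric on $\cG$ while $S(\Q)$ and $R(q)$ are antisymmetric on $\cG_\perp$; (ii) since $H_p$ and $q$ are both diagonal matrices, $\ad_{H_p}$ commutes with every function of $\ad_q$, in particular with $R(q)$, $S(\Q)^{\pm 1}$ and $C(\Q)$; and (iii) the hyperbolic identity $C(\Q)^2 - S(\Q)^2 = \mathrm{id}$ holds on $\cG_\perp$.

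For the first equality, I would substitute the expression \eqref{nab1} for $\nabla_1 f$ together with the observation that $(d_2h)^-_0 = H_p$, since the off-diagonal piece $S(\Q) H^r_\perp$ of $(d_2h)^-$ drops out upon projection onto $\cG^-_0$. The leading term of \eqref{nab1} yields $\langle F_q, H_p\rangle$. For the bracket term, I would use cyclicity, $\langle [A,B], C\rangle = \langle A, [B,C]\rangle$, to rewrite
\be
-\langle [RS^{-1}\xi^r_\perp + S^{-2}\xi^l_\perp,\, SF^r_\perp], H_p\rangle
= -\langle RS^{-1}\xi^r_\perp + S^{-2}\xi^l_\perp,\, [SF^r_\perp, H_p]\rangle,
\ee
then push $S(\Q)$ across the commutator using $[SF^r_\perp, H_p] = S[F^r_\perp, H_p]$, and finally move $S(\Q)$ back across the pairing via $\langle V, SW\rangle = -\langle SV, W\rangle$. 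The operator identities $S\cdot RS^{-1} = R$ and $S\cdot S^{-2} = S^{-1}$ produce exactly the claimed expression $\langle [R\xi^r_\perp + S^{-1}\xi^l_\perp, F^r_\perp], H_p\rangle$ after one more use of cyclicity.

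For the second equality, the point is that two expressions with the roles of $\xi^l_\perp$ and $\xi^r_\perp$ interchanged agree. I would move $H_p$ inside both brackets on the left-hand side using cyclicity and the commutativity from (ii), obtaining a pairing of $R[H_p,\xi^l_\perp] + S^{-1}[H_p,\xi^r_\perp]$ with $CF^r_\perp$ and of $[H_p, \xi^l_\perp]$ with $SF^r_\perp$. Transporting $C$ and $S$ across the pairing (using their symmetry/antisymmetry) converts the coefficient of $[H_p,\xi^l_\perp]$ into $CR - S = C^2/S - S = (C^2 - S^2)/S = S^{-1}$, while the coefficient of $[H_p,\xi^r_\perp]$ becomes $CS^{-1} = R$. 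Thus the LHS collapses to $\langle S^{-1}[H_p,\xi^l_\perp] + R[H_p,\xi^r_\perp], F^r_\perp\rangle$. Treating the RHS by the same cyclicity manoeuvre and the antisymmetry of $\ad_{H_p}$ yields the identical expression, completing the proof.

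The main obstacle is purely bookkeeping: several sign flips come from the antisymmetry of $S(\Q)$ and of $\ad_q$, and one must carefully track on which side of the pairing each operator sits. The one genuinely non-formal step is the cancellation via $C(\Q)^2 - S(\Q)^2 = \mathrm{id}$, which is the algebraic reason for the symmetric appearance of $\xi^l_\perp$ and $\xi^r_\perp$ on the two sides of the second identity.
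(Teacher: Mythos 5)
Your proposal is correct and follows essentially the same route as the paper's own computation: substitute \eqref{nab1}, use cyclicity of the trace form together with the fact that $\ad_{H_p}$ commutes with functions of $\ad_q$, transport $S(\Q)$ and $C(\Q)$ across the pairing via their antisymmetry/symmetry, and conclude with $C^2S^{-1}=S+S^{-1}$ (equivalently your $CR-S=S^{-1}$). No substantive difference from the paper's argument.
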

\begin{proof}
We use the identity \eqref{nab1} to write
\bea
\langle \nabla_1 f, (d_2h)^-_0\rangle &=& \langle F_q - [ R S^{-1} \xi^r_\perp + S^{-2} \xi^l_\perp, SF^r_\perp], H_p \rangle
  \nonumber \\
& = & \langle F_q, H_p\rangle - \langle  R S^{-1} \xi^r_\perp + S^{-2} \xi^l_\perp, S [F^r_\perp, H_p] \rangle \nonumber\\
&  =& \langle F_q, H_p\rangle + \langle [ R \xi^r_\perp + S^{-1} \xi^l_\perp, F^r_\perp],  H_p \rangle.
\eea
Regarding the second equality, we have
\bea
&& \langle  H_p, [R\xi_\perp^l + S^{-1} \xi_\perp^r, C F^r_\perp]  + [\xi^l_\perp, S F^r_\perp ]\rangle \nonumber\\
&& \quad  =
\langle  R [ H_p, \xi_\perp^l]  + S^{-1} [H_p, \xi_\perp^r], C F^r_\perp \rangle  - \langle S  [H_p, \xi^l_\perp],  F^r_\perp \rangle \nonumber\\
&& \quad =
\langle  C^2 S^{-1} [ H_p, \xi_\perp^l]  + R [H_p, \xi_\perp^r],  F^r_\perp \rangle  - \langle   [H_p, S \xi^l_\perp],  F^r_\perp \rangle \nonumber\\
&& \quad =
\langle   H_p, [(S+ S^{-1}) \xi_\perp^l  +  R \xi_\perp^r,  F^r_\perp]  \rangle  - \langle   H_p, [ S \xi^l_\perp,  F^r_\perp] \rangle \nonumber\\
&& \quad = \langle   H_p, [  R \xi_\perp^r + S^{-1} \xi_\perp^l ,  F^r_\perp]   \rangle,
\eea
which finishes the proof.
\end{proof}

In terms of the new variables,  we have $\{F,H\}_1^\red(q,p,\xi^l_\perp, \xi^r_\perp, \xi_0):= \{f,h\}_1^\red(e^q,J)$.
The combination of Lemma \ref{lem:C.9}, where the third line of the formula was
shown to vanish,  with Lemma \ref{lem:C.8} gives
\be
\{ F,H\}_1^\red = \langle F_q,  H_p \rangle - \langle H_q, F_p \rangle
+\langle \xi^l, [F^l, H^l] \rangle + \langle \xi^r, [F^r_\perp, H^r_\perp] \rangle.
\ee
This is the same as \eqref{2spinPB}, and thus Proposition \ref{Prop:4.8} is proved.

\end{document}